\documentclass[a4paper,11pt,table]{article}
\pdfoutput=1
\usepackage[margin=1in]{geometry}
\usepackage{setspace}
\usepackage{amsmath,amsthm, amssymb}
\usepackage[table]{xcolor}
\usepackage{microtype}
\usepackage{mathdots}
\usepackage{tikz}
\usetikzlibrary{math, positioning, decorations.pathreplacing, backgrounds, decorations.shapes}
\usepackage{enumitem}
\usepackage{stmaryrd}
\usepackage{booktabs}
\usepackage{xspace}
\usepackage[linesnumbered, ruled, vlined, noend]{algorithm2e}
\usepackage{hyperref}
\usepackage[capitalize]{cleveref}
\usepackage{thmtools}
\usepackage{thm-restate}
\usepackage{array}

\definecolor{Gray}{gray}{0.85}
\newcolumntype{a}{>{\columncolor{Gray}}r}

\newcommand{\LZ}{\textsc{LZ77}\xspace}
\newcommand{\CLZ}[1]{C_{\mathrm{LZ77}}(#1)}
\newcommand{\KMODIFS}{\textsc{$k$-Modifications}\xspace}

\newcommand{\factor}{\sqsubseteq}
\newcommand{\DeltaDist}[2]{\Delta\!\bigl(#1,#2\bigr)}
\newcommand{\nn}{\mathbb{N}}%
\newcommand{\reals}{\Re}%

\theoremstyle{plain}
\declaretheorem[name=Theorem]{theorem}
\declaretheorem[name=Lemma,sibling=theorem]{lemma}
\declaretheorem[name=Proposition,sibling=theorem]{proposition}
\declaretheorem[name=Corollary,sibling=theorem]{corollary}
\declaretheorem[name=Claim,sibling=theorem]{claim}

\theoremstyle{definition}
\declaretheorem[name=Definition]{definition}

\theoremstyle{remark}
\declaretheorem[name=Remark]{remark}

\usepackage{tcolorbox}
\tcbuselibrary{breakable, skins}

\usepackage{tcolorbox}
\definecolor{gabrielcolor}{RGB}{255, 230, 200}
\definecolor{paulcolor}{RGB}{220, 255, 220}
\definecolor{akkacolor}{RGB}{220, 235, 255}
\definecolor{guillaumecolor}{RGB}{255, 220, 240}

\title{Analyzing and Leveraging the $k$-Sensitivity of \LZ}

\author{
  Gabriel Bathie\thanks{LaBRI, University of Bordeaux, France. \url{https://gbathie.github.io}} \and
  Paul Huber\thanks{LaBRI, ENS Paris-Saclay, France.} \and
  Guillaume Lagarde\thanks{LaBRI, University of Bordeaux, France. \url{https://guillaume-lagarde.github.io/}} \and
  Akka Zemmari\thanks{LaBRI, University of Bordeaux, France. \url{https://www.labri.fr/perso/zemmari/}}
}

\date{}

\begin{document}
\maketitle

\begin{abstract}
  We study the sensitivity of the Lempel-Ziv 77 compression algorithm to edits, showing how modifying a string $w$ can deteriorate or improve its compression. Our first result is a tight upper bound for $k$ edits:
  $
  \forall w' \in B(w,k),\quad C_{\mathrm{LZ77}}(w') \leq 3 \cdot C_{\mathrm{LZ77}}(w) + 4k.
  $
  This result contrasts with Lempel-Ziv 78, where a single edit can significantly deteriorate compressibility, a phenomenon known as a \emph{one-bit catastrophe}.

  We further refine this bound, focusing on the coefficient $3$ in front of $C_{\mathrm{LZ77}}(w)$, and establish a surprising trichotomy based on the compressibility of $w$. More precisely we prove the following bounds:
  \begin{itemize}
    \item if $C_{\mathrm{LZ77}}(w) \lesssim k^{3/2}\sqrt{n}$, the compression may increase by up to a factor of $\approx 3$,
    \item if $k^{3/2}\sqrt{n} \lesssim C_{\mathrm{LZ77}}(w) \lesssim k^{1/3}n^{2/3}$, this factor is at most $\approx 2$,
    \item if $C_{\mathrm{LZ77}}(w) \gtrsim k^{1/3}n^{2/3}$, the factor is at most $\approx 1$.
  \end{itemize}

  Finally, we present an $\varepsilon$-approximation algorithm to pre-edit a word $w$ with a budget of $k$ modifications to improve its compression. In favorable scenarios, this approach yields a total compressed size reduction by up to a factor of~$3$, accounting for both the LZ77 compression of the modified word and the cost of storing the edits, $C_{\mathrm{LZ77}}(w') + k \log |w|$.
\end{abstract}

\section{Introduction}\label{sec:intro}

The starting point of this work lies in recent studies on the sensitivity of lossless data compression algorithms against small perturbations.  This line of research goes back at least to the so-called \emph{one-bit catastrophe} conjecture for the Lempel–Ziv 78 algorithm (LZ78)~\cite{ZL78}, which was originally raised by Jack Lutz and others in the late 1990s and explicited stated for instance in~\cite{LathropS97,PierceS2000, Lopez06}. The conjecture asked whether adding a single bit at the beginning of a compressible string could transform it into an incompressible one. Surprisingly, such a phenomenon was proved to be possible by Lagarde and Perifel~\cite{LagardeP18}.

Since then, the robustness of various lossless compression algorithms has been investigated under single bit modifications, insertions, or deletions. For instance, Giuliani \emph{et al.}~\cite{Giuliani2025} demonstrate analogous \emph{bit catastrophes} for the Burrows–Wheeler transform, Nakashima \emph{et al.}~\cite{nakashima24} analyze the sensitivity of the lex-parse compressor, and Akagi, Funakoshi and Inenaga~\cite{AkagiFI23} systematically analyze the sensitivity of several string repetitiveness measures and compression algorithms, including substring complexity, \LZ~\cite{ZL77}, LZ78, LZ-End, and grammar-based compressors such as AVL grammars~\cite{Rytter03}. Their work formalizes \emph{sensitivity} as the worst-case change in the compressed size, or repetitiveness measure, caused by a one-bit edit, providing both upper and lower bounds for most measures.

Prior to the formal establishment of "catastrophic" behaviors, the compression community had already observed the algorithm’s instability in practice and had begun developing techniques to mitigate its sensitivity. A strong line of work involves relaxing the rigid greedy parsing strategy of Ziv--Lempel algorithms by trying different starting points at each stage of compression. These approaches, often referred to as \emph{greedy lookahead}, incorporate local search methods to improve global compression performance.

One early instance of this idea is the \emph{flexible parsing} algorithm of Matias and Sahinalp~\cite{MatiasRS99}, which augments LZ78 with a one-step lookahead to select among multiple candidate phrase starts. Experimental results showed substantial gains in compression, up to 35\% improvements on DNA and medical data, without significant computational overhead~\cite{MatiasRS99,MatiasWAE}. Subsequent work by Arroyuelo \emph{et al.}~\cite{AronicaLMMM18} and Langiu~\cite{OnOptimalParsing} investigated the optimality of such greedy approaches. They showed that flexible parsing is in fact \emph{optimal} under constant-cost phrase encodings, and \emph{almost optimal} under more realistic setting where the cost of a phrase depends on its dictionary index or bit-length.

Motivated by these insights, we take a new perspective: rather than treating high sensitivity in compression schemes as a limitation, we ask whether it can be exploited to improve compression. Indeed, if small edits can significantly affect the compressed size, perhaps they can be used constructively, i.e., we might be able to carefully select these changes in order to reduce the overall cost of compression. Unlike greedy strategies, we aim here for global edits. To formalize this idea, we introduce a notion of pre-editing: applying a few number of edits to a string prior to compression to decrease the total encoding cost, that accounts both for the compression of the new word and the cost of keeping track of the edits. Specifically, given a string $w$ and a budget $k$, our goal is to find a modified string $w'$ obtained via at most $k$ edits, such that

$$ \text{size}(\text{compression}(w')) + \text{cost}(\text{edits}) < \text{size}(\text{compression}(w)).$$

As a step toward this goal, we first investigate how compression reacts under multiple edits. While prior work has focused on the effect of a single edit, a natural question is how the compressed size evolves under $k$ substitutions. In particular, do the effects of multiple edits compound additively, multiplicatively, or follow a different pattern? Understanding this behavior is interesting to determining whether and how such edits can be used for improving the compression.

Among all lossless compression algorithms, the Ziv-Lempel family of compressors is undoubtedly one of the most studied and widely used in both theory and practice, making it a natural starting point for our investigation. We believe that extending these questions to broader classes of compressors is an important direction for future work, both to deepen the theoretical understanding of compression algorithms and to potentially inspire improved algorithms in practice. The Ziv–Lempel algorithms have had a profound theoretical influence: the optimality of LZ among finite-state compressors, proved by Ziv and Lempel, initiated a rich line of research connecting data compression to multiple areas of theoretical computer science. These include, to name a few, information theory, through Shannon entropy~\cite{ZL77, ZL78}, number theory, via the construction of absolutely normal numbers~\cite{LM16}, and computational complexity, through Hausdorff dimensions of complexity classes introduced by Lutz~\cite{Lutz03}, in particular finite-state and polynomial-time dimensions~\cite{Lopez05}. We focus here on \LZ, for two main reasons. First, the combinatorial structure of \LZ parsing is comparatively simple and well-behaved, in contrast to other schemes from the Ziv-Lempel family such as LZ78, whose parsing can exhibit a more chaotic behavior under small perturbations. This structural regularity makes \LZ particularly well suited as a first case study for understanding how multiple edits affect compression. Second, \LZ enjoys widespread use in real-world applications. For instance, it underlies the DEFLATE algorithm in tools like gzip and PNG compression, making it relevant from a practical point of view.

\subsection{Our Results}

The fundamental idea explored in this paper is whether one can
achieve nontrivial improvements in \LZ compression by slightly
modifying the original word, or whether a theoretical barrier limits
the impact of such perturbations. To this end, we ask two main questions:

\medskip
\noindent\textbf{Question~1.} \emph{How much can the compression ratio improve or deteriorate under $k$ edits?}

\smallskip
\noindent\textbf{Question~2.} \emph{Can such edits be found efficiently to improve the overall compression cost?}

\medskip
Our work provides the first tight bounds answering Question~1 and
reveals a surprising trichotomy in the sensitivity behavior of \LZ.
We also make substantial progress on Question~2 by presenting an approximation algorithm that significantly improves upon the naive
brute-force approach.

\subsubsection*{Tight Upper and Lower Bounds in the General Case.}

Our first result is a generalization of the sensitivity analysis of \LZ compression by Akagi, Funakoshi and Inenaga~\cite{AkagiFI23}, which considers a single edit, to the general case of $k$ edits. While a single edit can increase the number of phrases by at most a multiplicative factor of $2$, we show that multiple edits can increase the number of phrases by a factor of $3$, together with an additive term linear in $k$. This is the purpose of Theorem~\ref{thm:upper-general}.

\begin{restatable}{theorem}{UpperGeneral}
  \label{thm:upper-general}
  For all $k \in \mathbb{N}$, $w \in \Sigma^*$, and $w' \in B(w,k)$,
  \[
    \CLZ{w'} \le 3\cdot \CLZ{w} + 4k.
  \]
\end{restatable}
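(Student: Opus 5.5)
We will prove the bound by taking the greedy \LZ parsing of $w$ and building from it a parsing of $w'$ into valid \LZ phrases (each phrase is a fresh letter or a factor with an earlier occurrence, possibly overlapping). Such a parsing has at least as many phrases as the greedy one, so it suffices to count. The greedy parsing is optimal among parsings into previously occurring factors, so $\CLZ{w'}$ is at most the size of any valid parsing of $w'$.

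Write the parsing of $w$ as $w = f_1 f_2 \cdots f_z$ with $z = \CLZ{w}$, and let the $k$ edit positions in $w$ be $p_1 < \dots < p_k$. We first treat substitutions, and then handle insertions and deletions in the same way with boundary shifts. First, cut every phrase $f_i$ at the edit positions it contains. Each edit position that falls inside a phrase becomes its own length-$1$ phrase, which is a literal and is always valid. Each edit also splits one phrase into at most two pieces. This costs at most $2k$ extra phrases beyond the $z$ original ones, and some of the additive budget of $4k$ is reserved for these edit letters and for boundary effects.

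The key issue is that a piece $u$ of a phrase of $w$, now copied in $w'$, had its source occurrence $w[s..s+|u|-1]$ in $w$, and that source may contain edited positions. In that case the copy is no longer valid in $w'$. Fix a source interval $I$ of a phrase. The edited positions inside $I$ break it into sub-intervals that are unchanged, and each of these is still a valid copy, provided it lies before the phrase's start (overlap is handled as in the original). A naive count charges one extra phrase per (phrase, edit-in-source) pair, which could be $zk$. This is the main obstacle: one edit may lie inside the sources of many phrases.

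To beat the naive count, we will use the structure of \LZ instead. Suppose the source of $f_i$ contains an edited position $p$. Then $f_i$ equals a factor covering $p$, which occurred earlier, so we can instead redirect the copy to a \emph{different} occurrence. The plan is to choose, for each phrase, the leftmost occurrence of its content and argue recursively. Concretely, we bound the number of extra phrases per original phrase by $2$ amortized, via the following step: if a phrase's source straddles edits, split $f_i$ into a prefix copied from the part of the source before the first edit, a suffix copied from the part after the last edit, and a middle. We then claim the middle is covered by the images of the parsing of the edited region itself, namely the phrases of $w'$ containing the edits or their shifted copies. Making this precise is where we expect the difficulty. What we would try first is to show that the middle portion of $f_i$ coincides with a factor of $w$ that contains an edit position and lies earlier, and that it can be re-copied, with its own edits replaced by the corresponding letters of $w'$, from the occurrence inside $f_i$'s own earlier region. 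We would charge the resulting at most two extra phrases to $f_i$ and the remaining $O(1)$ per edit to the $4k$ term.

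Summing gives at most $z$ original pieces, plus $2z$ prefix/suffix splits, plus $4k$ pieces from the edits, which is $3z + 4k$. Since greedy \LZ is optimal among valid parsings, $\CLZ{w'} \le 3\CLZ{w} + 4k$.
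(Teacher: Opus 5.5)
\paragraph{Verdict.} There is a genuine gap: the decisive charging step is asserted, not proved, and it is essentially the whole content of the theorem.

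\paragraph{Where the proposal breaks.} Your reduction is sound: greedy \LZ is optimal among parsings into phrases whose body occurs earlier, provided you use the same self-reference convention as the greedy parse. You also correctly identify the obstacle, namely that one edit can lie in the sources of many phrases. But the proposal stops there. The step ``charge the resulting at most two extra phrases to $f_i$ and the remaining $O(1)$ per edit to the $4k$ term'' has no supporting argument.

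Per phrase, that claim is false. Take $w=u\,u$ with $u=c_0b_1c_1\cdots b_mc_m$ over distinct letters, and substitute the $m=k$ letters $b_i$ in the first copy. The second copy is a single phrase of $w$, yet in $w'$ it is cut into $m+1$ phrases: $c_0b_1\mid c_1b_2\mid\cdots\mid c_{m-1}b_m\mid c_m$. So the ``middle'' of a phrase can cost arbitrarily many pieces. Charging those pieces to edits only works if each edit is charged $O(1)$ times \emph{in total}, over all phrases whose sources contain it. Nothing in your sketch prevents every such phrase from charging the same edit again.

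Your proposed repair also fails as stated. If $f_i$ contains no edit, its middle still carries the \emph{original} letters of $w$ inside $w'$. Re-copying it from a region where the edits have been applied therefore produces the wrong string. Redirecting to the leftmost occurrence does not help either, since every earlier occurrence of that content may contain an edit.

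\paragraph{The missing idea.} What is needed is a global, monotone potential recording how much of the original content spanning the edits has already been reconstructed in the edited prefix. The paper defines the jump map $g_t(\ell)$ as the largest $\ell'>\ell$ such that $w[i_\ell+1:i_{\ell'}-1]$ already occurs in $w'[0:t-1]$, or $\ell+1$ if there is none. Its jump-depth $\|g_t\|$ is non-increasing in $t$, is at most $k$ at $t=0$, and is at least $1$ at $t=|w|$.

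Lemma~\ref{lem:jump-gap} is then used to show the key fact. An edit-free phrase of $w$ that is cut into $s$ phrases of $w'$ forces $\|g\|$ to drop by at least $s-3$ across it. The reason is that this phrase itself makes the original content of its source available to all later phrases. A phrase containing $n_j$ edits is cut at those edits, for at most $3n_j$ extra pieces.

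Summing $s\le 3+\|g_{l_j}\|-\|g_{r_j}\|+3n_j$ over the phrases and telescoping gives $3z+(k-1)+3k$. The underlying amortization is that the first phrase needing a stretch of original content pays to restore it, and later phrases copy it for free. This is exactly what your charging scheme requires, and it is what the proposal does not supply.
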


\medskip
\begin{remark}
  A priori, one might expect the worst-case blow-up under $k$ edits to look like $f(k)\cdot\CLZ{w}$ for some growing function $f$.
  Theorem~\ref{thm:upper-general} shows instead that the dependence in $k$ is \emph{additive} and entirely decoupled from $\CLZ{w}$.
  This has two interesting consequences:
  \begin{itemize}
    \item It shows that \LZ is robust to perturbations. No matter how large $k$ is, the multiplicative degradation never exceeds $3$ unless $k$ is very large, i.e., of order $C(w)$.
    \item One could hope to improve significantly the compression by pre-editing $k$ positions instead of $2$ positions, before running \LZ. However, the edits themselves must be stored and require approximately $k \log n$ bits, which is comparable in order to the $\leq 4k \log n$ bits needed to encode $4k$ blocks in \LZ, coming from the additive linear term in $k$. If instead the additive term in the upper bound grew faster, say $\Theta(k^2)$, then, for sufficiently large $k$, the quadratic phrase saving could outweigh the $k\log n$ description cost of the edits.  The linear $4k$ term rules out such behaviour and pre-editing $k$ bits makes only more probable the chance to find a word $w' \in B(w,k)$ for which for any $1/3<p<1$, $\frac{C(w')}{C(w)} < p$.
  \end{itemize}
\end{remark}

\noindent
Additionaly, we prove that this bound is essentially tight up to an additive term linear in $k$. This is the purpose of the next proposition.

\begin{restatable}{proposition}{LowerGeneral}
  \label{prop:lower-general}
  For all $k \in \mathbb{N}$ and for all $\epsilon > 0$, there exist words $w \in \Sigma^*$ and $w' \in B(w,k)$ such that
  \[
    \CLZ{w'} \ge \left(3\cdot \CLZ{w} + k\right) (1- \epsilon).
  \]
\end{restatable}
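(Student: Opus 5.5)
The plan is to give an explicit family of words $w$ with $\CLZ{w}=z\to\infty$ and an explicit $w'\in B(w,k)$ with $\CLZ{w'}\ge 3z+k-O(1)$ for $k\ge 2$. Letting $z$ grow then absorbs all additive constants into the factor $(1-\epsilon)$. For $k\le 1$ the statement only asks for $(3z+k)(1-\epsilon)$ with $k$ tiny, so I will treat $k\ge 2$ and handle small $k$ by taking the $k=2$ gadget with one or two edits. This case is a point to check, since the one-edit bound of~\cite{AkagiFI23} is $2$; I expect the paper's intended regime to be $k\ge 2$ or $\epsilon$ absorbing it. The construction has two independent parts: a \emph{tripling gadget} that uses exactly two edits, and a \emph{padding region} in which each of the remaining $k-2$ edits costs one extra phrase.

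\emph{Tripling gadget.} Take a prefix $P$ over a moderately large alphabet that is highly non-repetitive, for instance a de Bruijn-type word in which every factor of length $\ge \ell$ occurs once. Mark two positions $p<q$ in $P$. Follow $P$ by a concatenation $S=s_1s_2\cdots s_z$, where each $s_i=P[l_i..r_i]$ is a factor of $P$ containing both $p$ and $q$. The endpoints $l_i,r_i$ are chosen pairwise far apart and "shuffled", so that:
\begin{itemize}
\item the greedy \LZ parse of $w=PS$ consists of the phrases of $P$, which are $O(|P|/\log|P|)$ or fewer, followed by exactly one phrase per $s_i$, because $s_is_{i+1}[1]$ does not occur earlier;
\item in $w$, any factor of $S$ crossing a boundary $s_i|s_{i+1}$ of length $\ge \ell$ is new.
\end{itemize}
Then $\CLZ{w}=z+O(|P|)$, and I will choose $z\gg|P|$. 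In $w'$, substitute $P[p]$ and $P[q]$ by fresh letters. Now the windows $P[l_i..p]$, $P[p..q]$ and $P[q..r_i]$ with original letters occur only inside $S$.

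\emph{Counting phrases of $w'$.} I will argue that any phrase of $w'$ lying in $S$ cannot contain, as an interior position, both an occurrence of the original letter at relative offset $p$ and the one at offset $q$ of some $s_i$, unless it copies from an earlier $s_j$. The danger is exactly these cross-copies between the $s_i$'s: $s_i$ and $s_j$ share the core $P[p..q]$ and may share longer extensions. To neutralize this, I will make the shared core short relative to the flanks and choose the nesting of the intervals $[l_i,r_i]$ so that each $s_i$ has a fresh left flank and a fresh right flank. Concretely, I will take $l_i$ decreasing and $r_i$ with a fresh extremal value on each side in turn, arranged via two interleaved monotone sequences, so that each $s_i$ extends beyond all previous windows on at least one side. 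A phrase that copies from an earlier $s_j$ can then cover at most the part of $s_i$ common with $s_j$. A charging argument should give at least $3$ phrases per $s_i$, up to $O(1)$ per $i$ that I must show is actually $0$, or amortized to $o(z)$. Getting exactly $3$ per block, and not merely $2$, is the step I expect to be the main obstacle. I will first try a cleaner variant: use $z$ disjoint pairs of marker positions realized by a periodic $P$ and a counter-based encoding, in the spirit of the known single-edit factor-$2$ construction, and check that one pair of edits breaks every block twice.

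\emph{Padding region and conclusion.} Append a long run $a^{N}$ with $N\gg k$. Its \LZ cost is $O(1)$. Apply the remaining $k-2$ edits there, using fresh distinct letters $c_1,\dots,c_{k-2}$ at mutually far apart positions. Each fresh letter forces at least one new phrase, and these edits do not interact with the gadget because the letters are fresh and the run lies after $S$. Summing the two parts gives $\CLZ{w'}\ge 3z+(k-2)=3\,\CLZ{w}+k-O(|P|+1)$. Taking $z$ large enough that $O(|P|+1)\le\epsilon(3z+k)$ yields $\CLZ{w'}\ge(3\,\CLZ{w}+k)(1-\epsilon)$.
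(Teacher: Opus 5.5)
\textbf{Gap: the two-edit tripling gadget is not established, and the proposed nesting only gives a factor of $2$.}

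Your outer reduction is sound. For fixed $k$, once $\CLZ{w}\to\infty$, every additive $O(1)$ term, and even the $+k$ itself, is absorbed by $\epsilon$. So a two-edit gadget with $\CLZ{w'}\ge 3\,\CLZ{w}-o(\CLZ{w})$ already suffices, and the padding region is not needed. (Incidentally, without self-reference $a^N$ costs $\Theta(\log N)$ phrases, not $O(1)$; this is harmless.) Your doubt about small $k$ is also justified: a single substitution at most doubles the phrase count, so the inequality fails for $k=1$ and small $\epsilon$. The paper's construction likewise uses two edits.

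The genuine gap is the tripling gadget. It is the entire content of the statement, and you leave it at ``a charging argument should give at least $3$''. As sketched, it gives only $2$. Suppose $s_i$ extends beyond the earlier windows on one side only, say $l_i<\min_{j<i}l_j$ but $r_i\le r_j$ for some $j<i$.
\begin{itemize}
\item Any phrase of $w'$ starting in $s_i$ at an offset $x\le p$ copies $P[x..p-1]$ from the header, so it reaches offset $p$.
\item Any phrase starting at an offset $x\in[p+1,r_i]$ copies $P[x..r_i]$ from $s_j$ and runs into $s_{i+1}$.
\end{itemize}
Hence at most two phrases start in $s_i$, which is exactly the single-edit behaviour; the right-only case is symmetric. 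To force a third phrase, every window must strictly extend all previous ones on \emph{both} sides. In addition, something must stop the last piece from merging into the next block: either a fresh separator after each block, or right extensions of length at least $2$ together with control of copies across block boundaries.

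\textbf{The size accounting also fails.} Windows with distinct left endpoints inside $P$ force $z\le|P|$, so ``$z\gg|P|$'' cannot be arranged. Your final absorption step really needs $\CLZ{P}=o(z)$, and a de Bruijn-type header works against this. If every factor of length $\ell$ occurs once in $P$, each phrase of $P$ has length at most $\ell$, so $\CLZ{P}\ge|P|/\ell$. Meanwhile, left endpoints spaced at least $\ell$ apart allow only $z\lesssim|P|/\ell$ windows. The header's phrases are then of the same order as $z$ and cannot be absorbed into $\epsilon$.

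\textbf{How the paper resolves both points.} It uses a unary, highly compressible header and two adjacent edits:
$v=c^pddc^{p+1}e_{p+1}\prod_{i=0}^{p}c^iddc^{i+1}e_i$, with the header's $dd$ replaced by $ff$.
\begin{itemize}
\item In $v$, each block is one phrase: a window of the header followed by the fresh letter $e_i$.
\item In $v'$, each block parses as $c^id\mid dc^{i+1}\mid e_i$. This is because every earlier run of $c$'s immediately before a $d$ is shorter than $i$, and every earlier run immediately after a $d$ is shorter than $i+1$.
\item The header costs $O(\log p)$ phrases, so $\CLZ{v}=p+O(\log p)$ and $\CLZ{v'}=3p+O(\log p)$.
\end{itemize}
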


\subsubsection*{Tight Upper and Lower Bounds Depending on the Number of Edits and Compression Ratio.}

We then further refine this analysis, focusing on the coefficient in front of $\CLZ{w}$ and measure how the parse of $w'$ aligns with that of $w$. Let $p_2(w, w')$ and $p_3(w, w')$ denote the fraction of LZ blocks in $w$ that contain at least two and three phrase-starts from $\mathsf{LZ77}(w')$, respectively (formal definitions are given in Section~\ref{sec:preliminaries}). We prove Theorem\ref{thm:upper-regime}:
\begin{restatable}{theorem}{UpperRegime}
  \label{thm:upper-regime}
  For all $k \in \mathbb{N}$, $w \in \Sigma^*, w' \in B(w,k)$,
  \[
    \begin{aligned}
      \CLZ{w} &\le 4\frac{ k^{1/3}n^{2/3}}{p_2(w,w')}, \\
      \CLZ{w} &\le \frac{k^{3/2} \sqrt{n}}{\sqrt2 \cdot p_3(w,w')}.
    \end{aligned}
  \]
\end{restatable}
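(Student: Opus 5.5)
Write $z=\CLZ{w}$, $n=|w|$, and let $b_1,\dots,b_z$ be the blocks of $\mathsf{LZ77}(w)$. The approach is a counting argument on blocks of $w$ that are cut several times by the parse of $w'$. The central observation is that a phrase start of $\mathsf{LZ77}(w')$ strictly inside a block $b_i$ is forced by an edit close to $b_i$. Concretely, $b_i$ (minus its last character) occurs earlier in $w$ at some source. If the region of $b_i$ and the region of its source are both untouched by edits, then greedy parsing of $w'$ can cross $b_i$ with a single phrase. Every additional phrase start inside $b_i$ therefore needs an edit that lies either in $b_i$ or in the source of the corresponding suffix of $b_i$.

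\textbf{Step 1: lengths of multiply-cut blocks.} I will show that if a block contains at least two phrase starts of $w'$, then some phrase of $w'$ begun before the block, or inside it, is cut short by an edit. Its length is then at most the distance to that edit. I will charge the phrases of $w'$ lying inside blocks with $\ge 2$ starts to the $k$ edits. The intended form is that an edit at position $e$ can shorten phrases whose sources cover $e$. This is where the source structure enters: a single edit can be copied into many later positions through sources, and a phrase of $w'$ can fail to extend for two reasons, either an edit inside the current text or an edit inside its source.

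\textbf{Step 2: bound the lengths.} Let $m_2=p_2 z$ be the number of blocks with $\ge 2$ starts. These blocks are disjoint, so their total length is at most $n$. I aim for a structural inequality of the form: the blocks cut twice, grouped by which edit is responsible, satisfy a convexity constraint. Each edit can be responsible for blocks whose sources overlap it, and these overlapping sources must be nested or ordered. That ordering makes the $j$-th such block have length at least about $j$, with a similar growth in a second parameter. Summing gives roughly $\sum_{\text{edits}} m_e^3 \lesssim n$ with $\sum m_e = m_2$. By H\"older, $m_2^3/k^2 \lesssim n$, hence $m_2\lesssim k^{2/3}n^{1/3}$. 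This exponent is wrong compared with the target $k^{1/3}n^{2/3}$, so the scaling has to be adjusted. The target corresponds to $m_2^3\le 64kn^2$. This suggests counting pairs of positions: each twice-cut block of length $\ell$ forces about $\ell$ distinct (edit, offset) pairs and costs about $n/\ell$. I will tune the argument so that the final inequality is $(p_2z)^3\le 64kn^2$.

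\textbf{Step 3: the threshold $p_3$.} For three starts the argument repeats with one more level of nesting. A third cut requires the block to contain the image of an edit propagated through two levels of sources. This should give $(p_3 z)^2\le k^3 n/2$, that is, $z\le k^{3/2}\sqrt n/(\sqrt2\,p_3)$.

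\textbf{Main obstacle.} The hard part is the injective charging in Step 1 that turns ``extra cuts'' into ``distinct pairs involving edits'' with the right multiplicities. Only after that is in place do Steps 2 and 3 give the exact exponents via Hölder and Cauchy--Schwarz. I would first try it on a toy model in which sources are non-overlapping and each edit propagates along a chain.
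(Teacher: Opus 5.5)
\textbf{There is a genuine gap.} The proposal never supplies the extremal estimate that produces the exponents, and the heuristic in Step 2 points the wrong way.

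\textbf{The missing lemma for $p_2$.} The objects to count are the \emph{source occurrences} of multiply-cut blocks of $w$, not phrases of $w'$. For each block $P_j=w[l_j:r_j]$ with at least two starts of $\mathsf{LZ77}(w')$, fix an earlier occurrence $[l'_j,r'_j]$ of $w[l_j:r_j-1]$. Your own central observation shows that $[l'_j,r'_j]$ must contain an edit. These intervals are pairwise distinct, and their lengths are one less than those of disjoint blocks, so they sum to at most $n$. Group them by an edit $e$ they contain. The lemma you need is that $m_e$ pairwise distinct intervals through a common point have total length at least $c\,m_e^{3/2}$. Such an interval is determined by its left and right overhangs, so only $O(\ell^2)$ of them have length at most $\ell$, and the $j$-th shortest has length of order $\sqrt{j}$. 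The paper proves this with $c=1/8$ by grouping by left overhang and minimising $\sum_j n_j(n_j/2+j)$. H\"older in the form $\sum_e m_e^{3/2}\ge k^{-1/2}\bigl(\sum_e m_e\bigr)^{3/2}$ then gives $(p_2z)^{3/2}\le 8\sqrt{k}\,n$, which is exactly the first bound.

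\textbf{Where Step 2 went wrong.} You inverted the effect of the second parameter. Two free overhangs allow \emph{more} short intervals, so lengths grow like $\sqrt{j}$, not $j^2$. That is why you arrived at $\sum_e m_e^3$ and the wrong exponent. Saying the argument will be ``tuned'' to reach $(p_2z)^3\le 64kn^2$, or invoking ``$\ell$ pairs costing $n/\ell$'', does not produce an inequality.

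\textbf{Step 3 has the same problem.} The mechanism you name, an edit propagated through two levels of sources, is not the one that works. What is true is that the fixed source of an edit-free block with three starts must contain at least two edits. If it contained only one, the block would be covered by two pieces of the form ``factor occurring earlier in $w'$, plus one letter'', and greedy parsing places at most one start in each piece.

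Group these sources by their first inner edit $\sigma_a$; every interval in the group then contains $[\sigma_a,\sigma_{a+1}]$. If the left endpoints are pairwise distinct and the right endpoints are pairwise distinct, the two overhangs contribute about $m_a^2/2$ each. Hence $n\gtrsim\sum_a m_a^2\ge m^2/k$.

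Distinct endpoints are not automatic, however. The paper passes to a maximal subfamily with distinct endpoints. It uses that two sources sharing an endpoint have their other endpoints separated by an edit, so this loses a factor of up to $k$. That loss is precisely where the $k^3$ in $(p_3z)^2\le k^3n/2$ comes from, and nothing in your sketch accounts for it.
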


Somewhat surprisingly, Theorem~\ref{thm:upper-regime} implies that \LZ
compression exhibits three distinct sensitivity regimes, depending on
the compressibility of the input word. A representation of this phenomenon is given in Figure~\ref{fig:tricho_asympt}.

\begin{corollary}\label{cor:trichotomy}
  Let $w$ be a word of length $n$ and let $k$ be the number of edits.

  If $\CLZ{w}$ denotes the
  number of phrases in the \LZ parse of $w$ and $r= \frac{\CLZ{w'}}{\CLZ{w}}$, then for $0 < \eta < 1$ :
  \[
    \begin{cases}
      \text{if } \CLZ{w} > 8\dfrac{k^{1/3}n^{2/3}}{\eta}, \text{then} & r < 1 + \eta,\\
      \text{if } \CLZ{w} > \dfrac{k^{3/2}\sqrt{n}}{\eta\sqrt{2}}, \text{then} & r < 2 + \eta,\\
      \text{else } & r \le 3.
    \end{cases}
  \]
  for all $w'\in B(w, k)$, with $n = |w|$.
\end{corollary}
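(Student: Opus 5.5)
The plan is to derive Corollary~\ref{cor:trichotomy} from Theorem~\ref{thm:upper-regime} by a counting argument. We read the LZ77 parse of $w'$ against the blocks of $\mathsf{LZ77}(w)$. Every phrase of $\mathsf{LZ77}(w')$ starts inside exactly one block of $w$. Let $z=\CLZ{w}$, and for each block $b$ of $w$ let $s_b$ be the number of phrase-starts of $\mathsf{LZ77}(w')$ falling in $b$. Then $\CLZ{w'}=\sum_b s_b$.

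The first step is a local bound on $s_b$. The argument behind Theorem~\ref{thm:upper-general} bounds $s_b$ by $3$ for all but $O(k)$ blocks, those touched by an edit or adjacent to one. I plan to use this form. The slack term is the main obstacle for the third case: Theorem~\ref{thm:upper-general} alone gives $3z+4k$, not $3z$. I would handle it in one of two ways.
\begin{itemize}
\item If the paper's structural lemma gives $s_b\le 3$ uniformly, with edited positions absorbed by counting each edit as a new block boundary of $w$, I use it directly.
\item Otherwise I read the ``else'' branch as the regime where $z$ is small, and note that Proposition~\ref{prop:lower-general} shows $3$ is the right constant up to the linear term.
\end{itemize}
I would first try the uniform form, since that is what makes the ``else'' bound $r\le 3$ clean.

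Given this, $s_b\le 1+[s_b\ge2]+[s_b\ge3]$ for every block, so summing over blocks gives
\[
\CLZ{w'}\le z\,(1+p_2(w,w')+p_3(w,w')).
\]
This is the inequality that produces the trichotomy.

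\textbf{Second case.} Suppose $z>k^{3/2}\sqrt n/(\eta\sqrt2)$. The second inequality of Theorem~\ref{thm:upper-regime}, rearranged, gives $p_3\le k^{3/2}\sqrt n/(\sqrt2\, z)<\eta$. Hence $r\le 1+p_2+p_3<2+\eta$, using $p_2\le1$.

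\textbf{First case.} Suppose $z>8k^{1/3}n^{2/3}/\eta$. The first inequality gives $p_2\le 4k^{1/3}n^{2/3}/z<\eta/2$. Since $p_3\le p_2$, we get $r<1+\eta$; the factor $8=2\cdot4$ is chosen exactly so that $p_2+p_3<\eta$. This case assumes the condition holds with $\eta<1$ so that the fractions are meaningful; if $p_2=0$ the bound is immediate. I would also check that the first threshold implies the second where needed, but the two cases are proved independently, so this is not required.

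The main work is the per-block decomposition $\CLZ{w'}\le z(1+p_2+p_3)$, in particular justifying that no block of $w$ receives more than three phrase-starts of $w'$.
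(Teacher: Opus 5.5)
Your route is the one the paper intends: it gives no separate proof, only the remark that Theorem~\ref{thm:upper-regime} implies the trichotomy. However, the step you single out as the main work is false, and neither of your fallbacks repairs it.

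There is no uniform bound $s_b\le 3$, not even for blocks containing no edit, and the (at most $4k$) exceptional blocks need not lie near an edit. Take pairwise distinct letters and $w=c_1c_2\cdots c_{2t+1}\,\#\,c_1c_2\cdots c_{2t+1}\,\$$, and substitute fresh symbols for $c_2,c_4,\dots,c_{2t}$. The last block of $\mathsf{LZ77}(w)$ contains no edit. Yet $w'$ parses it as $c_1c_2\mid c_3c_4\mid\cdots\mid c_{2t-1}c_{2t}\mid c_{2t+1}\$$, that is, with $t+1$ starts.

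So re-cutting $w$ at the edited positions does not help; it would also change the segmentation to which $p_2,p_3$ refer. Proposition~\ref{prop:lower-general} is a lower bound and cannot give $r\le 3$.

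What the paper's machinery does give is Proposition~\ref{prop:interval-bound}. The excesses $e_b=\max(0,s_b-3)\le \|g_{l_b}\|-\|g_{r_b}\|+3n_b$ telescope as in the proof of Theorem~\ref{thm:upper-general} to $\sum_b e_b\le 4k$. Hence
\[
  \CLZ{w'}\le z\bigl(1+p_{\ge 2}+p_{\ge 3}\bigr)+4k,
\]
where $p_{\ge 2},p_{\ge 3}$ are the fractions of blocks with \emph{at least} two/three starts. The paper's $p_2,p_3$ count ``exactly''. With those, your sum would be $z(1+p_2+2p_3)$, and $p_3\le p_2$ can fail. You may use the at-least fractions because the proof of Theorem~\ref{thm:upper-regime} is in fact written for them, but you need to say so.

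The additive $4k$ is not an artefact: the third branch is false as stated. For $w=a^{16}$ one gets $\mathsf{LZ77}(w)=a\mid aa\mid a^4\mid a^8\mid a$, so $z=5$. Replacing every letter by a distinct fresh symbol ($k=16$) gives $\CLZ{w'}=16$ and $r=3.2$, although both thresholds exceed $5$ for every $\eta<1$. The $u$-part of the construction for Proposition~\ref{prop:lower-general} even has $r\to k'+1$. The correct else-branch is $r\le 3+4k/z$, i.e.\ Theorem~\ref{thm:upper-general}.

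In the first two branches your computation only yields $r<1+\eta+4k/z$ and $r<2+\eta+4k/z$. Your choice $8=2\cdot 4$ leaves no slack for the $4k/z$ term. It can be absorbed using $z\le n$, at the price of larger constants:
\begin{itemize}
\item a threshold $9k^{1/3}n^{2/3}/\eta$ forces $z>729k/\eta^3$, hence $r<1+\tfrac89\eta+\tfrac{4}{729}\eta^3<1+\eta$;
\item a threshold $3k^{3/2}\sqrt n/\eta$ forces $z>9k^3/\eta^2$, hence $r<2+\tfrac{\eta}{3\sqrt2}+\tfrac49\eta^2<2+\eta$.
\end{itemize}
So a provable version of the corollary must either enlarge these constants or carry an explicit $+4k/z$ in the first two branches. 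In the third case it must carry the additive term.
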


\begin{remark}
  Corollary~\ref{cor:trichotomy} states an intuitive phenomenon: the more compressible a word is, the more sensitive its compression can be to small perturbations. However, one might have expected a continuous spectrum rather than only three distinct regimes.
\end{remark}

\begin{figure}
  \begin{center}
    \begin{tikzpicture}[scale=1.2]
      \draw[->] (-0.5,0) -- (6,0) node[right] {};
      \draw[->] (0,-0.5) -- (0,4) node[above] {};

      \draw[dashed] (1.5,0) -- (1.5,3);
      \draw[dashed] (3,0) -- (3,2);

      \node at (1.5,-0.3) {$\approx k^{3/2}\sqrt{n}$};
      \node at (3,-0.3) {$\approx k^{1/3}n^{2/3}$};
      \node at (4.5,-0.3) {$\frac{n}{\log(n)}$};

      \foreach \y in {1,2,3} {
        \draw (-0.1,\y) -- (0.1,\y);
        \node[left] at (0,\y) {\y};
      }

      \draw[thick] (0,3) -- (1.5,3);
      \draw[thick] (1.5,2) -- (3,2);
      \draw[thick] (3,1) -- (4.5,1);

      \draw[thick] (1.5,3) -- (1.5,2);
      \draw[thick] (3,2) -- (3,1);
      \draw[thick] (4.5,1) -- (4.5,0);

      \node at (-0.35,3.8) {$r_{max}$};
      \node at (5.8,-0.2) {$C(w)$};

    \end{tikzpicture}
  \end{center}
  \caption{Representation of the trichotomy when $n\rightarrow\infty$. $r_{max}$ represents the maximum $r=\frac{\CLZ{w'}}{\CLZ{w}}$ a word of size $n$ can attain with the given compression size.}
  \label{fig:tricho_asympt}
\end{figure}
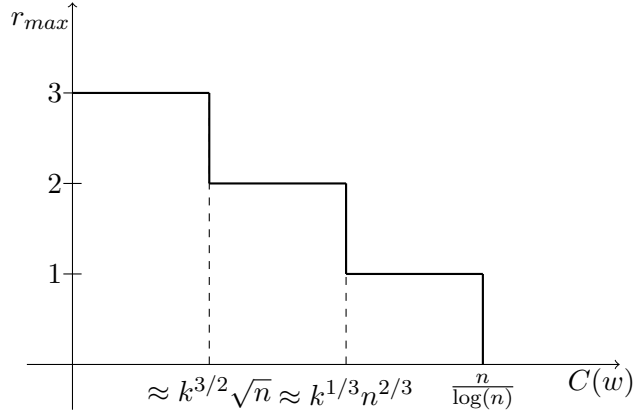

To complete the picture, we also provide corresponding lower bounds, stated in Proposition~\ref{prop:trichotomy-lower}.

\begin{restatable}{proposition}{TrichotomyLower}
  \label{prop:trichotomy-lower}
  For every $k \in \mathbb{N}$, $\eta > 0$ and $m \in \mathbb{N}$, the following two statements hold.

  1. There exist $n \ge m$ and words $w, w'$ of length $n$ with $w' \in B(w,k)$ such that
  \[
    \CLZ{w} \ge {k^{1/3} n^{2/3}}(1-\eta)
    \quad\text{and}\quad
    \frac{\CLZ{w'}}{\CLZ{w}} \ge 2 - \eta.
  \]

  2. There exist $n \ge m$ and words $w, w'$ of length $n$ with $w' \in B(w,k)$ such that
  \[
    \CLZ{w} \ge \sqrt{\frac{kn}{2}}(1-\eta)
    \quad\text{and}\quad
    \frac{\CLZ{w'}}{\CLZ{w}} \ge 3 - \eta.
  \]
\end{restatable}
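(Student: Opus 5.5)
We prove both statements with one explicit family of constructions. The word is $w = u\,B_1B_2\cdots B_N$. The prefix $u$ is a \emph{source} word: a long de Bruijn-like word over a fixed alphabet, so that all its factors of the relevant lengths are distinct and, ideally, occur nowhere else. Each block $B_j$ is a copy of a factor (window) of $u$. The $k$ edits are all placed inside $u$, at carefully spaced positions $q_1<\dots<q_k$. In $w$, each block is a single \LZ phrase copying its window from $u$. Up to lower-order terms (the parse of $u$ itself and boundary effects), this gives $\CLZ{w}\approx N$. In $w'$, a block whose window contains $t$ edited positions should no longer occur as a whole anywhere earlier. It should split into about $t+1$ phrases: roughly, the longest copyable prefix ends at the first edited position in its source, the next phrase is forced to restart there, and so on. We need a genuine lower bound on the number of phrases, not just an upper bound. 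For that we use the greedy optimality of \LZ: any parse of a block into fewer pieces would require some piece to be a factor of the preceding text that spans an edited position, and the construction forbids exactly this.

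\textbf{Statement 1 (factor $2$).} Take each $q_i$ isolated, with pairwise distances much larger than the window lengths. For each length $\ell\le L$ and each of the $k$ edits, include as blocks all $\ell$ distinct windows of length $\ell$ that contain $q_i$ strictly inside. Order the blocks so that each block's window has no other occurrence in the preceding text except in $u$. This gives
\[
  N\approx \sum_{\ell\le L} k\ell \approx \tfrac{k L^2}{2},
  \qquad
  n\approx \sum_{\ell\le L} k\ell^2\approx \tfrac{k L^3}{3}.
\]
Hence $N\approx \tfrac12 k^{1/3}(3n)^{2/3}$. This is at least $k^{1/3}n^{2/3}(1-\eta)$ for $L$ large, possibly after mildly reweighting which lengths are used. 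Each block contains exactly one edit in its source, so it yields $2$ phrases in $w'$, and the ratio tends to $2$. Taking $L\to\infty$ gives $n\ge m$.

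\textbf{Statement 2 (factor $3$).} Now pair the edits, $k/2$ pairs, each pair at a small distance $d$. Use only windows that contain both edits of a pair strictly inside, so that each such block becomes $3$ phrases in $w'$. For a window of length $\ell\ge d$ there are about $\ell-d$ such windows per pair. Optimizing over $d$ and the range of lengths, the count satisfies $N^2\approx kn/2$, i.e.\ $N\approx\sqrt{kn/2}$. Pairs have to be chosen so that the $k$ edits are used with multiplicity, and it is this bookkeeping, computed honestly, that produces the constant $\tfrac12$.

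\textbf{Main obstacle.} The hard step is the rigidity argument for $w'$. We must rule out that a block in $w'$ finds a longer source than predicted. Such a source could be another block already written in $w'$ that contains the same edited character, or an accidental occurrence created by the edit itself. Symmetrically, each block in $w$ must not merge with its neighbours into one longer phrase. To address the first danger, order the blocks by decreasing length within each edit. Then add a separator symbol, or use a fresh de Bruijn symbol per block, so that no phrase can cross a block boundary. Finally, check that any factor containing an edited position and lying in an earlier block is too short to extend the greedy match beyond the next edit. The separators cost at most $O(N)$ extra length and phrases, which is absorbed into $\eta$.
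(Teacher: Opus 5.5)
There is a genuine gap in your Statement~2 construction, and no choice of ordering or of the range of lengths can fix it. All edits sit in $u$, so the blocks in $w'$ are unedited copies. What decides the phrase count of $B_j\#_j$ is therefore what earlier blocks already contain, not how many edits its source window has. Let $b=[s,e]$ and $b'=[s',e']$ be two windows of the same pair, with edits $q<q+d$ strictly inside both, and suppose $b$ is written first.
\begin{itemize}
\item If $s'\ge s$, the first phrase of $b'$ copies $u[s':e]$ from $b$, so it runs past $q+d$. The edit-free rest is then copied from $u'$.
\item If $e'\le e$, the first phrase reaches at least $q$ using the unedited $u'[s':q-1]$, and the rest lies inside $b$.
\end{itemize}
Either way $b'\#$ gets at most two phrases. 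So the blocks of a pair that give three phrases must form a chain growing strictly at both ends (the nesting behind Lemma~\ref{lem:two-struct}). Such a chain has at most one window per length, while your family has about $\ell-d$ windows of each length $\ell$. Hence almost all blocks give only two phrases, and the ratio tends to $2$, not $3$.

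Your count is also inconsistent with your family. With a full range of lengths it has $N=\Theta(k^{1/3}n^{2/3})$, not $N^2\approx kn/2$, and a constant fraction of 3-splits there would contradict Theorem~\ref{thm:upper-regime}.

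The missing idea is to keep a single nested window per length, as the paper does:
\begin{itemize}
\item source $a_\ell^p b_\ell b_\ell a_\ell^{p+1}$, edited to $a_\ell^p c_\ell c_\ell a_\ell^{p+1}$;
\item followed by $a_\ell^i b_\ell b_\ell a_\ell^{i+1}\#_{\ell,i}$ for $i=0,\dots,p$;
\item these parse in $w'$ as $a_\ell^i b_\ell\mid b_\ell a_\ell^{i+1}\mid\#_{\ell,i}$.
\end{itemize}
With $k'=k/2$ disjoint pairs, $N\sim k'p$ and $n\sim k'p^2$ give $N\sim\sqrt{kn/2}$ directly. No bookkeeping with edits used ``with multiplicity'' is involved. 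As in the paper, this only covers even $k$.

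Statement~1 is salvageable, but as written it uses the wrong order. Sorting by \emph{decreasing} length makes every shorter window around $q_i$ a factor of an earlier, longer window of your family. In $w'$ it is then copied whole, $B_j\#_j$ stays a single phrase, and the ratio tends to $1$.

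You need an order compatible with the factor relation, for example increasing length; the paper uses the lexicographic order on $(i,j)$ for $a_\ell^i b_\ell a_\ell^j\#_{i,j,\ell}$. Then $B_j$ has no earlier occurrence in $w'$, and the fresh separator forces a phrase to start at $B_j$. So $B_j\#_j$ costs at least two phrases. That, rather than a match restarting at the edit, is the right lower-bound argument.

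Two further adjustments are needed:
\begin{itemize}
\item use fresh substituted letters;
\item discard windows shorter than the de Bruijn uniqueness length, or instead place a unique marker at $q_i$, like the paper's $b_\ell$.
\end{itemize}
With these, your triangle of all windows of length at most $L$ works. It even gives the constant $3^{2/3}/2\approx 1.04$, slightly better than the paper's $1$ from the square $0\le i,j\le m$.
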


\begin{remark}
  Proposition~\ref{prop:trichotomy-lower} confirms that the first regime in
  Corollary~\ref{cor:trichotomy} is tight up to constant factors.
  In the second regime, however, the lower bound falls short by a multiplicative
  factor of $k$: our construction achieves $\CLZ{w} = \Theta(\sqrt{kn})$, whereas
  the upper bound applies from $\Theta(k^{3/2}\!\sqrt{n})$.
  We conjecture that the upper bound is asymptotically tight but determining this remains an open problem.
\end{remark}

\subsubsection*{Efficient Algorithms.}
Our final contribution is algorithmic. We design an $\varepsilon$-approximation algorithm for the
\KMODIFS problem:
given a word $w$ and a proportion $p \in [0,1]$, decide whether there exists a
word $w'$ at Hamming distance $k$ from $w$ such that at least a proportion $p$
of the blocks in $\LZ(w')$ contain three or more starting positions of blocks
from $\LZ(w)$, and return such a set of $k$ edits whenever it exists.

While a brute-force search over all $k$-tuples of edit positions takes $O(|w|^{k+1})$
time, our algorithm exploits a structural property of positive instances, allowing to decrease the exponent.

\begin{restatable}{theorem}{Algorithm}
  \label{theorem:algorithm}
  For any fixed $k\in\mathbb{N}$ and $\varepsilon\in(0,1)$, there exists a deterministic algorithm that outputs an $\varepsilon$-approximation to \KMODIFS\ in time
  \[
    O\!\left(|w|^{\lceil 2k/3\rceil+1}f(k,\varepsilon)\right).
  \]
  with $f(k,\varepsilon)=O\!\left(\left(\tfrac{k^{3}}{\varepsilon}\right)^{2k/3}\right)$.
\end{restatable}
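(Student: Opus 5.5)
The statement to prove is Theorem~\ref{theorem:algorithm}. The plan rests on a structural lemma about positive instances. Fix the \LZ parse of $w$ with $z=\CLZ{w}$ phrases. Suppose a set $E$ of $k$ edits yields $w'$ in which a proportion $p$ of the blocks (in the sense of \KMODIFS) contain at least three phrase starts of the other parse. I would show that such tripled blocks can only arise near the edits.

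The mechanism, following the charging argument behind Theorem~\ref{thm:upper-general}, is this. A phrase of $w'$ that is not near an edit can copy its source unchanged, so it aligns with the old parse. Extra phrase starts are therefore created only when a phrase's source or its own extent contains an edited position. A block gains a third start essentially only when two ``damage events'' hit it. In the tight constructions these are two distinct edits, one destroying the block and one destroying its source.

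The key combinatorial step is to show that in an optimal (or near-optimal) edit set, at least $\lceil k/3\rceil$ of the edits are \emph{determined}, up to $\mathrm{poly}(k/\varepsilon)$ choices, by the positions of the remaining edits. The intuition is that edits come in pairs or triples of ``edit + its mirrored occurrence in a source.” Once one member of such a group is fixed, the others lie at offsets dictated by the parse of $w$: the source pointer of the affected phrase, or the corresponding offset in the copied block.

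Accordingly, the algorithm enumerates only about $2k/3$ edit positions freely, which costs $|w|^{\lceil 2k/3\rceil}$. For each of the remaining edits it branches over a candidate set of size $O(k^3/\varepsilon)$, read off from the \LZ parse of the current partial word. This gives the factor $(k^3/\varepsilon)^{2k/3}$, since at most $2k/3$ branching choices occur, each among $O(k^3/\varepsilon)$ candidates. Each full candidate is then evaluated by recomputing $\LZ(w')$ in $O(|w|)$ time and counting the blocks with three starts, which accounts for the extra factor $|w|$.

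The $\varepsilon$ enters through rounding. Rather than searching all offsets, I restrict to offsets where the phrase boundary shifts by an amount that is a multiple of roughly $\varepsilon|\text{block}|/k^3$. The charging argument should show that this loses at most an $\varepsilon$-fraction of the tripled blocks. The $k^3$ accounts for $k$ edits, each interacting with $O(k)$ phrases and $O(k)$ offsets.

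Correctness then has two halves. Every set returned is verified directly, so the algorithm never outputs a false positive. Conversely, if an instance with proportion $p$ exists, the lemma guarantees that some enumerated branch reaches proportion at least $p-\varepsilon$.

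The main obstacle is the structural lemma that a constant fraction (one third) of the edits is forced. Justifying the exact exponent $\lceil 2k/3\rceil$ requires showing that every useful edit lies in a group of size at least $3$, in which one member determines the others up to bounded choices. My first attempt would use the analysis from Theorem~\ref{thm:upper-regime}: a block with three starts requires its interior to be ``cut'' twice. Each cut comes from an edit either inside the block or inside its source, giving the triple-dependency. Making the bounded-choice claim rigorous when sources overlap or chain (a source that is itself edited) is where I expect the difficulty, and I would handle it by induction along the left-to-right order of the parse.
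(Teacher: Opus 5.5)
Your proposal has a genuine gap. The structural lemma that carries everything (``at least $\lceil k/3\rceil$ edits are determined up to $\mathrm{poly}(k/\varepsilon)$ choices by the others'') is not proved. It rests on a mechanism, an edit plus its mirrored occurrence in a source, that does not match how triples arise. And the $\varepsilon$-step you attach to it fails.

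Read the $v$-construction from the proof of Proposition~\ref{prop:lower-general} in the \KMODIFS direction. The word $w$ has header $c^pffc^{p+1}e_{p+1}$, and the two good edits turn $ff$ into $dd$. Each $c^iddc^{i+1}e_i$ then becomes one block of $\LZ(w')$ covering three starts of $\LZ(w)$. Both edits are adjacent in the header, and every tripled block copies a factor containing \emph{both} of them. No edit touches the tripled blocks themselves.

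Now move either edit by one position, keeping the other fixed. The factor $dd$ disappears from the header, and no tripled block survives. So rounding offsets to multiples of roughly $\varepsilon/k^3$ times a block length cannot lose only an $\varepsilon$-fraction of the triples. \LZ parsing is not stable under moving an edit, so no charging argument will give this.

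Your exponent bookkeeping is also inconsistent. ``Every useful edit lies in a group of size at least $3$ in which one member determines the others'' would give exponent $\lceil k/3\rceil$, not $\lceil 2k/3\rceil$, and it is false already for $k=2$. Likewise, $\lceil k/3\rceil$ determined edits would give branching $(k^3/\varepsilon)^{k/3}$, not $(k^3/\varepsilon)^{2k/3}$. The exponents appear to be read off the statement rather than derived. Only your verification half matches the paper: recompute $\LZ(w')$ in $O(|w|)$ per candidate, so there are no false positives.

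Here is the route the paper takes instead.
\begin{itemize}
\item \emph{Chains.} Fix a (first, last) pair of edits referenced by tripled blocks. The pairs $b_1b_2$ of consecutive $\LZ(w)$-blocks that these tripled blocks cover form a strictly nested chain (Lemma~\ref{lem:two-struct}).
\item \emph{Where $\varepsilon$ enters.} Discard every chain shorter than $L=\lceil \varepsilon\,\CLZ{w}/\binom{k}{2}\rceil$. This costs at most about $\varepsilon\,\CLZ{w}$ triples, since there are at most $\binom{k}{2}$ chains.
\item \emph{Hitting set.} Every remaining chain is hit by the sparsest residue class $\mathcal{R}$ of Hasse-diagram levels modulo $L$, and $|\mathcal{R}|\le\binom{k}{2}/\varepsilon$ (Lemma~\ref{lem:hitting}).
\item \emph{Free parameters.} A hit representative $x$ has an occurrence in $w$ at Hamming distance $h$, and its mismatches are a whole group of consecutive edits. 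So the free parameter is an \emph{occurrence position} ($|w|$ choices) that fixes several edits at once, not an individual edit position. The bounded branching is over representatives in $\mathcal{R}$ and budgets $h$.
\item \emph{The exponent.} $\lceil 2k/3\rceil$ comes from a greedy cover of the edit indices $1,\dots,k$ by chain intervals $[a,b]$ with $a<b$. Any two consecutive greedy picks cover at least three new indices.
\end{itemize}
Without a small hitting set of this kind, nothing makes your per-edit candidate sets have size independent of $|w|$.
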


In the special case $k=2$, we get a better and exact algorithm, running in quadratic time.

\begin{restatable}{theorem}{AlgorithmTwoEdits}
  \label{theorem:algorithm-two-edits}
  There exists a deterministic algorithm that decides \KMODIFS\ for $k=2$ in time
  $O(\frac{|w|^{2}}{p})$, and outputs a set of $2$ modifications when such a set exists.
\end{restatable}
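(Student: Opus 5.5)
Our plan for Theorem~\ref{theorem:algorithm-two-edits} is to exploit the structure of the \LZ parse to cut the search over pairs of edit positions down to quadratically many candidates, with a linear-time check each. The starting point is the single-edit analysis of Akagi, Funakoshi and Inenaga: one substitution at position $i$ changes the parse only locally. Blocks of $\LZ(w)$ that end before $i$ are unchanged. Blocks that start after $i$ can be split only at points where their source occurrence crosses $i$, and each such block gains at most one extra phrase start. With two edits at positions $i<j$, a block of $\LZ(w)$ can contain three starting positions of $\LZ(w')$ only if it is affected by both edits. That is, either the block contains one edit and its source covers the other, or its source occurrence covers both positions. The same reasoning applies in the symmetric direction, with blocks of $\LZ(w')$ containing phrase starts of $\LZ(w)$, which is how \KMODIFS\ is phrased.

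The first step is a structural lemma for positive instances. If a proportion $p$ of the $\CLZ{w'}$ blocks contain at least three phrase starts of $\LZ(w)$, then we can identify one block that witnesses this. Specifically, we can charge to one edit position $i$ a block whose three-way split is caused jointly with $j$. We then aim to show that the second position $j$ is determined, up to $O(1/p)$ candidates, by the first position $i$ and the block in which the edit falls. The intuition is that $j$ must lie inside the source occurrence, or inside the image, of a block that is also split by $i$. By pigeonhole over the $p\cdot\CLZ{w'}$ good blocks, some block supplies $j$ from a set of size $O(1/p)$ of candidate offsets. This is the step I expect to be the main obstacle. It requires a careful charging argument showing that the good blocks cannot each demand a distinct $j$, and that the candidate set is computable in $O(1)$ amortized time from precomputed longest-previous-factor data.

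Given this lemma, the algorithm works as follows. We precompute the suffix tree or suffix array of $w$ and the \LZ parse in $O(|w|)$ time. We enumerate all positions $i$ and all substituted letters, and for each $i$ enumerate the $O(1/p)$ candidate positions $j$ together with the letter values that matter; only letters that break or extend a source match are relevant, so there are $O(1)$ of them. For each resulting pair we recompute $\LZ(w')$ in $O(|w|)$ time, for example with a linear-time \LZ factorization algorithm on $w'$. We count the blocks containing at least three phrase starts of $\LZ(w)$ by a merge of the two sorted start lists, and accept if the proportion reaches $p$. The total running time is $O(|w|\cdot \tfrac1p\cdot |w|)=O(|w|^2/p)$ for a constant alphabet, and the algorithm outputs the pair when one is found.

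Correctness follows in both directions. Soundness is immediate because every pair we output is verified exactly. Completeness follows from the structural lemma: any valid pair of edits, after normalization of the letters, appears among the enumerated candidates.
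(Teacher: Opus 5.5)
There is a genuine gap, and it sits where you expected it. Your running-time bound rests on the claim that, for each $i$, the partner $j$ lies in a set $J(i)$ of size $O(1/p)$ computable from $w$ alone. You do not prove this, and the argument you sketch is circular. The pigeonhole runs over the good blocks of $\LZ(w')$, but those blocks are only defined once the unknown pair $(i,j)$ is fixed, so they cannot generate candidates. Intrinsically, a triple-producing block $b$ of $\LZ(w')$ must copy a source containing both edits. Hence $w[b]$ and $w[\mathrm{src}(b)]$ are at Hamming distance $2$, with mismatches, in the source's coordinates, at $i$ and $j$. For fixed $i$ this determines $j$ only once the copy offset between $b$ and its source is known, and a priori there are $\Theta(|w|)$ offsets. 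Nothing in your proposal restricts them, so as written the enumeration falls back to the $O(|w|^3)$ brute force. The enumeration over substituted letters, with its unexplained ``$O(1)$ relevant letters'', is also unnecessary: the new letters are fixed by the alignment.

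The missing idea is structure intrinsic to $\LZ(w)$, combined with a count of \emph{total} candidates rather than candidates per $i$. The paper's Lemma~\ref{lem:two-struct} states that if $b$ precedes $b'$ among the triple-producing blocks, then the concatenations of their first two $\LZ(w)$-blocks are strictly nested, $b_1b_2\subset b'_1b'_2$. So a positive instance yields a strict chain of length at least $pC(w)$ in the poset $\mathcal{P}$ of pairs of consecutive $\LZ(w)$-blocks under the factor order. Taking the sparsest residue class of Hasse-diagram levels modulo $t=\lceil pC(w)\rceil$ gives a set $\mathcal{R}$ of at most $\lceil 1/p\rceil$ representatives that meets every such chain. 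The edits are then read off as the two mismatch positions of a Hamming-distance-$2$ occurrence in $w$ of some representative's content. Each representative has $O(|w|)$ such occurrences, so there are $O(|w|/p)$ candidate pairs in total, each verified in $O(|w|)$ time. This does \emph{not} give $O(1/p)$ partners per fixed $i$, since one representative can have many occurrences whose mismatch set contains $i$. So even with the paper's lemma in hand, your accounting of $|w|$ choices of $i$ times $O(1/p)$ choices of $j$ has to be replaced by this global count.
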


Ultimately, the goal is to design algorithms that solve \KMODIFS as efficiently as possible, so that pre-editing strategies can be used in practice to improve compression without any significant computational overhead.
Our result constitutes a first nontrivial step in this direction: it provides an
algorithm that already beats the naive $O(n^{k+1})$ exhaustive search.
Understanding whether substantially faster algorithms exist remains an open
question. In particular, it would be interesting to determine whether the
problem is W[1]-hard, which would rule out fixed-parameter
tractable algorithms under standard complexity assumptions—that is, algorithms
running in time $f(k)\cdot n^{O(1)}$ for some computable function~$f$.

\subsection{Organization of the Paper and Technical Overview.}
We now give a more detailed overview of the ideas behind our results and outline the organization of the paper.

Section~\ref{sec:preliminaries} introduces the necessary notation and definitions.

In Section~\ref{sec:lz77-sensitivity}, we present the proof of Theorem~\ref{thm:upper-general} and Proposition~\ref{prop:lower-general} that provide the general upper and lower bounds on the sensitivity of \LZ under $k$ edits. Unlike the single-edit case, where each phrase in $w$ may be split into at most two parts, multiple edits can fragment phrases of $w$ into several smaller segments, creating intricate dependencies between splits. To analyze this fragmentation, we introduce a \emph{jump function} (Definition~\ref{def:jump-function}), which tracks how phrases in $w$ are progressively reconstructed in the edited word $w'$. Intuitively, the jump function describes how far one can "skip" forward in $w'$ before crossing another edit, and its \emph{jump depth} measures the cumulative effect of these crossings. By bounding the total jump depth, we obtain the desired upper bound on $\CLZ{w'}$. The lower bound is proved by constructing a family of words that realize the worst-case pattern, showing that the upper bound is essentially tight.

Section~\ref{sec:regimes} is devoted to Theorem~\ref{thm:upper-regime}, Corollary~\ref{cor:trichotomy} and Proposition~\ref{prop:trichotomy-lower}, which together yield the trichotomy of sensitivity regimes.
To prove Theorem~\ref{thm:upper-regime}, we observe that the set of phrases in $w$ that generate exactly two or three new phrase starts in $w'$ possesses a well-defined combinatorial structure.
We capture this structure via two objects, the \emph{$P_2$} and \emph{$P_3$-systems} (Definitions~\ref{def:p2} and~\ref{def:p3}), which provide a compact representation of the dependencies created by local modifications.
This representation reduces the analysis to an extremal combinatorial problem.
The main technical step is to establish lower bounds for these systems (Propositions~\ref{prop:p2} and~\ref{prop:p3}), which then directly imply the bounds in Theorem~\ref{thm:upper-regime}.
We solve these extremal problems using optimization under constraints techniques.
The $P$-systems, or suitable variants, may be of independent interest both for the analysis of other dictionary-based compressors and from a purely combinatorial viewpoint.

Finally, Section~\ref{sec:algorithm} presents an $\varepsilon$-approximation algorithm (Theorem~\ref{theorem:algorithm}) for the \KMODIFS problem that improves over the naive $O(n^{k+1})$ exhaustive search.
Before this, we first give a subcubic exact algorithm for the case $k=2$ (Theorem~\ref{theorem:algorithm-two-edits}), which serves as a warm-up and already illustrates the main ideas. We then extend the approach to arbitrary $k$.
The algorithm exploits a structural property of positive instances: if many phrases of $w'$ are split into three parts in $w$, then there exists a long nested chain of words formed by consecutive block pairs of $w$, and any such chain necessarily intersects an interval containing both edits.
We construct a small \emph{hitting set} of intervals guaranteed to intersect every such chain.
By inspecting only the intervals in this hitting set, the algorithm efficiently identifies candidate edit locations, which are then verified to determine positive instances.

\begin{remark}
  For clarity, all theorems and proofs are stated for substitution edits only.
  However, all results extend to insertions and deletions with minor modifications to the proofs.
\end{remark}

\section{Preliminaries}\label{sec:preliminaries}

\paragraph{Notation.}
Let $\Sigma$ be a finite alphabet. We write $\Sigma^*$ (resp. $\Sigma^n$) for the set
of all finite strings (resp. all strings of size $n$) over $\Sigma$. We also call a string a
word. For $w\in\Sigma^*$, let $\lvert w\rvert$ denotes its length,
and for $1 \le i \le j \le \lvert w\rvert$ we write

\[
  w[i:j] \;=\; w_i w_{i+1}\cdots w_j
\]
for the substring of $w$ from position $i$ to $j$.  We use $\varepsilon$ for the empty string.

We write $u \factor w$ to indicate that $u$ is a substring of $w$ (i.e., a contiguous subsequence).

For two strings $w,w'\in\Sigma^n$, their \emph{Hamming distance} is
\[
  \DeltaDist{w}{w'}
  \;=\;
  \bigl|\{i : w_i \neq w'_i\}\bigr|.
\]

For $k\in\mathbb{N}$ we define the Hamming ball of radius $k$ around $w$ by
\[
  B(w,k)
  \;=\;
  \bigl\{w'\in\Sigma^{\lvert w\rvert} : \DeltaDist{w}{w'}\le k\bigr\}.
\]

\paragraph{\LZ parsing.}
Let $w = w_1w_2\cdots w_n$. The \emph{\LZ factorization} of $w$ is a decomposition
\[
  w \;=\; P_1P_2\cdots P_z
\]
computed greedily from left to right as follows.  Suppose we have
already parsed up through position $m$.  Then the next phrase
$P_j$ is chosen to be the shortest non‐empty prefix of the remaining
suffix $w[m+1\colon n]$ that do not appear as a substring in the parsed
prefix $w[1\colon m]$. In fact, it is encoded as the couple of a reference to a
substring of the parsed prefix $w[1\colon m]$ and a letter.
If no such match exists, we simply set $P_j
= w[m+1:|w|]$.  We then increase $m$ by $\lvert P_j\rvert$ and repeat
until $m=n$.  The total number of phrases is denoted by $\CLZ{w}$, i.e.,
\[
  \CLZ{w} \;=\; z.
\]

We denote by $l(P_i)$ and $r(P_i)$ the starting and ending positions of the phrase $P_i$ in the string $w$. Equivalently,
\[
  P_i \;=\; w_{l(P_i)}w_{l(P_i)+1}\cdots w_{r(P_i)}.
\]

When matches are allowed to overlap the yet‐unparsed suffix
(i.e.\ substrings of the form $w[i:j]$ with $i\le m<j$) this is
known as \emph{\LZ with self‐reference}.  If one forbids such
overlaps, i.e., any match lies entirely within $w[1\colon m]$ it is
known as \emph{\LZ without self‐reference}. An example of both parsing is given in figure \ref{fig:parsing_expl}

\begin{figure}

  \begin{center}
    \begin{tikzpicture}[x=0.5cm, y=0.6cm]

      \node[anchor=east, align=right] at (-1.765, 0) {\textbf{\LZ : }};

      \definecolor{c1}{RGB}{220,20,60}   %
      \definecolor{c2}{RGB}{30,144,255}  %
      \definecolor{c3}{RGB}{0,128,0}     %
      \definecolor{c4}{RGB}{255,140,0}   %
      \definecolor{c5}{RGB}{128,0,128}   %
      \definecolor{c6}{RGB}{139,69,19}   %

      \node[text=c1] at (0,0) {a};
      \node[text=c2] at (1,0) {b};
      \node[text=c3] at (2,0) {b};
      \node[text=c3] at (3,0) {b};
      \node[text=c4] at (4,0) {a};
      \node[text=c4] at (5,0) {b};
      \node[text=c4] at (6,0) {a};
      \node[text=c5] at (7,0) {b};
      \node[text=c5] at (8,0) {a};
      \node[text=c5] at (9,0) {b};
      \node[text=c5] at (10,0) {a};
      \node[text=c5] at (11,0) {b};
      \node[text=c5] at (12,0) {b};
      \node[text=c6] at (13,0) {a};
      \node[text=c6] at (14,0) {a};

      \draw[c1, thick] (0.5,-0.3) -- (0.5,0.3);
      \draw[c2, thick] (1.5,-0.3) -- (1.5,0.3);
      \draw[c3, thick] (3.5,-0.3) -- (3.5,0.3);
      \draw[c4, thick] (6.5,-0.3) -- (6.5,0.3);
      \draw[c5, thick] (12.5,-0.3) -- (12.5,0.3);
      \draw[c6, thick] (14.5,-0.3) -- (14.5,0.3);

      \node[text=c1] at (0,-0.7) {\scriptsize$(0,0,a)$};
      \node[text=c2] at (1,0.7) {\scriptsize$(0,0,b)$};
      \node[text=c3] at (2.5,-0.7) {\scriptsize$(1,1, b)$};
      \node[text=c4] at (5,0.7) {\scriptsize$(0,2,a)$};
      \node[text=c5] at (9.5,-0.7) {\scriptsize$(3,5,b)$};
      \node[text=c6] at (13.5,0.7) {\scriptsize$(0,1,a)$};

    \end{tikzpicture}

    \begin{tikzpicture}[x=0.5cm, y=0.6cm]

      \node[anchor=east, align=right] at (-2.5, 0) {\textbf{LZ77sr : }};

      \definecolor{c1}{RGB}{220,20,60}   %
      \definecolor{c2}{RGB}{30,144,255}  %
      \definecolor{c3}{RGB}{0,128,0}     %
      \definecolor{c4}{RGB}{255,140,0}   %
      \definecolor{c5}{RGB}{128,0,128}   %
      \definecolor{c6}{RGB}{139,69,19}   %

      \node[text=c1] at (0,0) {a};
      \node[text=c2] at (1,0) {b};
      \node[text=c3] at (2,0) {b};
      \node[text=c3] at (3,0) {b};
      \node[text=c3] at (4,0) {a};
      \node[text=c4] at (5,0) {b};
      \node[text=c4] at (6,0) {a};
      \node[text=c4] at (7,0) {b};
      \node[text=c4] at (8,0) {a};
      \node[text=c4] at (9,0) {b};
      \node[text=c4] at (10,0) {a};
      \node[text=c4] at (11,0) {b};
      \node[text=c4] at (12,0) {b};
      \node[text=c5] at (13,0) {a};
      \node[text=c5] at (14,0) {a};

      \draw[c1, thick] (0.5,-0.3) -- (0.5,0.3);
      \draw[c2, thick] (1.5,-0.3) -- (1.5,0.3);
      \draw[c3, thick] (4.5,-0.3) -- (4.5,0.3);
      \draw[c4, thick] (12.5,-0.3) -- (12.5,0.3);
      \draw[c5, thick] (14.5,-0.3) -- (14.5,0.3);

      \node[text=c1] at (0,-0.7) {\scriptsize$(0,0, a)$};
      \node[text=c2] at (1,0.7) {\scriptsize$(0,0,b)$};
      \node[text=c3] at (3,-0.7) {\scriptsize$(1,2,a)$};
      \node[text=c4] at (8.5,0.7) {\scriptsize$(3,7,b)$};
      \node[text=c5] at (13.5,-0.7) {\scriptsize$(0,1,a)$};

    \end{tikzpicture}
  \end{center}
  \caption{
    An example of \LZ and LZ77sr parsing on the word "abbbababababbaa". Between parenthesis are the results of compressions of each block with, in order, the starting position of the referenced substring, its length and the letter added.
  }
  \label{fig:parsing_expl}
\end{figure}
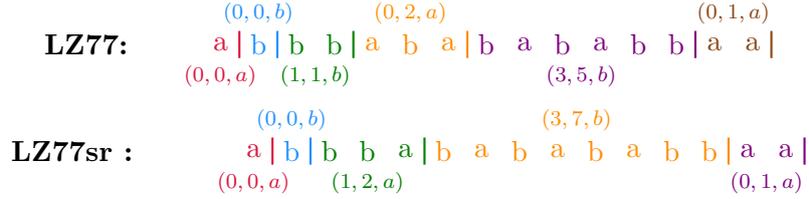
\paragraph{Alignment parameters.}
Let
\[
  w = P_1P_2\cdots P_z,\quad
  w' = Q_1Q_2\cdots Q_{z'},
\]
be \LZ parses, and let $l(P_i),r(P_i)$ and $l(Q_j)$ denote phrase endpoints as above.  Define
\[
  p_2(w,w')
  = \frac{1}{z} \bigl|\bigl\{i : \bigl|\{j : l(P_i)\le l(Q_j)\le r(P_i)\}\bigr|=2\bigr\}\bigr|,
\]
\[
  p_3(w,w')
  = \frac{1}{z} \bigl|\bigl\{i : \bigl|\{j : l(P_i)\le l(Q_j)\le r(P_i)\}\bigr|=3\bigr\}\bigr|.
\]
These give the fraction of original phrases containing exactly two or three new phrase‐starts.

\begin{remark}
  When the context is clear, we will simply write $p_2$ or $p_3$, instead of $p_2(w,w')$ or $p_3(w,w')$.
\end{remark}

\section{General upper and lower bounds on \LZ sensitivity}\label{sec:lz77-sensitivity}

\subsection{Theorem~\ref{thm:upper-general}. A general upper bound on \LZ sensitivity}

In this section, we generalize the sensitivity analysis of \LZ compression by Akagi, Funakoshi and Inenaga~\cite{AkagiFI23}, which considers a single edit, to the general case of $k$ edits. While a single edit can increase the number of phrases by at most a multiplicative factor of $2$, we show that multiple edits can increase the number of phrases by a factor of $3$. This increase arises because a single edit breaks at most one phrase into two, whereas multiple edits may fragment phrases into more than two parts.

Our main result is a tight upper bound on the \LZ complexity of a string $w'$ at Hamming distance $k$ from a reference string $w$.

\UpperGeneral*

Before going into the full technical proof, let us sketch the main
ideas.  When we edit up to $k$ positions in a string $w$, each
edit can potentially break existing \LZ phrases into multiple new
fragments.  To control this fragmentation, we introduce a jump
function $g_j$ that, at each position
$j$, tracks how far we can “jump” over the next block of edits by
matching substrings in the already‐parsed prefix of the edited text.
Iterating $g_j$ tells us how many edits lie between successive
recoverable matches, and thus how many extra phrases may be forced by
those edits.  We capture this with the quantity $\|g_j\|$, the least
number of jumps needed to clear all $k$ edits.  Intuitively,
$\|g_j\|$ decreases as we extend the parsed prefix past more edits,
and the difference $\|g_{p}\|-\|g_{q}\|$ measures how many
new phrases arise in the interval $[p,q]$.  Bounding these
increments over all original phrases leads directly to the stated
$3\cdot\CLZ{w}+4k$ upper bound.

We start by defining the jump function and its first properties.

\begin{definition}[Jump‐map and jump‐depth]\label{def:jump-function}
  Fix substitution positions
  \[
    0 \le i_1 < i_2 < \cdots < i_k \le n-1,
    \quad i_0=-1,
  \]
  and let $w'\in\Sigma^n$ be the string after replacing $w[i_\ell]$ by fresh symbols $\#_\ell$.  For each $t\in\{0,1,\dots,n\}$ define
  \[
    g_t:\{0,1,\dots,k\}\to\{0,1,\dots,k\},
    \quad
    g_t(k)=k,
  \]
  and for $0\le\ell<k$,
  \[
    \begin{aligned}
      g_t(\ell) = &
      \begin{cases}
        \ell + 1,
        \quad\text{if } w[i_\ell + 1 : i_{\ell+1} - 1] \not\factor w'[0 : t - 1], \\
        \max\bigl\{ \ell' > \ell : w[i_\ell + 1 : i_{\ell'} - 1] \factor w'[0 : t - 1] \bigr\},
        \quad\text{otherwise}.
      \end{cases}
    \end{aligned}
  \]

  The \emph{jump‐depth} of $g_t$ is
  \[
    \|g_t\|
    \;=\;
    \min\bigl\{s : g_t^{(s)}(0)\ge k\bigr\},
  \]
  i.e.\ the least number of iterates of $g_t$ needed to reach index $k$.
\end{definition}

\begin{remark}
  The use of fresh characters to substitute at positions $i_j$ is the worst case here. Indeed, using other characters could only increase the number of factors in $w'[0:i]$ used later in the compression. Thus, using non-fresh characters could only reduce the size of the compression $\CLZ{w'}$.
\end{remark}

We start with some basic properties of the jump function $g_t$ that will be useful later.
\begin{remark}[Monotonicity of jumps]
  For all $0\le t\le t'\le n$ and $0\le \ell \le \ell'<k$:
  \begin{enumerate}[label=(\roman*)]
    \item $g_t(\ell)>\ell$.
    \item $g_t(\ell)\le g_{t'}(\ell)$
    \item $g_t(\ell) \le g_{t}(\ell')$
    \item $\|g_t\|\ge \|g_{t'}\|$.
  \end{enumerate}
\end{remark}

The main technical challenge of the section is to prove the following proposition:

\begin{proposition}\label{prop:interval-bound}
  Let $P_j$ be the $j$-th phrase of $w$ with endpoints
  \[
    l_j=l(P_j),\quad r_j=r(P_j),
  \]
  and let $n_j$ be the number of substitutions in $[l_j,r_j]$.  Then the number of new phrase-starts of $w'$ in $[l_j,r_j]$ satisfies
  \[
    \bigl|\{i: l_j \le l(Q_i)\le r_j\}\bigr|
    \le
    3 + \|g_{l_j}\| - \|g_{r_j}\|+ 3 n_j.
  \]
\end{proposition}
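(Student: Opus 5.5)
Fix the phrase $P_j=w[l_j:r_j]$ and cut $[l_j,r_j]$ at the $n_j$ substituted positions $i_\ell$ it contains. This leaves at most $n_j+1$ maximal edit-free segments, plus the $n_j$ edit positions. The plan is to bound the number of $w'$-phrase starts in each piece separately, keeping the middle segments cheap and charging the extra cost to the drop in jump depth $\|g_{l_j}\|-\|g_{r_j}\|$. The basic tool is the greedy property of \LZ. If a phrase $Q_i$ of $w'$ starts at position $p$, and $w'[p:q]$ occurs in $w'[0:p-1]$, then $Q_i$ extends at least to $q+1$. Hence inside any interval $[p,q]$ with $w'[p:q]\factor w'[0:p-1]$ there can be at most one further phrase start after $p$ (the next start is at or beyond $q+1$). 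So it suffices to exhibit, for each piece, a source occurrence in the already parsed prefix.

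\textbf{Step 1 (edit positions and the last segment).} Each edit position $i_\ell$ may itself host a phrase start, and the position right after it may host another. This accounts for at most $2n_j$ starts. The first segment $[l_j,i_{\ell_0}-1]$ is a prefix of $P_j$. Since $P_j$ minus its last letter occurs in $w[0:l_j-1]$, this prefix has a source in $w$. The source may contain edits, however, and this is where the jump function enters. Without edits in the source, the first segment costs at most $1$ start beyond the start at $l_j$, and symmetrically the final letter $r_j$ costs $1$. This should give the additive $3$: one start at or near $l_j$, one for the last segment, one for the final letter.

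\textbf{Step 2 (sources that cross edits).} The source of a piece of $P_j$ lies in $w[0:l_j-1]$, but in $w'$ it may straddle edits $i_{\ell+1},\dots$. The piece then has to be rebuilt from shorter factors. Take the source occurrence and write it as consecutive edit-free runs $w[i_\ell+1:i_{\ell+1}-1]$. A greedy phrase of $w'$ covering the run following edit $i_\ell$ can be extended, by the definition of $g_t$, up to edit $i_{g_t(\ell)}$, where $t$ is the current parse position. So the number of phrase starts needed to cover the image of one source is at most the number of jumps $g_t$ takes to traverse it, plus about one per edit touched, giving the $3n_j$ term. The key inequality to prove is that when an extra jump is needed inside $P_j$, the newly parsed material ($w'$ up to $r_j$) makes the corresponding factor $w[i_\ell+1:i_{\ell'}-1]$ appear in $w'[0:r_j-1]$. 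Each extra jump inside $P_j$ then strictly lowers the jump depth: an extra start at parse position $t$ pays for a unit of $\|g_t\|-\|g_{t'}\|$ for the next start $t'$. Summing over consecutive starts telescopes to $\|g_{l_j}\|-\|g_{r_j}\|$, using part (iv) of the monotonicity remark.

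\textbf{Main obstacle.} The hard part is that charging. It requires showing that the specific factor used by $P_j$ after the jump is one that $g$ measures. This needs the fact that $g_t(\ell)$ is a maximum over $\ell'$ and is monotone in $\ell$ (part (iii)), so that a long copy starting mid-run still dominates. It also needs care that jumps from the iteration $g^{(s)}(0)$ starting at $0$ line up with jumps starting at an arbitrary $\ell$. I would first prove a lemma: if $w'[p:q]$ is exactly copied from a region spanning edits $\ell<\cdots<\ell'$ and $q$ passes an edit-free run whose factor was not yet present, then $\|g_{q}\|\le\|g_p\|-1$. All remaining constant bookkeeping would then follow from Step 1.
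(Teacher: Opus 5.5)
\textbf{There is a genuine gap: the charging step fails.} Your skeleton matches the paper: cut $[l_j,r_j]$ at the $n_j$ edits, bound each edit-free piece, and telescope with part (iv). But the per-start charging that carries all the weight is false as you propose it, not merely unproved.

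The depth $\|g_t\|$ is measured along the orbit of $0$, and enlarging $g$ at one index need not shorten that orbit. Two cases show this.
\begin{itemize}
\item If the only new factor is a single run $w[i_\ell+1:i_{\ell+1}-1]$, then $g_t(\ell)=\ell+1$ both before and after, so the depth does not change.
\item Suppose instead a factor spanning an edit appears, say $g_p(\ell)=\ell+1$ and $g_q(\ell)=\ell+2$, while $g_p(\ell+1)=g_p(\ell+2)=g_q(\ell+2)=\ell+10$. Then both orbits reach $\ell+10$ in the same number of steps.
\end{itemize}
So your lemma $\|g_q\|\le\|g_p\|-1$ fails, and no argument can charge one unit of depth to each extra phrase start.

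Your accounting has a second flaw. You charge ``about one per edit touched'' in the source to the $3n_j$ term. However, the source of an edit-free piece of $P_j$ can straddle arbitrarily many edits while $n_j=0$. In the paper's Case~1 the bound $3+\|g_{l_j}\|-\|g_{r_j}\|$ does not depend on the number $m$ of edits in the source. That cost must therefore be paid by the drop in depth, not by $n_j$.

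\textbf{The missing idea is a batch comparison.} Instead of charging per start, compare a whole edit-free segment at once via Lemma~\ref{lem:jump-gap}. Suppose the segment has $s$ new phrase starts and its source $w[x:y]$ straddles edits $\ell+1,\dots,\ell+m$. Then:
\begin{itemize}
\item[(a)] $g_{l_j}^{s-1}(\ell)\le \ell+m$. Otherwise the source region could be rebuilt from fewer pieces already present in $w'[0:l_j-1]$, contradicting the greedy parse having $s$ starts there.
\item[(b)] Once the segment has been parsed, $w[i_{\ell+1}+1:i_{\ell+m}-1]\factor w'[l_j:r_j-1]$, so $g_{r_j}(\ell+1)\ge\ell+m$. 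A single long jump now replaces the many short ones.
\end{itemize}
The lemma turns (a) and (b) into $\|g_{l_j}\|-\|g_{r_j}\|\ge s-3$. Here two steps of the orbit at time $r_j$ (one to get past $\ell$, one long jump) replace $s-1$ steps at time $l_j$. This realignment of the orbit of $0$ with jumps starting at $\ell$ and $\ell+1$ is exactly the issue you flagged and then assumed away.

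This realignment, not ``one start near $l_j$, one for the last segment, one for the final letter'', is where the constant $3$ per segment comes from. Summing over the $n_j+1$ segments and telescoping gives $3+3n_j+\|g_{l_j}\|-\|g_{r_j}\|$.
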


With Proposition~\ref{prop:interval-bound} in hands, we can directly prove Theorem~\ref{thm:upper-general}.

\begin{proof}[Proof of Theorem~\ref{thm:upper-general}]
  Apply Proposition~\ref{prop:interval-bound} to each of the $C_{LZ77}(w)$ phrases of $w$. Summing the inequality
  \[
    \bigl|\{i: l_j \le l(Q_i)\le r_j\}\bigr|
    \le
    3 + \|g_{l_j}\| - \|g_{r_j}\| + 3n_j
  \]
  over $j=1,\dots,C_{LZ77}(w)$ gives
  \begin{align*}
    C_{LZ77}(w')
    &\le
    \sum_{j=1}^{C_{LZ77}(w)}\bigl(3 + \|g_{l_j}\| - \|g_{r_j}\| + 3n_j\bigr)&\\
    &\le
    \sum_{j=1}^{C_{LZ77}(w)-1}\bigl(3 + \|g_{l_j}\| - \|g_{l_{j+1}}\|+ 3n_j\bigr)&\\
    &+ 3 + \|g_{l_{C_{LZ77}(w)}}\| - \|g_{r_{C_{LZ77}(w)}}\| + 3n_{C_{LZ77}(w)} & (\text{by remark (iv)})\\
    \\
    &\le
    3C_{LZ77}(w) + \bigl(\|g_0\| - \|g_{|w|}\| \bigr) + 3k & (\text{by telescoping})\\
  \end{align*}
  By definition $\|g_{|w|}\| \geqslant 1$ and $\|g_0\| = k$. Hence

  $$C_{LZ77}(w') \le 3 C_{LZ77}(w) + 4k$$

  as claimed.
\end{proof}

We introduce the following lemma:
\begin{lemma}\label{lem:jump-gap}
  Let $0 \le j < j' < |w|$, and let integers $\ell \ge 0$, $m \ge s \ge 2$ satisfy
  \[
    g_j^s(\ell) \le \ell + m
    \quad\text{and}\quad
    g_{j'}(\ell+1) \ge \ell + m.
  \]
  Then
  \[
    \|g_j\| - \|g_{j'}\| \ge s - 2.
  \]
\end{lemma}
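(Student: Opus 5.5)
The plan is to compare the two greedy orbits $a_t = g_j^{(t)}(0)$ and $b_t = g_{j'}^{(t)}(0)$ directly. Throughout we use only the properties in the Remark on monotonicity of jumps: each $g_t$ is strictly increasing in the sense $g_t(\ell)>\ell$ (below $k$), it is monotone in its argument, and it is monotone in $t$.

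\textbf{Step 1 (domination of orbits).} We show by induction that $b_t \ge a_t$ for all $t$. If $b_t\ge a_t$, then
\[
b_{t+1}=g_{j'}(b_t)\ge g_{j'}(a_t)\ge g_j(a_t)=a_{t+1},
\]
using (iii) and then (ii). More generally, the same argument gives: if $x\le y$, then $g_j^{(r)}(x)\le g_{j'}^{(r)}(y)$ for every $r$. It follows that the number of iterates of $g_{j'}$ needed to reach $k$ from $y$ is at most the number of iterates of $g_j$ needed to reach $k$ from $x$.

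\textbf{Step 2 (locate the crossing of $\ell$).} Let $T=\max\{t : a_t\le \ell\}$. This exists because $a_0=0\le \ell$ and the orbit strictly increases until it reaches $k$. We may assume $\ell<k$, since otherwise $g_{j'}(\ell+1)$ is not meaningful.
\begin{itemize}
\item On the $g_j$ side: monotonicity gives $a_{T+s}=g_j^{(s)}(a_T)\le g_j^{(s)}(\ell)\le \ell+m$.
\item On the $g_{j'}$ side: $b_{T+1}\ge a_{T+1}\ge \ell+1$ by Step 1 and the maximality of $T$. Hence $b_{T+2}=g_{j'}(b_{T+1})\ge g_{j'}(\ell+1)\ge \ell+m$.
\end{itemize}
If some $b_t$ already equals $k$ earlier, the inequalities only improve, since $g(k)=k$.

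\textbf{Step 3 (finish the count).} Let $R$ be the number of $g_j$-iterates needed to go from $a_{T+s}$ to $k$. Then $\|g_j\| = T+s+R$. One caveat: if $k$ is reached before step $T+s$, then $R=0$ and $\|g_j\|\le T+s$. In that case we will instead need $\|g_{j'}\|\le T+2$, which still holds because $b_{T+2}\ge \min(k,\ell+m)\ge a_{T+s}=k$.

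In the generic case, $b_{T+2}\ge \ell+m\ge a_{T+s}$. By the corollary of Step 1, $g_{j'}$ reaches $k$ from $b_{T+2}$ in at most $R$ iterates, so
\[
\|g_{j'}\|\le T+2+R .
\]
Subtracting gives $\|g_j\|-\|g_{j'}\|\ge s-2$.

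\textbf{Expected obstacle.} The main care point is the boundary bookkeeping: making $\|g_j\|=T+s+R$ exact rather than a mere bound, and handling the cap at $k$. To make the caveat rigorous, we define $R$ as the least $r$ with $g_j^{(r)}(a_{T+s})\ge k$, allowing $r=0$. In the degenerate case where the $g_j$-orbit hits $k$ before step $T+s$, we argue separately that $\|g_{j'}\|\le T+2$ and $\|g_j\|\ge T+1$, and check that $s-2$ is still attained. The core argument itself is just domination of monotone orbits.
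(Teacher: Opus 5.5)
Your generic case is the paper's argument: the two estimates $a_{T+s}\le \ell+m$ and $b_{T+2}\ge \ell+m$, followed by pushing forward through domination of orbits. Your choice $T=\max\{t: a_t\le \ell\}$ is actually the correct one. The paper's strict version $g_j^t(0)<\ell$ only yields $g_j^{t+1}(0)\ge \ell$. So its step $g_{j'}(g_j^{t+1}(0))\ge g_{j'}(g_j(\ell))$ is unjustified, since monotonicity gives the reverse inequality, and $t$ is undefined when $\ell=0$. Yours gives $a_{T+1}\ge \ell+1$ directly.

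\textbf{The gap.} The gap is the degenerate case, and it cannot be closed. The two bounds you propose, $\|g_{j'}\|\le T+2$ and $\|g_j\|\ge T+1$, only give $\|g_j\|-\|g_{j'}\|\ge -1$. When the $g_j$-orbit reaches $k$ strictly before step $T+s$, nothing forces a gap of $s-2$, and in fact the lemma as stated fails there.

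\emph{Counterexample.} Take $k=3$ and $w=\mathtt{pqrstupqrstvv}$ (positions $0$ to $12$), with edits at $i_1=1$, $i_2=3$, $i_3=5$. Then $w'=\mathtt{p}\#_1\mathtt{r}\#_2\mathtt{t}\#_3\mathtt{pqrstvv}$; take $j=11$ and $j'=12$.
\begin{itemize}
\item Since $w[0:4]=\mathtt{pqrst}\factor w'[0:10]$, we get $g_{11}(0)=3=k$. Hence $\|g_{11}\|=\|g_{12}\|=1$.
\item Since $w[2:4]=\mathtt{rst}\factor w'[0:11]$, we get $g_{12}(1)=3$.
\item With $\ell=0$ and $s=m=3$, both hypotheses hold: $g_{11}^3(0)=3\le 3$ and $g_{12}(1)=3\ge 3$.
\end{itemize}
Yet the conclusion reads $0\ge 1$. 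Your argument uses only the monotonicity properties (i)--(iii), which are compatible with such an early jump to $k$, so no further bookkeeping can rescue it.

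The paper's proof has the same hole, hidden in ``by definition, $\|g_j\|=t+s+o$''. That equality holds only if the orbit has not reached $k$ before step $t+s$.

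\textbf{Repair.} Add a hypothesis such as $g_j^{s-1}(\ell)<k$ (implied, e.g., by $\ell+m<k$). Then $a_{T+s-1}\le g_j^{s-1}(\ell)<k$ forces $\|g_j\|\ge T+s$, and your argument closes both subcases:
\begin{itemize}
\item if $a_{T+s}<k$, it is your generic argument;
\item if $a_{T+s}=k$, then $\|g_j\|=T+s$, while $b_{T+2}\ge \ell+m\ge k$ gives $\|g_{j'}\|\le T+2$.
\end{itemize}
You would then need to check that this stronger hypothesis is available where the lemma is invoked in Proposition~\ref{prop:interval-bound}.
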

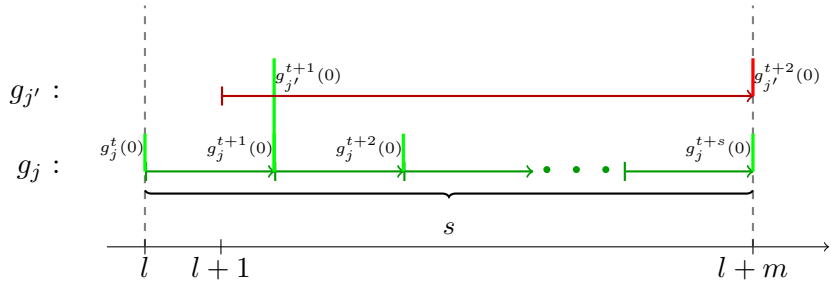
\begin{figure}[ht]\centering
  \begin{tikzpicture}[scale=1]

    \draw[->] (-0.5,0) -- (9,0);

    \draw[dashed, thick, gray] (0,0) -- (0,3.2);
    \draw[dashed, thick, gray] (8,0) -- (8,3.2);

    \foreach \x/\label in {0/$l$, 1/$l+1$, 8/$l+m$} {
      \draw (\x,0.1) -- (\x,-0.1);
      \node[below] at (\x,0) {\label};
    }

    \def\length{1.7}

    \foreach \i in {0,1,2} {
      \pgfmathsetmacro{\startx}{\i*\length}
      \pgfmathsetmacro{\endx}{(\i+1)*\length}
      \draw[thick, green!60!black, |-] (\startx,1) -- (\endx,1);
      \draw[thick, green!60!black, ->] (\endx,1) -- ({\endx + 0.01},1);

      \draw[very thick, green] (\startx,1) -- (\startx,1.5);
    }
    \draw[very thick, green] (\length,1) -- (\length,2.5);
    \node[black, font=\tiny] at (-0.3,1.3) {$g_j^{t}(0)$};
    \node[black, font=\tiny] at (\length-0.45,1.3) {$g_j^{t+1}(0)$};
    \node[black, font=\tiny] at (2*\length-0.45,1.3) {$g_j^{t+2}(0)$};
    \pgfmathsetmacro{\laststart}{8-\length}
    \draw[thick, green!60!black, |-] (\laststart,1) -- (8,1);
    \draw[thick, green!60!black, ->] (8,1) -- (8.01,1);
    \draw[very thick, green] (8,1) -- (8,1.5);
    \node[black, font=\tiny] at (7.55,1.3) {$g_j^{t+s}(0)$};

    \pgfmathsetmacro{\dotsx}{(2*\length + \laststart)/2}
    \node[green!60!black] at (\dotsx+0.90,1) {\Huge $\cdots$};

    \draw[decorate, decoration={brace, mirror}, thick] (0,0.75) -- (8,0.75)node[midway, below=8pt, font=\small\itshape] {$s$};

    \draw[thick, red!70!black, |-] (1,2) -- (8,2);
    \draw[thick, red!70!black, ->] (8,2) -- (8.01,2);

    \draw[very thick, red] (8,2) -- (8,2.5);

    \node[black, font=\tiny] at (2.15,2.25) {$g_{j'}^{t+1}(0)$};
    \node[black, font=\tiny] at (8.45,2.25) {$g_{j'}^{t+2}(0)$};

    \node[anchor=west, text width=1.2cm, align=right] at (-2.5,1) {$g_j$ :};
    \node[anchor=west, text width=1.2cm, align=right] at (-2.5,2) {$g_{j'}$ :};

  \end{tikzpicture}
  \caption{positions and relations of values used in Lemma~\ref{lem:jump-gap}}
  \label{fig:jump-gap}
\end{figure}

\begin{proof}
  Figure \ref{fig:jump-gap} depicts the elements of this proof.\\
  Define
  \begin{align}
    t & = \max\{u\ge0 : g_j^u(0)<\ell\}, \label{eq:def-t}\\
    o & = \min\{u\ge0 : g_j^{t+s+u}(0)\ge k\}. \label{eq:def-o}
  \end{align}
  By definition, $\|g_j\|=t+s+o$.  It suffices to show
  \[
    \|g_{j'}\| \le t+2+o,
  \]
  i.e.\ that
  \[
    g_{j'}^{t+2+o}(0) \ge k.
  \]

  We prove this using the following two claims.

  \begin{claim}\label{cl:lower-bound}
    $g_{j'}^{t+2}(0)\ge \ell+m$.
  \end{claim}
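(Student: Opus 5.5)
The plan is to compare the iterates of $g_{j'}$ with those of $g_j$ using the monotonicity properties in the Remark, and then use the hypothesis $g_{j'}(\ell+1)\ge \ell+m$ for the final jump.

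\textbf{Step 1: comparison of iterates.} First I would show by induction on $u$ that
\[
  g_{j'}^{u}(0)\;\ge\; g_j^{u}(0)\qquad\text{for all } u\ge 0 .
\]
The base case $u=0$ is trivial. For the inductive step, suppose $g_{j'}^{u}(0)\ge g_j^{u}(0)$. Then
\[
  g_{j'}^{u+1}(0)=g_{j'}\bigl(g_{j'}^{u}(0)\bigr)\ge g_{j'}\bigl(g_j^{u}(0)\bigr)\ge g_j\bigl(g_j^{u}(0)\bigr)=g_j^{u+1}(0).
\]
The first inequality uses item (iii) of the Remark (monotonicity in the argument), and the second uses item (ii) (monotonicity in the time index, since $j<j'$). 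Whenever an argument reaches $k$, I would use the convention $g(k)=k$ to stay in range. By the maximality of $t$ in \eqref{eq:def-t}, we have $g_j^{t+1}(0)\ge \ell$. Hence $g_{j'}^{t+1}(0)\ge \ell$.

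\textbf{Step 2: the last jump.} If $g_{j'}^{t+1}(0)\ge \ell+1$, I would apply $g_{j'}$ once more. Items (iii) and (i), together with $g(k)=k$, give
\[
  g_{j'}^{t+2}(0)\ge g_{j'}(\ell+1)\ge \ell+m .
\]
If $g_{j'}^{t+1}(0)$ is already at least $\ell+m$, or at least $k$, the claim follows directly, because iterates are nondecreasing by (i).

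\textbf{Main obstacle: the boundary case $g_{j'}^{t+1}(0)=\ell$ exactly.} The comparison alone then only gives $g_{j'}^{t+2}(0)=g_{j'}(\ell)\ge\ell+1$, which is one jump short. The difficulty is that (iii) points the wrong way: it gives $g_{j'}(\ell)\le g_{j'}(\ell+1)$. I would first try the definition of the jump map. If $g_{j'}(\ell)=\ell+1$, then $w[i_\ell+1:i_{\ell+1}-1]\not\factor w'[0:j'-1]$. If the strict inequality in \eqref{eq:def-t} makes this case genuinely possible, I would re-index. One option is to redefine $t$ as $\max\{u : g_j^u(0)\le \ell\}$, which matches the picture in Figure~\ref{fig:jump-gap}, where $g_j^t(0)$ sits at $\ell$ and $g_{j'}^{t+1}(0)$ sits at $\ell+1$. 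Under this definition, Step 1 gives $g_{j'}^{t+1}(0)\ge g_j^{t+1}(0)\ge \ell+1$, and Step 2 goes through unchanged.

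I would then check that this re-indexing is compatible with the count $\|g_j\|=t+s+o$ used afterwards. This compatibility uses the hypothesis $g_j^s(\ell)\le \ell+m$, which is the only place where that hypothesis enters. If the count only holds up to one unit, the fallback is to absorb the extra jump into the slack of the lemma's conclusion $s-2$ rather than $s-1$.
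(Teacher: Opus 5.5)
\textbf{The statement as written is false.} Your Steps 1 and 2 are correct. The boundary case you isolate is real, not an artifact of your method: with $t$ defined by the strict inequality in \eqref{eq:def-t}, the Claim cannot be proved. Choose the edits so that each original letter $w[i_\ell]$ occurs nowhere else in $w$. Then no factor of $w$ covering an edited position occurs in $w'$, so $g_j=g_{j'}$ is the successor map $x\mapsto x+1$ on $\{0,\dots,k-1\}$. Take $k\ge 3$, $\ell=1$ and $m=s=2$. Both hypotheses of Lemma~\ref{lem:jump-gap} hold, since $g_j^2(1)=3\le\ell+m$ and $g_{j'}(2)=3\ge\ell+m$, and $t=0$. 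Yet $g_{j'}^{t+2}(0)=2<3=\ell+m$.

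The paper's proof does not cover this case either. Its step $g_{j'}\bigl(g_j^{t+1}(0)\bigr)\ge g_{j'}\bigl(g_j(\ell)\bigr)$ requires $g_j^{t+1}(0)\ge g_j(\ell)$. But $g_j^{t}(0)<\ell$ and (iii) give the reverse inequality $g_j^{t+1}(0)\le g_j(\ell)$. The choice of $t$ only yields $g_j^{t+1}(0)\ge\ell$, which is exactly your boundary case. In addition, for $\ell=0$ the set in \eqref{eq:def-t} is empty, so $t$ is not even defined.

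\textbf{Your re-indexing is necessary and correct; the fallback is not.} So drop the conditional in your plan: the re-indexing $t=\max\{u\ge 0: g_j^u(0)\le\ell\}$ is needed, and it is the right repair. It is defined for every $\ell<k$, including $\ell=0$. Maximality gives $g_j^{t+1}(0)\ge\ell+1$, and your Steps 1--2 then yield $g_{j'}^{t+2}(0)\ge g_{j'}(\ell+1)\ge\ell+m$ with no case distinction.

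The companion Claim~\ref{cl:upper-bound} also survives, because $g_j^t(0)\le\ell$ is enough for (iii): $g_j^{t+s}(0)=g_j^s\bigl(g_j^t(0)\bigr)\le g_j^s(\ell)\le\ell+m$. That is where the hypothesis $g_j^s(\ell)\le\ell+m$ enters. It does not enter the count $\|g_j\|=t+s+o$, which depends only on the definition of $o$ and is unaffected by how $t$ is chosen. (The count is exact when $o\ge 1$; for $o=0$ one only has $\|g_j\|\le t+s$, and this issue is independent of $t$.)

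Your fallback is not available, because there is no slack in $s-2$: the $2$ is precisely the $t+2$ of this Claim. An extra jump would weaken the lemma to $s-3$. Case~1 of Proposition~\ref{prop:interval-bound} would then only give $s\le 4+\|g_{l_j}\|-\|g_{r_j}\|$, and the factor $3$ in Theorem~\ref{thm:upper-general} would become $4$.
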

  \begin{proof}
    Using monotonicity and remarks (i)–(iii):
    \[
      \begin{aligned}
        g_{j'}^{t+2}(0)
        &=g_{j'}\bigl(g_{j'}^{t+1}(0)\bigr)
        \ge
        g_{j'}\bigl(g_j^{t+1}(0)\bigr)
        &&(\text{by (ii)})\\
        &\ge
        g_{j'}\bigl(g_j(\ell)\bigr)
        &&(\text{by (i), (iii) and definition}\ref{eq:def-t})\\
        &\ge
        g_{j'}(\ell+1)
        &&(\text{by (i)})\\
        &\ge
        \ell+m
        &&(\text{by hypothesis}).
      \end{aligned}
    \]
  \end{proof}

  \begin{claim}\label{cl:upper-bound}
    $g_j^{t+s}(0)\;\le\;\ell+m$.
  \end{claim}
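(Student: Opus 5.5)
The plan is to use only the monotonicity properties in the Remark on monotonicity of jumps, together with the definition of $t$ in~\eqref{eq:def-t}. First I will check that every iterate $g_j^{u}$ is non-decreasing on $\{0,\dots,k\}$. By item (iii), $g_j$ itself is non-decreasing on $\{0,\dots,k-1\}$. This extends to the point $k$ because $g_j(k)=k$ and every value of $g_j$ is at most $k$, so $g_j(\ell')\le k=g_j(k)$ for all $\ell'$. A composition of non-decreasing maps is non-decreasing, so $x\le y$ implies $g_j^{u}(x)\le g_j^{u}(y)$ for every $u\ge 0$.

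Next, I will locate the starting point $g_j^{t}(0)$. By the definition of $t$ as a maximum, $g_j^{t}(0)<\ell$, and in particular $g_j^{t}(0)\le \ell$. One degenerate case needs a separate word. If $\ell=0$, the set in~\eqref{eq:def-t} is empty, and I adopt the convention $t=0$, which the identity $\|g_j\|=t+s+o$ already requires. In that case $g_j^{0}(0)=0=\ell$, so the inequality $g_j^{t}(0)\le\ell$ still holds.

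Combining the two observations gives the claim:
\[
g_j^{t+s}(0)=g_j^{s}\bigl(g_j^{t}(0)\bigr)\le g_j^{s}(\ell)\le \ell+m,
\]
where the first inequality is the monotonicity of $g_j^{s}$ and the last is the hypothesis of Lemma~\ref{lem:jump-gap}.

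The only delicate point is the bookkeeping at the boundary. We must make sure that applying monotonicity at arguments equal to $k$ is legitimate, which is handled by the extension of (iii) to $k$ above, and that $t$ is well defined when $\ell=0$. I do not expect any real obstacle beyond that. The substantive work of Lemma~\ref{lem:jump-gap} instead lies in combining this claim with Claim~\ref{cl:lower-bound} and the definition of $o$ to get $g_{j'}^{t+2+o}(0)\ge k$.
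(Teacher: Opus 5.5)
Your proof is correct and is the same argument as the paper's. You write $g_j^{t+s}(0)=g_j^{s}(g_j^{t}(0))$, bound this by $g_j^{s}(\ell)$ using $g_j^{t}(0)\le\ell$ (from the definition of $t$) and monotonicity (iii), and finish with the hypothesis $g_j^{s}(\ell)\le\ell+m$. The paper leaves two points implicit that you make explicit: extending monotonicity to the argument $k$ via $g_j(k)=k$, and taking $t=0$ when $\ell=0$, where the set defining $t$ is empty.
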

  \begin{proof}
    By definition of $t$ and monotonicity (iii),
    \[
      \begin{aligned}
        g_j^{t+s}(0)
        &=g_j^s\bigl(g_j^{t}(0)\bigr)
        \le
        g_j^s(\ell)
        &&(\text{by (iii) and defintion}\ref{eq:def-t})\\
        &\le
        \ell+m
        &&(\text{by hypothesis} ).
      \end{aligned}
    \]
  \end{proof}

  Finally, combining Claims~\ref{cl:lower-bound} and \ref{cl:upper-bound}:
  \[
    \begin{aligned}
      g_{j'}^{t+2+o}(0)
      &=g_{j'}^o\bigl(g_{j'}^{t+2}(0)\bigr)
      \ge
      g_{j'}^o(\ell+m)
      &&(\text{by Claim~\ref{cl:lower-bound} and (iii)})\\
      &\ge
      g_j^o(\ell+m)
      &&(\text{by (ii)})\\
      &\ge g_j^{t+s+o}(0)
      &&(\text{by Claim~\ref{cl:upper-bound} and (iii)})\\
      &\ge
      k
      &&(\text{by definition \eqref{eq:def-o}}),
    \end{aligned}
  \]
  as required.
  \qedhere
\end{proof}

\begin{proof}[Proof of Proposition~\ref{prop:interval-bound}]

  Let $s=\bigl|\{i: l_j\le l(Q_i)\le r_j\}\bigr|$
  be the number of new phrase‐starts in $[l_j,r_j]$. We split into two cases.

  \medskip
  \noindent\textbf{Case 1:} $n_j=0$.

  \begin{itemize}

    \item If $s\le3$, then
      \[
        s \le 3 \le 3 + \|g_{l_j}\| - \|g_{r_j}\| \text{ since } \|g_{r_j}\| \ge \|g_{l_j}\| \text{ (by remark (iv))}.
      \]

    \item If $s>3$: choose any $x<y <r_j -2$ so that $w[x:y]=w'[l_j:r_j-1]$. Such $(x,y)$ exists because $P_j$ appears as a factor of $w$.
      We suppose that $w'[x:y]$ contains $m$ edited positions, i.e., $w'$ is of the form:
      \[w'[x:y] = u_0\#_{l+1}u_1\#_{l+2}u_2\cdots u_{m-1}\#_{l+m}u_m\]
      where $\#_{l+1}, \cdots, \#_{l+m}$ are the $m$ edited positions, of indices $i_{l+1}, i_{l+1}, \cdots, i_{l+m}$.

      We now show that
      \begin{itemize}
        \item $g_{r_j}(l+1)\ge l+m$.
        \item $g_{l_j}^{s-1}(l)<l+m+1$.
      \end{itemize}

      For the first point, since
      $$
      w[i_{l+1}+1: i_{l+m}-1] \factor w[x:y] = w'[l_j:r_j-1],
      $$
      we may apply the definition of $g_{r_j}$ directly.

      For the second point, suppose
      $$
      g_{l_j}^{s-1}(l) \ge l+m+1.
      $$
      Then the factor $w[i_l:i_{l+m+1}]$ can be decomposed into $s$ parts, each of which is a factor of $w'[l_j:r_j-1]$.  Since also
      $$
      w[l_j:r_j] \factor w[i_l:i_{l+m+1}],
      $$
      it follows that $P_j=w[l_j:r_j]$ itself admits a decomposition into $s$ factors of $w'[l_j:r_j-1]$.  But this contradicts the fact that there are $s$ new phrase-starts in $[l_j,r_j]$ since $s$ factors would give at most $s-1$ new phrase-starts in $[l_j,r_j]$.

      By applying Lemma~\ref{lem:jump-gap}, we get:
      \[
        \|g_{l_j}\|-\|g_{r_j}\|\ge(s-1)-2
        =s-3,
      \]
      hence
      \[
        s\le3+\bigl(\|g_{l_j}\|-\|g_{r_j}\|\bigr).
      \]
  \end{itemize}

  \medskip
  \noindent\textbf{Case 2:} $n_j>0$.
  The goal is to bring back this case to the first one by cutting $w'[l_j:r_j]$ according to the edited positions.

  Let
  \[
    w'[l_j:r_j] = w_0 \#_{l} w_{1} \#_{l+1} \cdots w_{n_j-1}\#_{l+n_j-1}w_{n_j}
  \]
  with $w_0, w_1, \cdots, w_{n_j}\in (\Sigma\setminus\{\#_i, 1\leqslant i\leqslant k\})^*$.

  For $0\leqslant o<n_j$, let
  \[
    \begin{array}{rclclclcl}
      l_{j,o}   &=& l_j + \displaystyle\sum_{m=0}^{o-1} |w_m| + o
      && \text{and} &&
      r_{j,o}   &=& l_j + \displaystyle\sum_{m=0}^{o} |w_m| + o, \\
      l_{j,n_j} &=& l_j - |w_{n_j}| + 1
      && \text{and} &&
      r_{j,n_j} &=& r_j.
    \end{array}
  \]

  $[l_{j, o}, r_{j, o}]$ is the interval associated with $w_o \#_{l+o}$ and $[l_{j, n_j}, r_{j, n_j}]$ with $w_{n_j}$.

  Observe that the only needed property in case 1 is that $w'[l_j: r_j-1]$ occurs in $w[0: r_j-2]$ and we do not need $w[l_j: r_j]$ to be a \LZ factor of $w$.

  By definition of a factor in $\mathsf{LZ77}(w)$, $w[l_j: r_j-1]$ has at least one occurrence in $w[0: r_j-1]$ so each $w_o$ occurs in $w[0: r_{j,o}-2]$. Hence, it is possible to apply the reasoning used in case (I) on all the intervals $[l_{j, o}: r_{j, o}]$ with $0\leqslant o \leqslant n_j$.

  Therefore,
  \begin{align*}
    s
    &\le
    \sum_{o=0}^{n_j} \bigl(3+\|g_{q_{j,o}}\|-\|g_{p_j,o}\|\bigr)&\\
    &\le
    \sum_{o=0}^{n_j-1}  \bigl(3+\|g_{q_{j,o}}\|-\|g_{q_j,o+1}\|\bigr) + 3 + \|g_{q_{j,n_j}}\|-\|g_{p_j,n_j}\| & (\text{by remark (iv)})\\
    &\le
    3+\|g_{q_{j}}\|-\|g_{p_j}\|+3n_j & (\text{by telescoping})\\
  \end{align*}
  as wanted.
\end{proof}

\subsection{A general lower bound on \LZ sensitivity}

We now prove the associated lower bound, showing that the upper bound of Theorem~\ref{thm:upper-general} is tight up to lower-order terms.
\LowerGeneral*
To do this, we construct a family of words whose \LZ parse size asymptotically matches the lower bound stated in the proposition. The construction combines two independent parts, $u$ and $v$, defined over disjoint alphabets. The first part is designed to contribute to an additive term of $k$ in $C(w')$, while the second part gives, again in $C(w')$, a multiplicative coefficient of $3$ in front of $C(w)$.

Let $m, k' \in \mathbb{N}$ be parameters to be chosen later.

\paragraph{Construction of $u$.}

Let the alphabet be $\Sigma = \{b, a_0, a_1, \dots, a_{k'}, \sharp_1, \sharp_2, \dots, \sharp_{k'}\}$, where all symbols are distinct. Define:
\[
  u_i = \prod_{l=0}^{m} a_i b^l,
  \quad
  u = u_0 \cdot \prod_{l=1}^{k'} u_0 \sharp_l,
  \quad
  u' = u_0 \cdot \prod_{l=1}^{k'} u_l \sharp_l.
\]

The \LZ parses of $u$ and $u'$ are as follows:
\begin{align*}
  LZ77(u) &= a_0 \mid a_0 b \mid \cdots \mid a_0 b^m \mid u_0 \sharp_1 \mid u_0 \sharp_2 \mid \cdots \mid u_0 \sharp_{k'}, \\
  LZ77(u') &= a_0 \mid a_0 b \mid \cdots \mid a_0 b^m \mid
  a_1 \mid a_1 b \mid \cdots \mid a_1 b^m \mid \sharp_1 \mid
  \cdots \mid
  a_{k'} \mid a_{k'} b \mid \cdots \mid a_{k'} b^m \mid \sharp_{k'}.
\end{align*}

The number of edits to get $u'$ from $u$ is $k_u=(m+1)k'$, corresponding to the substitution of each $a_0$ with $a_i$. We get that:
\[
  C(u) = m + 1 + k',
  \quad\text{and}\quad
  C(u') = (m+1)(k'+1) = 3C(u) + (1 - \underset{m\shortrightarrow\infty}{o}(1))k_u.
\]

\paragraph{Construction of $v$.}

For this part, let the alphabet be $\Gamma = \{c, d, f, e_0, \dots, e_{p+1}\}$, where all symbols are distinct. Define:
\[
  v_{i,j} = c^i d d c^{i+1} e_j,
  \quad
  v = v_{p,p+1} \cdot \prod_{i=0}^{p} v_{i,i},
  \quad
  v' = c^p f f c^{p+1} e_{p+1} \cdot \prod_{i=0}^{p} v_{i,i}.
\]

The corresponding \LZ parses of $v$ and $v'$ are:
\begin{align*}
  LZ77(v) &= LZ77(v_{p,p+1}) \mid v_{0,0} \mid v_{1,1} \mid \cdots \mid v_{p,p}, \\
  LZ77(v') &= LZ77(c^p f f c^{p+1} e_{p+1}) \mid
  d \mid dc \mid e_0 \mid
  cd \mid dcc \mid e_1 \mid
  \cdots \mid
  c^p d \mid dc^{p+1} \mid e_p.
\end{align*}

Let $s$ be $C(v_{p,p+1})$. We have $s = C(v_{p,p+1}) = C(c^p f f c^{p+1} e_{p+1}) = \underset{p\shortrightarrow\infty}{\mathcal{O}}(\log p)$. The number of edits to get $v'$ from $v$ is $k_v = 2$, and we have:
\[
  C(v) = s + p + 1,
  \quad\text{and}\quad
  C(v') = s + 3(p + 1) = 3C(v) - 2s.
\]

\paragraph{Combining both parts.}

Since the alphabets of $u$ and $v$ are disjoint, and the final blocks of $u$ and $u'$ end exactly at the word boundaries, the concatenations preserve parse independence. That is:
\[
  C(uv) = C(u) + C(v),
  \quad\text{and}\quad
  C(u'v') = C(u') + C(v').
\]

Substituting previous bounds above and using $s = \underset{p\shortrightarrow\infty}o(C(v))$ gives:
\begin{align*}
  C(u'v') &= 3C(u) + (1 - \underset{m\shortrightarrow\infty}o(1))k_u + 3C(v) - 2s \\
  &= (3C(uv) + k_u + k_v)(1+\underset{m\shortrightarrow\infty}o(1)+\underset{p\shortrightarrow\infty}o(1))\\
  &= (3C(uv) + k)(1 + \underset{m\shortrightarrow\infty}o(1)+\underset{p\shortrightarrow\infty}o(1))
\end{align*}
where $k = k_u + k_v$ is the total number of edits.

This concludes the proof of Proposition~\ref{prop:lower-general}.

\section{Upper and lower bounds based on the regime}\label{sec:regimes}

\subsection{Upper bounds based on the compression ratio and number of edits}
\label{sec:upper-regime}

We refine our previous analysis by tightening the multiplicative coefficient of $\CLZ{w}$ in the bound for $\CLZ{w'}$.  While Theorem~\ref{thm:upper-general} establishes a worst‐case factor of $3$, we show that this factor can decrease to values near $2$ or even $1$, depending on the compressibility of $w$. This yields an interesting trichotomy in \LZ’s sensitivity, see Corollary~\ref{cor:trichotomy}.

To formalize these regimes, we measure the proportion of original \LZ phrases in $w$ that are split into two or three fragments by $k$ edits.  Intuitively, if a large fraction of phrases of $w$ are split into two (respectively three) fragments in $w'$, then those phrases must heavily overlap common regions, otherwise a few edits could only impact a few phrases. Such overlap implies that $w$ is compressible (respectively highly compressible). Theorem~\ref{thm:upper-regime} makes this intuition more precise.

\UpperRegime*

\paragraph{P‐systems}
To prove Theorem~\ref{thm:upper-regime}, we introduce two useful combinatorial objects: \emph{P2‐system} and \emph{P3‐system}, given in Definitions~\ref{def:p2} and~\ref{def:p3}. The key observation to motivate these definitions is that the set of original phrases containing exactly two (respectively three) new phrase-starts in the edited word forms a \emph{P2-system} (respectively nearly a \emph{P3-system}). Hence studing extremal combinatorial properties on these systems will translate to corresponding bounds on the compression of \LZ.

In what follows, each $(a,b)\in E\subset\mathbb{N}^2$ is viewed as a matching bracket pair, and each $i\in\sigma\subset\mathbb{N}$ as a marker.

\begin{definition}[P2‐system]
  \label{def:p2}
  Let $E\subset\mathbb{N}^2$ and $\sigma\subset\mathbb{N}$.  The pair $(E,\sigma)$ is called a \emph{P2‐system} if the following condition holds:

  Every bracket in $E$ contains at least one marker from $\sigma$: for each $(a,b)\in E$ there exists $i\in\sigma$ with $a\le i<b$.
\end{definition}

\begin{definition}[P3‐system]
  \label{def:p3}
  Let $E\subset\mathbb{N}^2$ and $\sigma\subset\mathbb{N}$.  The pair $(E,\sigma)$ is a \emph{P3‐system} if the following two conditions hold:
  \begin{enumerate}
    \item[\textup{(i)}] Every bracket in $E$ contains at least two markers from $\sigma$: for each $(a,b)\in E$ there exist $i,j\in\sigma$ with $a\le i<j<b$.
    \item[\textup{(ii)}] No two brackets share an endpoint. I.e., if $(a,b)\neq(c,d)$ are in $E$, then $a\neq c$ and $b\neq d$.
  \end{enumerate}
\end{definition}

We now introduce the notion of \emph{extent} of a set $E \subseteq \mathbb{N}^2$.
This quantity will be used as a lower bound for $|w|$.
Later, by bounding the extent from below in terms of the number of edits $k$ and the number $p_2\cdot C(w)$ (respectively $p_3\cdot C(w)$) of \LZ phrases of $w$ that are split into two (respectively three) parts in $w'$, we will obtain an upper bound on $C(w)$ that depends only on $k$, $|w|$ and $p_2$ (respectively $p_3$).

\begin{definition}[\emph{extent}]
  For $E\subset\mathbb{N}^2$, we define the \emph{extent} of $E$, written $\|E\|$, as follows:
  $$\|E\|=\sum_{(a, b)\in E}(b-a).$$
\end{definition}
The main technical challenge is to prove two combinatorial lower bounds on these P-systems, given by Proposition~\ref{prop:p2} and Proposition~\ref{prop:p3}

\begin{proposition}
  \label{prop:p2}
  For $E\subset \nn^2$, $ \sigma\subset\nn$, if $(E, \sigma)$ is a {P2‐system} then
  \[\|E\| \geqslant\frac{|E|^{3/2}}{8\sqrt {|\sigma|}}\]
\end{proposition}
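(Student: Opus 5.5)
The plan is to reduce the statement to a single-marker estimate and then combine the markers by convexity. First, assign each bracket $(a,b)\in E$ to one marker it contains, say the smallest $i\in\sigma$ with $a\le i<b$; the P2-system condition guarantees such an $i$ exists. This partitions $E$ into classes $E_i$ for $i\in\sigma$. Write $n_i=|E_i|$, so that $\sum_i n_i=|E|$ and $\|E\|=\sum_i\|E_i\|$. Since $E\subset\mathbb{N}^2$ is a set, the brackets in each class are pairwise distinct.

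\textbf{Key step (one marker).} I will show that any $n$ distinct brackets all containing a fixed marker $i$ have extent at least $c\,n^{3/2}$ for an absolute constant $c\ge 1/8$. The observation is that a bracket $(a,b)$ of length $L=b-a$ with $a\le i<b$ is determined by $a\in\{i-L+1,\dots,i\}$. Hence at most $L$ distinct brackets of length exactly $L$ contain $i$, and at most $\sum_{L<T}L\le T^2/2$ of them have length less than $T$. Choose $T=\lfloor\sqrt{n}\rfloor$. Then at most $n/2$ brackets of $E_i$ have length $<T$, so at least $n/2$ have length $\ge T$. This gives $\|E_i\|\ge \tfrac n2\lfloor\sqrt n\rfloor\ge \tfrac n2\cdot\tfrac{\sqrt n}{2}=n^{3/2}/4$ for $n\ge1$, using $\lfloor x\rfloor\ge x/2$ for $x\ge1$. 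For $n=0$ the bound is trivial. The slack between this $1/4$ and the claimed $1/8$ absorbs any rounding issues. A sharper count, filling lengths $1,2,\dots$ greedily, would give $\approx(2n)^{3/2}/3$, but this is not needed.

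\textbf{Combining.} Summing over markers, $\|E\|\ge \frac14\sum_{i\in\sigma}n_i^{3/2}$. Since $x\mapsto x^{3/2}$ is convex, Jensen's inequality (or Hölder) gives $\sum_i n_i^{3/2}\ge|\sigma|\,(|E|/|\sigma|)^{3/2}=|E|^{3/2}/\sqrt{|\sigma|}$. Together these yield $\|E\|\ge |E|^{3/2}/(4\sqrt{|\sigma|})\ge |E|^{3/2}/(8\sqrt{|\sigma|})$. The degenerate case $\sigma=\emptyset$ forces $E=\emptyset$, so the statement is vacuous there.

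The only genuine content is the single-marker counting step, namely that each length $L$ admits at most $L$ distinct brackets through a fixed point. I expect the main care to be needed there: making the floor and constant bookkeeping clean for small $n$, and making sure the assignment to a single marker does not lose distinctness. Distinctness is automatic because the classes $E_i$ partition the set $E$.
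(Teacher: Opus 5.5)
Your proof is correct, and it actually yields the stronger constant $1/4$. Its outer structure is the paper's: assign each bracket to one marker it contains, bound each class by $\|E_i\|\gtrsim n_i^{3/2}$, and combine the $|\sigma|$ classes with H\"older/convexity.

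The difference is in the single-marker estimate. The paper groups the brackets of $E_i$ by the offset $j$ of their left endpoint from the marker. It uses that brackets sharing a left endpoint have distinct right endpoints to get $\|E_i\|\ge\sum_j n_{i,j}\bigl(\tfrac{n_{i,j}}{2}+j\bigr)$. It then has to solve this discrete minimization separately (Claim~\ref{cl:opt_p2}): an exchange argument shows a minimizer vanishes for $j>2\sqrt n$, and Jensen finishes.

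You group by length instead. You note that at most $L$ distinct brackets of length $L$ pass through a fixed point, and finish with a one-line threshold count: at most $n/2$ brackets are shorter than $\lfloor\sqrt n\rfloor$. This removes the optimization claim and its rounding bookkeeping entirely.

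Neither grouping is inherently sharper. Optimized exactly, both give a per-marker value of about $(2n)^{3/2}/3$, attained by taking all brackets of length at most $\sqrt{2n}$ through the marker. Your threshold version gives up that sharpness for simplicity, but still comfortably beats the required $1/8$.
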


\begin{proposition}
  \label{prop:p3}
  For $E\subset \nn^2$, $ \sigma\subset\nn$, if $(E, \sigma)$ is a {P3‐system} then \[\|E\|\geqslant 2\frac{|E|^2}{|\sigma|}\]
\end{proposition}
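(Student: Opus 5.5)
The plan is to group the brackets according to a pair of consecutive markers that they contain. In each group, conditions (i) and (ii) of Definition~\ref{def:p3} force the extents to grow quadratically with the size of the group. The groups are then combined with Cauchy--Schwarz.

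\textbf{Setup.} If $E=\emptyset$ there is nothing to prove. Otherwise condition (i) forces $s:=|\sigma|\ge 2$. Write $\sigma=\{m_1<m_2<\cdots<m_s\}$. For each bracket $(a,b)\in E$, condition (i) gives markers $a\le i<j<b$. Let $m_t$ be the smallest marker with $m_t\ge a$. Then $m_t\le i$, so $m_{t+1}\le j<b$, and hence $a\le m_t<m_{t+1}<b$.

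\textbf{Grouping.} Assign $(a,b)$ to the gap index $t\in\{1,\dots,s-1\}$ found above. Let $E_t$ be the set of brackets assigned to $t$, and set $c_t=|E_t|$, so that $\sum_t c_t=|E|$. Every bracket in $E_t$ satisfies $b-a=(m_t-a)+(m_{t+1}-m_t)+(b-m_{t+1})$.

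\textbf{Quadratic bound per group.} By condition (ii), the left endpoints of the brackets in $E_t$ are pairwise distinct integers $\le m_t$. So the values $m_t-a$ are distinct nonnegative integers, and their sum is at least $0+1+\cdots+(c_t-1)=c_t(c_t-1)/2$. Likewise, the right endpoints are distinct integers $>m_{t+1}$, so the values $b-m_{t+1}$ are distinct positive integers, and their sum is at least $c_t(c_t+1)/2$. The middle term $m_{t+1}-m_t$ is at least $1$ for every bracket. Adding the three contributions gives
\[
  \sum_{(a,b)\in E_t}(b-a)\;\ge\; \tfrac{c_t(c_t-1)}{2}+\tfrac{c_t(c_t+1)}{2}+c_t\;=\;c_t^2+c_t\;\ge\;c_t^2 .
\]

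\textbf{Combining.} Summing over the $s-1$ groups and applying Cauchy--Schwarz gives
\[
  \|E\|\ge\sum_{t=1}^{s-1}c_t^2\ge\frac{|E|^2}{s-1}.
\]
Since $s\ge 2$, we have $s-1\le s/2$, and therefore
\[
  \frac{|E|^2}{s-1}\;\ge\;\frac{2|E|^2}{s}\;=\;2\frac{|E|^2}{|\sigma|}.
\]

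\textbf{Main obstacle.} The delicate step is the choice of assignment. Each bracket must be tied to a \emph{single} gap, so that the distinct-endpoint condition (ii) can be used on both sides of that same gap. Using the first marker $\ge a$ achieves this. By contrast, counting left and right endpoints against different markers separately only yields a constant of $1/2$ instead of $2$.

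The factor $2$ then comes from two sources. First, both endpoint sums are quadratic. Second, there are only $s-1\le s/2$ gaps rather than $s$ markers, and this slack absorbs the loss in the constant.
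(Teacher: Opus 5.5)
There is a genuine gap at the end of your argument. The inequality $s-1\le s/2$ holds only for $s\le 2$; for $s=3$ it reads $2\le 3/2$. What your chain actually establishes is $\|E\|\ge |E|^2/(|\sigma|-1)$. This falls short of $2|E|^2/|\sigma|$ by the factor $2(|\sigma|-1)/|\sigma|>1$ for every $|\sigma|\ge 3$, and the factor tends to $2$.

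This is not just a slip in the last line: your per-group estimate $\sum_t (c_t^2+c_t)$ is itself too weak to give the claim. Take $|\sigma|=3$ and $c_1=c_2=|E|/2$, which is realizable. Then the estimate equals $|E|^2/2+|E|$, which is below $2|E|^2/3$ as soon as $|E|>6$. The loss is in the middle term, where you only use $m_{t+1}-m_t\ge 1$.

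The missing idea is the interaction between consecutive groups, and this is exactly what the paper's proof uses. The paper groups brackets the same way you do, by the first marker $\ge a$. Under your assignment rule, every bracket of $E_{t+1}$ has its left endpoint in $(m_t,m_{t+1}]$. These endpoints are distinct by (ii), so $m_{t+1}-m_t\ge c_{t+1}$. Your per-group computation then gives
\[
  \|E\|\;\ge\;\sum_{t=1}^{s-1}\bigl(c_t^2+c_t\,c_{t+1}\bigr),\qquad c_s:=0,
\]
which is the same quadratic form as in the paper.

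Plain Cauchy--Schwarz on the $c_t$ does not bound this form by $2|E|^2/s$. The paper relaxes it to a cyclic form in $s$ variables and minimizes it with Lagrange multipliers. A shorter finish is available. Read indices modulo $s$, which is harmless since $c_s=0$. Then the form equals $\frac12\sum_{t=1}^{s}(c_t+c_{t+1})^2$, and Cauchy--Schwarz applied to the sums $c_t+c_{t+1}$ (whose total is $2|E|$) gives at least $\frac{1}{2s}\bigl(\sum_{t=1}^{s}(c_t+c_{t+1})\bigr)^2=\frac{2|E|^2}{s}$. So the factor $2$ comes from these cross terms, not from the number of gaps.
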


Before proving these propositions, let us see how Theorem~\ref{thm:upper-regime}, the main result of this section, follows from Propositions~\ref{prop:p2} and~\ref{prop:p3}.

\begin{proof}[Proof of Theorem~\ref{thm:upper-regime}]

  We prove the two bounds individualy.

  \medskip

  \noindent Let $n = |w|$.

  \medskip

  \noindent\textbf{Proof of $\CLZ{w} \le 4\frac{n^{2/3} k^{1/3}}{p_2}$}

  \medskip

  For each \LZ factor $P_j$ of $w$ with endpoints $[l_j,r_j]$ that contains at least two phrase-starts of $\mathsf{LZ77}(w')$, we associate a \emph{reference pair} $(l'_j,r'_j)$, which is a pair with the following properties:
  \[
    l'_j < r'_j < r_j-2
    \quad \text{and} \quad
    w[l'_j:r'_j] = w[l_j:r_j-1].
  \]
  $(*)$: Note that since every interval of this type contains two distinct phrase-starts, all reference pairs are distinct.

  Let
  $$E=\{(l'_j, r'_j):1\le j\le C_{LZ77}(w) \text{ and } (l'_j,r'_j) \text{ is defined} \}$$
  the set of all these reference pairs and
  $$\sigma = \{i_l: 1\leqslant l \leqslant k\}$$
  the set of the edited positions.

  For every $(l'_j,r'_j)\in E$ the interval $[l_j,r_j]$ contains at least two phrase‑starts of $\mathsf{LZ77}(w')$.  Hence there exists an edited position $i\in\sigma$ with $l'_j\le i<r'_j$; otherwise the interval would contain at most one phrase-start.  Together with the fact that all reference pairs have distinct endpoints, this shows that $(E,\sigma)$ is a \emph{P2-system} and we can apply \ref{prop:p2}; we get:

  \[\|E\| \geqslant\frac{|E|^{3/2}}{8\sqrt {k}}\]

  Moreover, by definition of $\|\cdot\|$, the sum of the lengths of all intervals $[l_j, r_j]$ that contains at least two phrase-starts of $\mathsf{LZ77}(w')$ is at least $\|E\|$. Therefore,
  \begin{align*}
    n
    &\ge
    \|E\|&\\
    &\ge
    \frac{|E|^{3/2}}{8\sqrt {k}}&\\
    &\ge
    \frac{(p_2C(w))^{3/2}}{8\sqrt {k}}
    & (\text{by }(*))
  \end{align*}
  which is equivalent to :
  \[C(w)\leqslant 4\frac{n^{2/3}k^{1/3}}{p_2}\]
  as wanted.

  \medskip

  \noindent\textbf{Proof of $\CLZ{w} \le \frac{k^{3/2} \sqrt{n}}{p_3\sqrt2}$}

  \medskip

  For each \LZ factor $P_j$ of $w$ with endpoints $[l_j,r_j]$ that contains at least three phrase-starts of $\mathsf{LZ77}(w')$, we associate a reference pair $(l'_j,r'_j)$ defined by
  \[
    \begin{aligned}
      l'_j &< r'_j \le r_j - 2 ,\\
      w[l'_j:r'_j] &= w[l_j:r_j-1] .
    \end{aligned}
  \]
  $(*)$: note that since every such interval contains three distinct phrase-starts, all reference pairs are distinct.

  Let
  \[
    E_{\text{ref}} = \{ (l'_j,r'_j) \mid 1 \le j \le C_{LZ77}(w) \text{ and } (l'_j,r'_j) \text{ is defined} \}
  \]
  be the set of these reference pairs, and let
  \[
    \sigma = \{ i_\ell \mid 1 \le \ell \le k \}
  \]
  be the set of edited positions.

  Contrary to the previous case, the set $E_{\text{ref}}$ is not automatically a P3-system, because some reference brackets may share an endpoint. To fix this, choose a maximal subset
  \[
    E \subseteq E_{\text{ref}}
  \]
  such that

  \[
    (**)\quad
    (a,b)\ne(a',b') \text{ in } E \Longrightarrow a\ne a' \text{ and } b\ne b'.
  \]

  $(E,\sigma)$ is now a P3‑system since:
  \begin{enumerate}[label=(\roman*)]
    \item For every $(l'_j,r'_j)\in E$, the interval $[l_j,r_j]$ contains at least three phrase‑starts of $\mathsf{LZ77}(w')$.  Hence there exist $i_1,i_2\in\sigma$ with $l'_j\le i_1<i_2<r'_j$; otherwise the interval would contain at most two phrase‑starts.
    \item Condition $(**)$ ensures that no two pairs in $E$ share an endpoint, satisfying the second requirement of the P3‑system definition.
  \end{enumerate}

  Now, we prove that $|E|\geqslant\frac{|E_{ref}|}{k-1}$.

  To see this, let $(l'_j, r'_j)< (l'_m, r'_m)$ in $ E_{ref}$ such that $l'_j = l'_m$ or $r'_j=r'_m$.
  \begin{itemize}
    \item Suppose $l'_j=l'_m$ and $r'_j<r'_m$. Then, $j<m$ because otherwise $[l_j, r_j]$ would contain at most one phrase-starts of $\mathsf{LZ77}(w')$. Moreover, there exists $i\in\sigma$ such that $r'_j<i<r'_m$ because otherwise $[l_m, r_m]$ would contain at most two phrase-starts of $\mathsf{LZ77}(w')$.
    \item Suppose $l'_j<l'_m$ and $r'_j=r'_m$. Then, $m<j$ because otherwise $[l_m, r_m]$ would contain at most one phrase-starts of $\mathsf{LZ77}(w')$. Moreover, there exists $i\in\sigma$ such that $l'_j<i<l'_m<$ because otherwise $[l_j, r_j]$ would contain at most two phrase-starts of $\mathsf{LZ77}(w')$.
  \end{itemize}
  Thus, for each couple $(a, b)\in E$ there are at most $k-2$ couples $(a', b')\in E_{ref}$ such that $a=a'$ or $b=b'$. Therefore, any couple in $E$ can only prevent $k-2$ couples of $E_{ref}$ from being added in $E$ while still complying with condition $(**)$. This implies
  \begin{align}
    |E|\geqslant\frac{|E_{ref}|}{k-1} \geqslant \frac{|E_{ref}|}{k}
  \end{align}

  Applying \ref{prop:p3} to the \emph{P3‐system} $(E, \sigma)$ gives
  \begin{align*}
    \|E_{ref}\|
    &\ge
    \|E\|
    &(\text{because }E\subset E_{ref})\\
    &\ge
    2\frac{|E|^2}{k}
    &(\text{by proposition }\ref{prop:p3})\\
    &\ge
    2\frac{(|E_{ref}|/k)^2}{k}
    &\text{since } |E| \geqslant \frac{|E_{\text{ref}}|}{k}
  \end{align*}
  Furthermore, by definition of $\|\cdot\|$, the sum of the lengths of all intervals $[l_j, r_j]$ that contains at least three phrase-starts of $\mathsf{LZ77}(w')$ is at least $\|E_{ref}\|$. Thus,
  \begin{align*}
    n
    &\ge
    \|E_{ref}\|&\\
    &\ge
    2\frac{(|E_{ref}|/k)^2}{ k}&\\
    &\ge
    2\frac{(p_3C(w))^2}{ k^3}
    & (\text{by }(*))
  \end{align*}
  which is equivalent to :
  \[C(w)\leqslant \frac{k^{3/2}\sqrt{n}}{p_3\sqrt 2}\]
  as wanted.
\end{proof}

\noindent We now prove propositions \ref{prop:p2} and \ref{prop:p3}
\begin{proof}[Proof of Proposition~\ref{prop:p2}]

  Since $(E,\sigma)$ is a P2-system, for every $(a,b)\in E$ we can associate an
  inner marker
  \[
    l_{(a,b)}\in\sigma
    \quad\text{with}\quad
    a\le l_{(a,b)}<b .
  \]
  For each $l\in\sigma$ with $i = |\{l'\in \sigma : l'\le l\}|$ we define
  \[
    E_i=\{e\in E : l_e=l\},
    \qquad
    n_i=|E_i|
    \qquad\text{and}\qquad
    \|i\|_E=\sum_{(a, b)\in E_i}(b-a) .
  \]

  First, for $j\ge 0$, let $n_{i,j}$ be the number of elements in $E_i$ for which the left bracket is at distance $j$ from the marker:
  $$n_{i,j}=|\{(a,b)\in E_i : a=i-j\}|.$$

  Observe that by definition we have $(*)$: $\sum_{j\ge0}n_{i,j}=n_i$.
  Two brackets with the same $j$ differ in their right endpoints and leave
  at least one domain point between them, hence:
  \begin{align*}
    \|i\|_E
    &\ge
    \sum_{(a, b)\in E_i}(b-a)\\
    &=
    \sum_{j\ge0}\sum_{b:(i-j, b)\in E_i}(j+b-i)\\
    &\ge
    \sum_{j\ge0} n_{i,j}\left(\frac{n_{i,j}}2+j\right)
  \end{align*}

  Combining with claim ~\ref{cl:opt_p2} which applies thanks to $(*)$ yields
  \[
    \|i\|_E \ge \frac{n_i^{3/2}}8.
  \]

  Finally, since the classes $E_i$ partition $E$,
  \[
    \|E\|
    =
    \sum_{i\in\sigma}\|i\|_E
    \ge
    \frac 1 8 \sum_{i\in\sigma} n_i^{3/2}.
  \]
  Apply Hölder's inequality with exponents $p=3/2$ and $q=3$:
  \[
    \sum_{i\in\sigma} n_i^{3/2}
    \ge
    |\sigma|^{-1/2}\Bigl(\sum_{i\in\sigma} n_i\Bigr)^{3/2}
    =
    |\sigma|^{-1/2} |E|^{3/2}.
  \]
  Therefore
  \[
    \|E\| \ge \frac{|E|^{3/2}}{8\sqrt{|\sigma|}},
  \]
  as wanted.
\end{proof}

\begin{claim}
  \label{cl:opt_p2}
  For $(n_j)_{j\in\nn}\in\nn^\nn$,

  $$\text{if }\sum_{j\ge0}n_j = n\quad\text{ then }\sum_{j\ge0} n_{j}\left(\frac{n_{j}}2+j\right)\ge \frac{n^{3/2}}8$$
\end{claim}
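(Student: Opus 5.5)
We have nonnegative integers $n_j$ indexed by $j\ge 0$ with $\sum_j n_j=n$, and must show $S:=\sum_j n_j(n_j/2+j)\ge n^{3/2}/8$. The plan is a threshold split. Fix a threshold $T=\lfloor\sqrt{n}\rfloor$, or more conveniently a real parameter $T>0$ to be chosen at the end. Split the mass $n$ into the part at small distances, $A=\sum_{j<T}n_j$, and the part at large distances, $B=\sum_{j\ge T}n_j$, so that $A+B=n$ and at least one of them is $\ge n/2$.

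\textbf{Case $B\ge n/2$.} Use only the linear term. Every $j\ge T$ contributes $n_j\, j\ge n_j T$, hence $S\ge \sum_{j\ge T} n_j j\ge TB\ge Tn/2$.

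\textbf{Case $A\ge n/2$.} Use only the quadratic term on the at most $\lceil T\rceil$ indices $j<T$. By Cauchy--Schwarz,
\[
\sum_{j<T} n_j^2\ \ge\ \frac{A^2}{\lceil T\rceil},
\qquad\text{so}\qquad
S\ \ge\ \frac{A^2}{2\lceil T\rceil}\ \ge\ \frac{n^2}{8\lceil T\rceil}.
\]

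\textbf{Balancing.} Take $T=\sqrt{n}$, and assume $n\ge1$ (the case $n=0$ is trivial). The first case gives $S\ge n^{3/2}/2$. For the second case, the indices $j<\sqrt n$ are $0,\dots,\lceil\sqrt n\rceil-1$, so there are $\lceil\sqrt n\rceil\le \sqrt n+1\le 2\sqrt n$ of them. Hence
\[
S\ \ge\ \frac{n^2}{16\sqrt n}\ =\ \frac{n^{3/2}}{16},
\]
which is a factor $2$ too weak.

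\textbf{Main obstacle: recovering the constant $1/8$.} The balancing step is the one place where constants matter, and there are two fixes to try.

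The first fix is to change the threshold. Take $T=c\sqrt n$ and retain the count $\lceil T\rceil\le T+1$. The two cases give $S\ge c\,n^{3/2}/2$ and $S\ge n^2/\bigl(8(c\sqrt n+1)\bigr)$. For $n$ large enough, $c=1/2$ gives roughly $n^{3/2}/4$ in both cases, which is more than enough. Small $n$ should then be checked directly: the term $n_0^2/2$ alone already handles these, since if $n_0$ carries most of the mass the quadratic term dominates, and otherwise the linear terms do.

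The cleaner alternative avoids the case split altogether by a mixed convexity argument. For any real $\lambda\ge0$ and $j\ge0$, the inequality $n_j/2+j\ge\lambda$ can be paired with $n_j^2/2\ge \lambda n_j-\lambda^2/2$, which is just $(n_j-\lambda)^2\ge 0$, applied only to the indices $j<\lambda$. For $j\ge\lambda$, use $n_j j\ge \lambda n_j$ instead. Summing over all $j$,
\[
S\ \ge\ \lambda n-\lceil\lambda\rceil\frac{\lambda^2}{2}.
\]
Choosing $\lambda\approx\sqrt{n}$ then gives $S\gtrsim n^{3/2}/2-O(n)$, so $S\ge n^{3/2}/8$ once $n\ge 4$ or so, with the few remaining small values of $n$ verified by hand.

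I would present this second argument, since it yields the bound directly with room to spare.
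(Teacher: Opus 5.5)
Your second (Lagrangian) argument is correct and complete, and it takes a genuinely different route from the paper.

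The paper works with an actual minimizer $g$ of the objective. An exchange step shows that $g$ vanishes beyond $2\sqrt n$: moving one unit from an index $j>2\sqrt n$ to an index $j'<\sqrt n$ with $g(j')<\sqrt n$ (which exists by pigeonhole) strictly lowers the objective. The paper then drops the linear terms and applies Jensen to the quadratic terms on roughly $2\sqrt n$ indices.

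You never look at a minimizer. You bound each summand below by a linear function of $n_j$, namely $\lambda n_j-\lambda^2/2$ for $j<\lambda$ and $\lambda n_j$ for $j\ge\lambda$, and sum to get $S\ge\lambda n-\lceil\lambda\rceil\lambda^2/2$. With $\lambda=\sqrt n$ this reads $S\ge n^{3/2}/2-n/2$, which is at least $n^{3/2}/8$ for every $n\ge 2$. So the ``small cases'' you defer are only $n\le 1$, and those are immediate.

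This is a weak-duality bound, and it buys several things:
\begin{itemize}
\item It avoids the existence of a minimizer and the exchange computation.
\item It avoids counting how many integer indices lie below $2\sqrt n$. The paper writes $2\sqrt n$ for a count of $\lfloor 2\sqrt n\rfloor+1$, which its slack happens to absorb.
\item It gives the better asymptotic constant $1/2$.
\end{itemize}

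It can even be made sharp. For $j<\lambda$, use the exact loss from $(n_j-(\lambda-j))^2\ge 0$, namely $n_j^2/2+jn_j\ge\lambda n_j-(\lambda-j)^2/2$, instead of $\lambda^2/2$. Taking $\lambda=\sqrt{2n}$, the same summation gives $\frac{2\sqrt2}{3}n^{3/2}-O(n)$. This matches the value of the profile $n_j\approx\sqrt{2n}-j$, so it is asymptotically optimal.

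The paper's route instead yields structural information about minimizers (bounded support), which is not needed for this claim. Your first, threshold-split approach also goes through with $c=1/2$, but the second is the cleaner one to present.
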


\begin{proof}
  For $g : j \mapsto   n_{j}$, let
  $$f(g) = \sum_{j\ge0} g(j)\left(\frac{g(j)}2+j\right)$$
  Let $g$ be the fonction such that $\sum_{j\ge0}g(j) = n$ and $f(g)$ is minimal. Let $j>2\sqrt{n}$. Suppose $g(j)\neq0$. Then, by the pigeonhole principle, there exists $j'<\sqrt{n}$ such that $g({j'})<\sqrt{n}$. Define
  $$
  g'(i)=
  \begin{cases}
    g(i)+1 \text{ if } i = j',\\
    g(i)-1 \text{ if } i = j,\\
    g(i) \text{ otherwise.}
  \end{cases}
  $$
  Then,
  \begin{align*}
    f(g')-f(g)
    &=
    (g(j')+1)\left(\frac{g({j'})+1}2+j'\right)-g({j'})\left(\frac{g({j'})}2+j'\right)\\
    &+(g(j)-1)\left(\frac{g(j)-1}2+j\right)-g({j})\left(\frac{g({j})}2+j\right)\\
    &=
    \left(\frac{g({j'})+1}2+j'\right) + \frac{g(j')}2 -\left(\left(\frac{g({j})}2+j\right)+\frac{g(j)}2-1\right) \\
    &<
    \left(\frac{\sqrt n}2+\sqrt n\right) + \frac{\sqrt n}2 - 2\sqrt n\qquad\text{(by definition of $j$ and $j'$)}\\
    &=
    0
  \end{align*}
  This contradicts the minimality of $f(g)$. Hence, $(*)$: for $j>2\sqrt{n}$, $g(j)=0$. To conclude, for $(n_j)_{j\in\nn}\in\nn^\nn$ such that $\sum_{j\ge0}n_j = n$
  \begin{align*}
    \sum_{j\ge0} n_j\left(\frac{n_j}2+j\right)
    &\ge
    \sum_{j\ge0} g(j)\left(\frac{g(j)}2+j\right)
    &(\text{by minimality of }g)\\
    &\ge
    \sum_{j=0}^{2\sqrt{n}}\frac{g({j})^2}2
    &(\text{by }(*))\\
    &\ge
    2\sqrt{n}\left(\frac{\sum_{j=0}^{2\sqrt{n}} g({j})}{4\sqrt{n}}\right)^2
    &(\text{by Jensen inequality})\\
    &=
    \frac{n^{3/2}}8
    &(\text{because of the constraint})
  \end{align*}
\end{proof}

\begin{proof}[Proof of Proposition~\ref{prop:p3}]
  Since $(E,\sigma)$ is a P3-system, for every $(a,b)\in E$ we can associate its first inner marker, denoted by $i_{(a,b)}$. In symbols:
  \[
    i_{(a,b)}\in\sigma
    \quad\text{with}\quad
    a\le i_{(a,b)}<b
    \quad\text{and}\quad
    \text{for }
    i\in\sigma,
    a\le i \Rightarrow i_{(a,b)} \le i
  \]
  For each $i\in\sigma$, we define
  \[
    E_i=\{e\in E : i_e=i\}
    \qquad\text{and}\qquad
    n_i=|\{e\in E : i_e=i\}|.
  \]
  Note that $n_k=0$ because of condition (i) in the definition of \emph{P3-systems}. This value can thus be ignored, and attention is restricted to the first $k-1$ values of $n_i$.

  Observe that for $(a, b)\in E$,
  $$(b-a)\ge |\{c\in \mathsf{dom}(E), a< c< b\}|$$
  where  $\mathsf{dom}(E) = \{a: \exists b, (a, b)\in E \text{ or } (b, a)\in E\}$.

  This observation gives that a lower bound on the number of brackets in $[a, b]$ gives a lower bound on $(b-a)$. The goal is to give a lower bound of $\|E\|$ based on the $n_i$. To that purpose, two effects are taken into consideration:
  \begin{itemize}
    \item the effect of the brackets associated with a same marker. It gives that the sum of the number of brackets of $E_i$ in $[a, b]$ for $(a, b)\in E_i$ is at least $n_i^2$ thanks to condition (ii) in the definition of \emph{P3-systems};
    \item the effect of the brackets associated with the next marker. For $(a', b')\in E_{i+1}$ and $(a, b)\in E_i$, $a<a'<b$ because of condition (i) in the definition of \emph{P3-systems}. This adds $n_{i+1}$ to the number of brackets between a couple of brackets of $E_i$ because of conditions (i) and (ii) of \emph{P3-systems}. It sums up to $n_in_{i+1}$. Note that as $n_k=0$, this term does not appear for $i=k-1$.
  \end{itemize}

  In the end:
  $$\|E\|\ge\sum_{i=1}^{k-2}\left(n_in_{i+1}+{n_i^2}\right)+{n_{k-1}^2},$$

  \noindent The new goal is to provide a lower bound on the following objective function:
  $$\sum_{i=1}^{k-2}\left(n_in_{i+1}+{n_i^2}\right)+{n_{k-1}^2}$$
  under the constraint
  $$\sum_{i=1}^{k-1}n_i=n$$
  with $n=|E|$.

  To do this, we use the method of Lagrange multipliers, but before applying the method, the optimization problem is relaxed slightly to make the expression symmetric in the $n_i$; this simplifies the proof while preserving the lower bound. To that purpose, note that for any set of values given to the $n_i$ that comply with the constraint,
  $$\sum_{i=1}^{k-2}\left(n_in_{i+1}+{n_i^2}\right)+{n_{k-1}^2} = \sum_{i=1}^{k-1}\left(n_in_{i+1}+{n_i^2}\right)+{n_{k}^2}+n_{k}n_1$$
  where the last variable satisfies $n_{k}=0$ (note that $n_k$ does not appear on the left-hand side of the equality).

  Therefore, giving a lower bound to the following new relaxed optimization problem
  $$f(n_1, \cdots, n_{k})=\sum_{i=1}^{k-1}\left(n_in_{i+1}+{n_i^2}\right)+{n_{k}^2}+n_{k}n_1$$
  under the constraints
  $$g(n_1, \cdots, n_{k})=\sum_{i=1}^{k}n_i-n=0$$
  gives in particular a lower bound for the previous system.

  Applying the Lagrange multiplier theorem:\\
  if $f$ admits an extremum on $\{(n_1, \cdots, n_{k}) : g(n_1, \cdots, n_{k})=0\}$ at $(n_1, \cdots, n_{k})$, then there exists $\lambda \in \reals$ such that:
  $$
  \begin{cases}
    \nabla_{n_1, \cdots, n_{k}}f(n_1, \cdots, n_{k})=-\lambda\nabla_{n_1, \cdots, n_{k}}g(n_1, \cdots, n_{k})\\
    \qquad\text{and}\\
    g(n_1, \cdots, n_{k})=0.
  \end{cases}$$
  Furthermore,
  \begin{align*}
    &\nabla_{n_1, \cdots, n_{k}}f(n_1, \cdots, n_{k})=-\lambda\nabla_{n_1, \cdots, n_{k}}g(n_1, \cdots, n_{k})\\
    \Leftrightarrow&
    \begin{bmatrix}
      2       & 1       & 0       & \cdots  & 0       & 1 \\
      1       & 2       & 1       & \ddots  &         & 0 \\
      0       & 1       & \ddots  & \ddots  & \ddots  & \vdots\\
      \vdots  & \ddots  & \ddots  & \ddots  & 1       & 0 \\
      0       &         & \ddots  & 1       & 2       & 1 \\
      1       & 0       & \cdots  & 0       & 1       & 2
    \end{bmatrix}
    \begin{bmatrix}
      n_1\\
      n_2\\
      \vdots\\
      \vdots\\
      \vdots\\
      n_{k}\\
    \end{bmatrix}
    =-\lambda
    \begin{bmatrix}
      1\\
      1\\
      \vdots\\
      \vdots\\
      \vdots\\
      1\\
    \end{bmatrix}
  \end{align*}

  Let $C$ be the matrix on the left of the equation. Observe that $C$ is a non-singular circulant matrix, hence its inverse is also a circulant matrix and in particular each row of $C^{-1}$ sums to a same constant that is denoted by $c\in \reals$. Consequently, all coordinates of $n_i$ are equal:

  $$
  \begin{bmatrix}
    n_1\\
    \vdots\\
    n_{k}\\
  \end{bmatrix}
  =-\lambda c
  \begin{bmatrix}
    1\\
    \vdots\\
    1\\
  \end{bmatrix},
  $$

  and coupling this observation with the constraint $\sum_{i=1}^{k}n_i=n$ provides

  $$\forall 1\le i\le k,\quad n_i=\frac{n}{k}.$$

  To conclude, substituting $n_i = \frac n k$ into the objective function $f$ gives the target lower bound:
  $$\sum_{i=1}^{k}2\left(\frac{n}{k}\right)^2=2\frac{n^2}{k}$$
\end{proof}

\subsection{Lower bound from compression ratio and few edits}
\label{sec:lb-two}

The goal of this section is to complement Corollary~\ref{cor:trichotomy} with corresponding lower bounds, stated in Proposition~\ref{prop:trichotomy-lower}.

\TrichotomyLower*

\subsubsection{First point of Proposition~\ref{prop:trichotomy-lower}}
The idea is to build a word $w$ whose \LZ parse is dominated by a large table of distinct short blocks that are all
copied from an initial header, so each table cell costs one phrase. A single-symbol change
in the header then destroys these long matches and forces almost all cells to be parsed into
two much shorter phrases, yielding an almost factor-$2$ blow-up.

\paragraph{The construction.}
Fix integers $m\ge 1$ and $k\ge 1$ (to be set from $n,k,\eta$ later).
For $1\le \ell\le k$, let all symbols
\[
  \{\#_{i,j,\ell}\colon 0\le i,j\le m\}\cup\{a_\ell,b_\ell,c_\ell,\#_\ell\}
\]
be distinct and also disjoint across different $\ell$'s. Define
\[
  u_{i,j,\ell}=a_\ell^i\,b_\ell\,a_\ell^j\,\#_{i,j,\ell},\qquad
\]\[
  u_\ell=a_\ell^m\,b_\ell\,a_\ell^m\,\#_\ell\cdot\prod_{i=0}^{m}\prod_{j=0}^{m}u_{i,j,\ell}\qquad\text{and}\qquad
  u'_\ell=a_\ell^m\,c_\ell\,a_\ell^m\,\#_\ell\cdot\prod_{i=0}^{m}\prod_{j=0}^{m}u_{i,j,\ell}.
\]
Finally set
\[
  u=\prod_{\ell=1}^{k}u_\ell\qquad\text{and}\qquad
  u'=\prod_{\ell=1}^{k}u'_\ell.
\]
Observe that $u'$ is obtained from $u$ by exactly $k$ edits (one substitution $b_\ell\to c_\ell$ per block).

\paragraph{The parses.}
Within each block $\ell$, the header $a_\ell^m b_\ell a_\ell^m \#_\ell$ makes every
$u_{i,j,\ell}=a_\ell^i b_\ell a_\ell^j \#_{i,j,\ell}$ appear later as an exact copy, hence
each $u_{i,j,\ell}$ contributes one phrase to $LZ77(u)$. Replacing $b_\ell$ by $c_\ell$
in the header removes that exact earlier occurrence; each $u_{i,j,\ell}$ must then be parsed
as two phrases.
Formally, for every $1\le \ell\le k$,
\begin{align*}
  LZ77(u_\ell)
  =& LZ77(a_\ell^m b_\ell a_\ell^m \#_\ell)
  \mid u_{0,0,\ell} \mid u_{0,1,\ell} \mid \cdots \mid u_{m,m,\ell},\\[2mm]
  LZ77(u'_\ell)
  =& LZ77(a_\ell^m c_\ell a_\ell^m \#_\ell)\\
  &\mid b_\ell \mid \#_{0, 0, \ell}
  \mid b_\ell a_\ell \mid \#_{0, 1, \ell}
  \mid \cdots
  \mid b_\ell a_\ell^m \mid \#_{0, m, \ell}\\
  &\mid a_\ell b_\ell \mid \#_{1, 0, \ell}
  \mid a_\ell b_\ell a_\ell \mid \#_{1, 1, \ell}
  \mid \cdots
  \mid a_\ell b_\ell a_\ell^m \mid \#_{1, m, \ell}\\
  &\mid \cdots\cdots\\
  &\mid a_\ell^mb_\ell \mid \#_{m, 0, \ell}
  \mid a_\ell^mba_\ell \mid \#_{m, 1, \ell}
  \mid \cdots
  \mid a_\ell^mb_ka_\ell^m \mid \#_{m, m, \ell}\\
\end{align*}
Because the alphabets are disjoint across $\ell$ and each $u_i$ ends with a complete block,
the concatenations preserve parse independence.
\[
  LZ77(u)=LZ77(u_1)\mid LZ77(u_2)\mid\cdots\mid LZ77(u_k)
  \quad\text{and}\quad
  C(u)=\sum_{\ell=1}^k C(u_\ell),
\]
and the same holds for $u'$.

Observe also that
\[
  C(a_\ell^m b_\ell a_\ell^m \#_\ell)=C(a_\ell^m c_\ell a_\ell^m \#_\ell)=O(\log m)=\underset{m\shortrightarrow\infty}{o}(m),
\]
so the headers have negligible cost.

\paragraph{Counting length and number of phrases.}
For each $\ell$,
\[
  |u_\ell|
  = |a_\ell^m b_\ell a_\ell^m \#_\ell|
  + \sum_{i=0}^{m}\sum_{j=0}^{m}|u_{i,j,\ell}|
  = (2m+2)+\sum_{i=0}^{m}\sum_{j=0}^{m}(2+i+j)
  \underset{m\shortrightarrow\infty}{\sim} m^3.
\]
Hence
\[
  |u|=|u'|
  \underset{m\shortrightarrow\infty}{\sim} k m^3 .
\]
For the parse sizes we get, using the displays above,
\[
  C(u)\underset{m\shortrightarrow\infty}{\sim} k m^2
  \quad\text{and}\quad
  C(u')\underset{m\shortrightarrow\infty}{\sim} 2 k m^2
\]
hence
\[
  \frac{C(u')}{C(u)}\xrightarrow[m\to\infty]{} 2.
\]
Equivalently, for every fixed $0<\eta<1$ there exists $m_0(\eta)$ such that for all $m\ge m_0(\eta)$,
\begin{equation}
  \label{eq:r-two-eta}
  \frac{C(u')}{C(u)}\ge 2-\eta.
\end{equation}
Finally,
\[
  C(u)\ge k(m+1)^2\underset{m\shortrightarrow\infty}{\sim}  km^2\underset{m\shortrightarrow\infty}{\sim} k^{1/3}\,|u|^{2/3},
\]
as desired.

\subsubsection{Second point of Proposition~\ref{prop:trichotomy-lower}}
We now prove the second point of Proposition~\ref{prop:trichotomy-lower} by constructing a second family exhibiting an almost factor-3 blow-up.

\paragraph{Construction.}
Fix integers $p,k'\ge1$.
For each $1\le \ell\le k'$, let the symbols $a_\ell,b_\ell,c_\ell$ be distinct and unique across $\ell$.
Define
\[
  u_{\ell, i}=a_\ell^i\,b_\ell b_\ell\,a_\ell^{i+1},
  \qquad
  v=\Bigl(\prod_{\ell=1}^{k'}u_{l, p}\Bigr)\#
  \quad\text{and}\quad
  v'=\Bigl(\prod_{\ell=1}^{k'}a_\ell^p\,c_\ell c_\ell\,a_\ell^{p+1}\Bigr)\#,
\]
so $v'$ differs from $v$ by exactly $k=2k'$ substitutions ($b_\ell\!\to c_\ell$ twice per factor).

We group the $u_{l, i}$ as
\[
  u_{\ell} = \prod_{i=0}^{p} u_{\ell,i}\#_{\ell, i}
\]
and we define
\[
  u = v \cdot \prod_{\ell=1}^{k'} u_{\ell} \qquad \text{and}\qquad
  u' = v' \cdot \prod_{\ell=1}^{k'} u_{\ell}.
\]
Each $u_{l, i}$ family uses disjoint alphabets and fresh separators $\#_{\ell,i}$.

\paragraph{\LZ parses.}
In $u$, the prefix $v$ ensures that every
$u_{\ell,i}$ reoccurs as a single match:
each is copied as one phrase after the phrase covering $v$.
Thus
\begin{align*}
  LZ77(u) =
  & LZ77(v)\\
  &\mid u_{1, 0} \mid \cdots \mid u_{1, p}  \\
  &\mid u_{2, 0} \mid \cdots \mid u_{2, p}  \\
  &\mid \cdots\cdots \\
  &\mid u_{k', 0} \mid \cdots \mid u_{k', p}\\
\end{align*}
while the parsing of $u'$ looks like this
\begin{align*}
  LZ77(u') =
  & LZ77(v')\\
  &\mid b_1 \mid b_1 a_1 \mid \#_{1, 0} \mid \cdots \mid a_1^p b_1 \mid b_1 a_1^{p+1} \mid \#_{1,p} \\
  &\mid b_2 \mid b_2 a_2 \mid \#_{2, 0} \mid \cdots \mid a_2^p b_2 \mid b_2 a_2^{p+1} \mid \#_{2,p} \\
  &\mid \cdots\cdots \\
  &\mid b_{k'} \mid b_{k'} a_{k'} \mid \#_{k', 0} \mid \cdots \mid a_{k'}^p b_{p} \mid b_{p} a_{k'}^{p+1} \mid \#_{k',p} \\
\end{align*}

\paragraph{Counting length and phrases.}
We compute
\begin{align*}
  |u| = |u'|  \sum_{l=1}^{k'}(2p+3) + \sum_{l=1}^{k'}\sum_{i=0}^{p}(2i+4) \underset{p\shortrightarrow\infty}{\sim} p^2k'\\
\end{align*}
Observe also that
$$C(v)=C(v')\underset{p\shortrightarrow\infty}{\sim}k' \log(2p) = \underset{p\rightarrow\infty}{o}(p k')$$
so the headers will have negligible cost in front of the rest of the parsing.

For the parse sizes we get, using the displays above,
\begin{align*}
  C(u) = C(v) + pk' \underset{p\shortrightarrow\infty}{\sim} pk'\quad\text{and}\quad
  C(u') = C(v') + 3pk' \underset{p\shortrightarrow\infty}{\sim} 3pk'\\
\end{align*}
Hence, for every $\eta>0$ there exist $p$ large enough such that
\[
  \frac{C(u')}{C(u)}\ge 3-\eta.
\]
Finally, observe that

\begin{align*}
  C(u) &\underset{p\shortrightarrow\infty}{\sim} pk'
  \underset{p\shortrightarrow\infty}{\sim} k'\sqrt{|u|/k'}
  \underset{p\shortrightarrow\infty}{\sim}\sqrt{\frac{k|u|}2}\\
\end{align*}
as wanted.

\section{$\varepsilon$-approximation algorithm for \KMODIFS}\label{sec:algorithm}

While the previous sections showed that \LZ is quite robust to multiple perturbations ($k$ edits contribute only an additive term linear in $k$) they also showed that \LZ can still be sensitive to a small number of edits: just two modifications may increase or decrease the compression ratio by up to a factor of three. This raises a natural algorithmic question: can this sensitivity be exploited to improve compression?

To address this question, we introduce the \KMODIFS problem, defined as follows.

\begin{itemize}
  \item Input: a word $w$ and a proportion $p \in [0,1]$.

  \item Output: Does there exist a word $w'$ at distance $k$ from $w$ such that at least a proportion $p$ of the blocks in $\LZ(w')$ contain three or more starting positions of blocks from $\LZ(w)$?
\end{itemize}

\begin{remark}
  A naive algorithm enumerates all $k$-tuples of possible edits in $w$, and for each candidate computes its $\LZ$ parsing.
  This yields a time complexity of $O(|w|^{k+1})$.
\end{remark}

Our goal is to improve upon this brute-force approach.
More precisely, we prove Theorem~\ref{theorem:algorithm}.
\Algorithm*

\subsection{The case of two modifications}

Before presenting the general algorithm for arbitrary~$k$, we first examine in this section the case $k = 2$. This simpler setting serves as a warm-up and highlights the main ideas that we then generalize naturally to larger~$k$.
Remember that the goal here is to beat the naive cubic algorithm for \KMODIFS when $k=2$. In this special case, we are able to design an exact algorithm with quadratic time complexity.

\AlgorithmTwoEdits*

\paragraph{Structure of positive instances.}
To design the algorithm, we first identify structural properties of positive
instances of the problem.
Let $w'$ be a word obtained from $w$ by two modifications such that a proportion $p$ of the blocks in $LZ77(w')$ contains three or more starting positions of blocks from $LZ77(w)$.
We will say that a block in $\LZ(w')$ is \emph{split into $x$ blocks in $\LZ(w)$}
if it contains $x$ distinct starting positions of blocks in $\LZ(w)$.
Let $B$ be the set of blocks in $\LZ(w')$ that are \emph{split} into exactly three blocks
in $\LZ(w)$. We analyze the situation backward, that is, starting from $w'$
to $w$.

A first observation is that every $b \in B$ must reference both modified positions.
If a block reference misses one of them, its image in $w$ can be split into
at most two parts : it is the maximum number of cuts caused by a single modification.

Lemma~\ref{lem:two-struct} captures a stronger structural property
that will play a central role in the algorithm.

\begin{lemma}\label{lem:two-struct}
  Let $w'$ be at Hamming distance~$2$ from $w$, and let
  $B$ be the set of blocks of $\LZ(w')$ that are split into exactly three blocks
  in $\LZ(w)$.
  For any $b,b' \in B$ with $b'$ appearing after $b$, let
  $b_1,b_2,b_3$ (resp.\ $b'_1,b'_2,b'_3$) denote the three blocks of $\LZ(w)$
  whose starting positions lie in $b$ (resp.\ $b'$).

  Then the concatenations of the first two blocks satisfy the strict nesting
  \[
    b_1b_2 \subset b'_1b'_2.
  \]
\end{lemma}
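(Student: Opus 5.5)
The plan is to make precise the "backward" picture from the paragraph before the lemma, and then use the greedy maximality of \LZ on $w$ to compare two blocks of $B$. Let $i_1<i_2$ be the two edited positions. Take $b\in B$ and write $b=w'[s:e]$. Since $b$ is a phrase of $\LZ(w')$, the string $w'[s:e-1]$ has an earlier occurrence $w'[x:x+e-1-s]$ in the already parsed prefix. By the observation preceding the lemma, this source occurrence must cover both $i_1$ and $i_2$. Otherwise the corresponding text of $w$ could be copied in at most two pieces, exactly as in the case $n_j=0$ of Proposition~\ref{prop:interval-bound}.

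The first step is to pin down the three $\LZ(w)$ phrase starts inside $b$. They are the start of $b_1$ and two cut points. I would show that $b_2$ starts at or right after the image $s+(i_1-x)$ of the first edit, and $b_3$ at or right after the image $s+(i_2-x)$ of the second. Reason: in $w$, the segment of $b$ lying strictly between two consecutive edit images still occurs at the shifted position in $w$, so greedy parsing cannot cut there more than once. Counting gives exactly three starts only if each of the two edit images forces one cut.

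The consequence I aim for is a description of $b_1b_2$ in terms of the source. Namely, $b_1b_2$ is a factor of $w$ that spans the whole middle string $u=w[i_1+1:i_2-1]$ and extends it to the left by some amount $\lambda_b\ge 0$ and to the right by a bounded amount. Moreover, $b_1b_2$ is \emph{not} a factor of the part of $w$ preceding it; this is the maximality of the greedy phrase $b_2$, since otherwise $b_1b_2$ could be merged.

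Now take $b'\in B$ appearing after $b$. Its source also covers $i_1$ and $i_2$, so $b'_1b'_2$ likewise contains $u$ together with its own left context, of length $\lambda_{b'}$. By construction of the occurrences, the two strings $b_1b_2$ and $b'_1b'_2$ are the same window around $u$ with different left extensions. So one of them is a factor of the other: the one with the longer left extension contains the one with the shorter. It remains to show $\lambda_{b'}>\lambda_b$. Suppose instead $\lambda_{b'}\le\lambda_b$. Then $b'_1b'_2\factor b_1b_2$, and $b$ lies entirely before $b'$ in $w$ (unedited there, or handled by the same shift argument). Hence $b'_1b'_2$ would already occur in the prefix of $w$ before $b'_1$. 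This contradicts the maximality of the greedy phrase $b'_2$ established above. Equality of the two strings is excluded by the same argument, which gives the strictness in $b_1b_2\subset b'_1b'_2$.

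The main obstacle is the first step: proving that the cuts in $w$ sit \emph{exactly} at the edit images. This is what makes both $b_1b_2$ and $b'_1b'_2$ "the same window around $u$" and so comparable by inclusion. Boundary effects could shift a cut by one or let $b_1$ start before $s$. These include the final literal character of each phrase, and the possibility that the greedy parse of $w$ finds a different, longer earlier occurrence. To control this, I would use the fresh-symbol convention from Definition~\ref{def:jump-function}, under which matches cannot run through an edited position. I would then argue with the fixed occurrence $x$ together with the prefix-closedness of the set of factors of a prefix.
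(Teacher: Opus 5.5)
Your first step is false, and everything after it depends on it. You anticipated the difficulty yourself, but it cannot be controlled. Write $y_1,y_2$ for the images of $i_1,i_2$ inside $b$. The counting you invoke (as in the case $n_j=0$ of Proposition~\ref{prop:interval-bound}) only shows that each of $[s,y_1]$, $[y_1+1,y_2]$, $[y_2+1,e]$ contains exactly one start of $\LZ(w)$. It says nothing about \emph{where} in these intervals the starts lie. The greedy parse of $w$ may copy from occurrences other than the shifted source and run far past an edit image.

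The paper's own construction shows this. Take $v,v'$ from the proof of Proposition~\ref{prop:lower-general} with roles exchanged: $v$ plays $w'$ and $v'$ plays $w$. The block $b=c^iddc^{i+1}e_i$ of $\LZ(v)$ is copied from the header, where the two edits are the two $d$'s. In $\LZ(v')$ it is cut as $c^id\mid dc^{i+1}\mid e_i$. So $b_3$ starts at the last letter of $b$, $i+2$ positions after $y_2$, and $b_1b_2=c^iddc^{i+1}$ reaches $i+1$ letters beyond the second edit image. The long match $dc^{i}$ comes from the earlier block $v_{i-1,i-1}$, so a fresh-symbol convention would not prevent it. In any case that convention is unavailable here, since $w$ is a given input.

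Hence $b_1b_2$ and $b'_1b'_2$ are windows around $u$ that differ at \emph{both} ends. Your step ``same window with different left extensions, hence one contains the other'' has no basis. Even the one-position slack you allow at the right end would already break it.

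What is missing is a way to compare the two windows without locating the cuts. The paper (tersely) goes through $b$ itself: $b_1b_2$ lies inside $b$, and the content of $b$ lies strictly inside $b'_1b'_2$. Otherwise the copy of $b$ already present in $w$ would let the region of $b'$ be parsed in at most two pieces.

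To make this precise, let $\hat b,\hat b'$ be $b,b'$ without their last letters. Let their source windows in $w'$ be $[i_1-L,\,i_2+R]$ and $[i_1-L',\,i_2+R']$.
\begin{enumerate}
\item Suppose $L'\le L$ (resp.\ $R'\le R$). Then the region of $b'$ in $w$ splits into two pieces. One is a factor of $\hat b$, which occurs in $w$ at $b$. The other is a piece of the source of $b'$ lying right of $i_2$ (resp.\ left of $i_1$), hence unedited.
\item Both pieces occur earlier in $w$, so the region of $b'$ contains at most two starts, a contradiction. Hence $L<L'$ and $R<R'$.
\item Now split $b'$ into three segments: the left margin, the copy of $\hat b$, and the right margin plus the last letter. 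Each carries at most one start, and there are three starts, so each carries exactly one.
\item So $b'_1$ starts before the copy of $\hat b$ and $b'_3$ starts after it. Since all three starts of $b$ lie in $[s,e]$, this gives $b_1b_2\factor\hat b\sqsubset b'_1b'_2$.
\end{enumerate}

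Your final step derives direction and strictness from the fact that $b'_1$ cannot occur earlier. That step is sound once comparability is known, but this route makes it unnecessary. Note that both your argument and this one tacitly assume that neither $b$ nor $b'$ contains an edited position.
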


\begin{proof}
  We have $l(b'_1) < l(b)$ and $r(b'_2)>r(b)$ : otherwise, the information contained in $b$ allows to recover enough information to cut $b'$ in at most 2 parts.
  Moreover, $l(b) <l(b_1)$ and $r(b) > r(b_2)$ by definition.
  Therefore, $b_1b_2 \subset b'_1b'_2$.
\end{proof}

As a corollary, iterating the consequence of Lemma~\ref{lem:two-struct} over the blocks of $B$ in order,
we obtain a strictly nested chain for positive instances. This is the purpose of Corollary~\ref{corollary:inclusions}.
\begin{corollary}
  \label{corollary:inclusions}
  If $w$ is a positive instance for a given threshold $p$, then there exists a strict chain of inclusions
  \[
    b_{1,1}b_{1,2} \subset b_{2,1}b_{2,2} \subset \cdots \subset b_{m,1}b_{m,2}.
  \]
  with $m \geq p C(w)$, meaning that a proportion $p$ of $\LZ(w')$’s blocks produce triples, and each contributes one link in
  the chain.
\end{corollary}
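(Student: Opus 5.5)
The plan is to derive Corollary~\ref{corollary:inclusions} directly from Lemma~\ref{lem:two-struct}, applied to the set $B$ of blocks of $\LZ(w')$ that are split into exactly three blocks of $\LZ(w)$.

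\textbf{Step 1: reduce to exactly three blocks.} Blocks split into three or more blocks should in fact be split into exactly three. As observed just before Lemma~\ref{lem:two-struct}, a single modification causes at most one extra cut. A block of $\LZ(w')$ therefore contains at most three starting positions of blocks of $\LZ(w)$ when $k=2$; I would check this via the $k=2$ analogue of Proposition~\ref{prop:interval-bound} with the roles of $w$ and $w'$ exchanged. Hence, on a positive instance, $|B| \ge p\cdot C(w')$.

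\textbf{Step 2: relate $|B|$ to $C(w)$.} The statement phrases the bound as $m \ge pC(w)$, so I need to connect $C(w')$ and $C(w)$. Each block of $B$ contains three block starts of $\LZ(w)$, and distinct blocks of $\LZ(w')$ contain disjoint sets of block starts. It follows that
\[
  C(w) \ge 3|B| \ge 3p\,C(w') ,
\]
so in particular $C(w') \le C(w)$. If instead one measures the proportion $p$ against $C(w)$, as the corollary's wording suggests, then $m = |B| \ge pC(w)$ holds directly. I would adopt this normalization and remark that the two readings differ only by constants.

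\textbf{Step 3: build the chain.} List $B = \{b^{(1)},\dots,b^{(m)}\}$ in left-to-right order in $w'$. Set $b_{t,1}b_{t,2}$ to be the concatenation of the first two $\LZ(w)$-blocks starting inside $b^{(t)}$. Lemma~\ref{lem:two-struct}, applied to each consecutive pair $b^{(t)}, b^{(t+1)}$, gives
\[
  b_{t,1}b_{t,2} \subset b_{t+1,1}b_{t+1,2} .
\]
Chaining these inclusions (strict inclusion is transitive) yields the claimed chain of length $m$.

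\textbf{Main obstacle.} The delicate point is Step 2: matching the normalization of $p$ between the problem definition (blocks of $\LZ(w')$) and the corollary (a proportion $p$ of $C(w)$). I would resolve it by fixing the reading under which $|B| \ge pC(w)$ holds. The chain construction in Step 3 is routine once Lemma~\ref{lem:two-struct} is granted.
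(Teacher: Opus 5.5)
Your proposal is correct and is essentially the paper's argument. The corollary is just Lemma~\ref{lem:two-struct} iterated over the triple-producing blocks of $\LZ(w')$ taken left to right, one link per block. Your Steps~1--2 make explicit two points the paper glosses over, and your normalization is the right one: measured against $C(w')$ one only gets $m\ge pC(w')\ge pC(w)/3$, while the paper later uses $t=\lceil pC(w)\rceil$, i.e.\ your reading.

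There are two small slips, neither of which affects the chain:
\begin{itemize}
\item Proposition~\ref{prop:interval-bound} only yields $3+\|g_{l_j}\|-\|g_{r_j}\|+3n_j$, not $3$. The at-most-three bound of Step~1 should instead be argued directly: in the non-self-referential parse, each of the two edits creates at most one mismatching position inside a copied block, and each mismatch-free stretch contains at most one new start.
\item From $C(w)\ge 3pC(w')$ you get $C(w')\le C(w)$ only when $p\ge 1/3$.
\end{itemize}
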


\paragraph{From structure to candidates.}
Corollary~\ref{corollary:inclusions} says that every positive instance yields a
strictly nested chain of the form
$b_{1,1}b_{1,2} \subset \cdots \subset b_{m,1}b_{m,2}$ with $m \ge p C(w)$.
The algorithmic goal is not to enumerate all such chains (which would be too costly), but to
extract a small set of representative block pairs that necessarily
intersect every long chain. To intersect means that there is a an element included between the first and last element of the chain. Therefore, this element contains enough information (the first element) and does not add any (contained in the last element). Each such representative pinpoints a short region
that contains both edits. Once such a region is known, a set of candidate edit
positions are precisely the occurrences of the region's content within Hamming distance~2 inside the word $w$.
Such approximate matches can be found efficiently using standard near-matching
techniques.

To get a small set of representative, observe that any long chain necessarily passes through many levels of the inclusion
lattice of block pairs. Therefore it is possible to obtain a
hitting set that is small yet guaranteed to intersect every long chain. Such a hitting set is our small set of representative.
This idea is formalized by Lemma~\ref{lem:hitting}.

\paragraph{A thin hitting set of block pairs.}
Let $\mathcal{P}$ be the poset of all concatenations of two consecutive blocks of
$\LZ(w)$, partially ordered by the factor relation "is a factor of".  Augment
$\mathcal{P}$ with a bottom element $\epsilon$, and let $d(x)$ be the distance of
$x\in\mathcal{P}$ from $\epsilon$ in the Hasse diagram.  Set
\[
  t = \lceil p C(w)\rceil,
  \qquad
  \mathcal{R} = \{ x \in \mathcal{P} : d(x) \equiv i_0 \mod t \},
\]
where $i_0$ is a residue class minimizing the number of nodes at that residue.

\begin{lemma}[Thin hitting set]\label{lem:hitting}
  Every strictly nested chain in $\mathcal{P}$ of length at least $t$ contains an
  element of $\mathcal{R}$. Moreover, $|\mathcal{R}|\le \lceil C(w)/t\rceil \le \lceil 1/p\rceil$.
\end{lemma}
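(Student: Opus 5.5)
We prove the two assertions of Lemma~\ref{lem:hitting} separately. The cardinality bound is a pigeonhole argument. The hitting property follows from how the level function $d$ behaves along a strictly nested chain.

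\emph{Cardinality.} The elements of $\mathcal{P}$ are concatenations of two consecutive blocks of $\LZ(w)$, so $|\mathcal{P}|\le C(w)-1<C(w)$ once the bottom element $\epsilon$ is removed. The sets $\{x\in\mathcal{P} : d(x)\equiv i \bmod t\}$ for $i=0,\dots,t-1$ partition $\mathcal{P}$ into $t$ classes. The class of the minimizing residue $i_0$ therefore has at most $\lceil |\mathcal{P}|/t\rceil\le\lceil C(w)/t\rceil$ elements. Since $t=\lceil pC(w)\rceil\ge pC(w)$, we get $C(w)/t\le 1/p$, and hence $|\mathcal{R}|\le\lceil 1/p\rceil$.

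\emph{Hitting property.} I will take $d(x)$ to be the level of $x$ in the Hasse diagram, that is, the number of cover steps on a chain from $\epsilon$ up to $x$. The argument has two steps.
\begin{enumerate}
\item \textbf{Strict monotonicity.} If $x\subset y$ strictly, then $d(y)>d(x)$, because any chain from $\epsilon$ to $x$ extends through $y$.
\item \textbf{Unit steps.} Along a strictly nested chain whose consecutive elements are cover relations in $\mathcal{P}$, we have $d(y)=d(x)+1$ at each step.
\end{enumerate}
Given these, a chain $x_1\subset x_2\subset\cdots\subset x_t$ of length at least $t$ takes $t$ consecutive values $d(x_1),d(x_1)+1,\dots,d(x_1)+t-1$. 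These values run through every residue class modulo $t$, in particular through $i_0$. So some $x_j$ lies in $\mathcal{R}$, which is exactly the claim that the chain itself contains an element of $\mathcal{R}$.

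To guarantee that the chains in question are saturated, I will refine the chain of Corollary~\ref{corollary:inclusions} by inserting intermediate elements of $\mathcal{P}$ between consecutive members whenever they are not covers. Refining only lengthens the chain, so the length hypothesis is preserved.

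\emph{Main obstacle.} The delicate point is the unit-step claim. The factor order on block pairs need not be graded, so a cover $x\lessdot y$ could in principle have $d(y)>d(x)+1$ when $y$ has a longer chain below it through other elements. In that case the chain could jump over the residue $i_0$.

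My first attempt is to fix the definition of $d$ so that it is constant along saturated chains built from a common bottom. Concretely, I would measure $d$ along the chain itself as the number of preceding elements in a maximal chain from $\epsilon$, and check that the chains arising from Corollary~\ref{corollary:inclusions} can be chosen as initial segments of such maximal chains. If that fails, the fallback is to choose $d$ as the height function and use strict monotonicity only. A window of $t$ strictly increasing levels then meets each residue class at most once, so the argument would need that the $t$ levels are contiguous; I would try to obtain contiguity from the saturation of the refined chain together with the fact that every element of $\mathcal{P}$ between $x_1$ and $x_t$ can be inserted into the chain.
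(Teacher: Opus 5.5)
Your cardinality argument is correct and matches the paper's pigeonhole step. For the hitting property you follow the same route as the paper's proof idea. However, the step you flag as the main obstacle is a genuine gap, and neither repair you sketch closes it.

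Take $d$ to be the height function, which is the only reading under which your strict-monotonicity step holds. Then ``$d(y)=d(x)+1$ for every cover $x\lessdot y$'' is equivalent to $\mathcal{P}\cup\{\epsilon\}$ being graded. Nothing in your argument uses a property of concatenations of consecutive \LZ blocks that would force this, and the factor order on a finite set of strings is not graded in general. For example, in $\{\epsilon,\mathtt{ab},\mathtt{c},\mathtt{cd},\mathtt{abcd}\}$ we have both $\mathtt{ab}\lessdot\mathtt{abcd}$ and $\epsilon\lessdot\mathtt{c}\lessdot\mathtt{cd}\lessdot\mathtt{abcd}$. So the height jumps from $1$ to $3$ along a cover. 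Under the paper's literal ``distance from $\epsilon$ in the Hasse diagram'', $\mathtt{cd}$ and $\mathtt{abcd}$ both sit at distance $2$, so even your step~1 fails under that reading.

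Your first repair does not give a function on $\mathcal{P}$. ``The position of $x$ in a maximal chain from $\epsilon$'' is well defined only if all saturated chains from $\epsilon$ to $x$ have the same length, which is gradedness again. Meanwhile $\mathcal{R}$ must be a single set, built before the witness chain is known. In the example, $\mathtt{abcd}$ is at position $2$ in one maximal chain and at position $3$ in another.

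Your fallback fails because saturation only says that nothing lies strictly between $x_j$ and $x_{j+1}$. The height of $x_{j+1}$ is governed by elements below it that are incomparable to $x_j$. A saturated chain can therefore gain, say, $2$ in height at every step. For even $t$ it then never meets the residue classes of the other parity, so whether it meets $\mathcal{R}$ depends on which class happens to be sparsest.

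What is missing is either a proof that the specific posets coming from \LZ block pairs are graded, or a different construction of $\mathcal{R}$. The paper's sketch simply asserts the unit-step property, so it cannot be used to fill this hole.

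A second, smaller issue: refining the chain changes the conclusion. The element of $\mathcal{R}$ you find may be one of the inserted elements. You therefore only obtain an element of $\mathcal{R}$ lying between the first and last members of the original chain, not a member of that chain. This contradicts your sentence ``exactly the claim that the chain itself contains an element of $\mathcal{R}$''.

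The literal statement is in fact false even when $\mathcal{P}$ is graded. Take a saturated chain with $N\ge 2t$ elements, where $t\ge 2$. It has at most $\lceil N/t\rceil$ elements in $\mathcal{R}$, and the remaining ones form a strictly nested chain of length at least $t$ that avoids $\mathcal{R}$. The sandwiched version is what the paper actually uses (``to intersect means that there is an element included between the first and last element of the chain''), so you should state and prove that version explicitly.
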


\begin{proof}[Proof idea]
  Along any chain, the level $d(\cdot)$ increases by $1$ at each cover step; hence a
  chain of length $\ge t$ realizes all residues modulo $t$.  Choosing the sparsest
  residue class yields the size bound by the pigeonhole principle.
\end{proof}

\paragraph{From representatives to edits.}
For $x=b_1b_2\in\mathcal{R}$ we call \emph{content} the corresponding substring of $w$.
Gathering all the previous results, we can conclude that, for any positive instance, there is an element in $x\in\mathcal{R}$ that relates to the modifications. We only have to search for all the possible occurences at distance 2 of the contents of the elements of $x\in\mathcal{R}$ and one of them will lead to the aimed compression improvement.

\begin{lemma}[Witness extraction]\label{lem:witness}
  If $w$ is a positive instance at threshold $p$, then there exists $x\in\mathcal{R}$
  whose content has an occurrence in $w$ at Hamming distance~2.  The two mismatch
  positions form a valid pair of edits that yields a word $w'$ meeting the threshold.
\end{lemma}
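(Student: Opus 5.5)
The plan is to chain Corollary~\ref{corollary:inclusions} and Lemma~\ref{lem:hitting}, and then read off the two edits from the representative that the chain is forced to hit.

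\textbf{Step 1: the nested chain.} Fix a positive instance and a witness word $w'$ at Hamming distance~2 from $w$ that meets the threshold $p$. By Corollary~\ref{corollary:inclusions}, the blocks of $\LZ(w')$ split into three blocks of $\LZ(w)$ give a strictly nested chain
\[
b_{1,1}b_{1,2} \subset \cdots \subset b_{m,1}b_{m,2}, \qquad m \ge pC(w) .
\]
Each link of this chain is an element of $\mathcal{P}$. By the observation preceding Lemma~\ref{lem:two-struct}, each link also lies inside a block of $\LZ(w')$ whose reference covers both modified positions. I would first make sure the chain is a chain in the factor order of $\mathcal{P}$. If needed, I would refine it to a saturated chain by inserting intermediate elements of $\mathcal{P}$. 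This only lengthens it, so its length stays at least $t=\lceil pC(w)\rceil$.

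\textbf{Step 2: hitting the chain.} Lemma~\ref{lem:hitting} now yields some $x=b_1b_2\in\mathcal{R}$ lying between the first and last elements of the chain. Following the intuition stated before Lemma~\ref{lem:hitting}:
\begin{itemize}
\item since $x$ contains the first element $b_{1,1}b_{1,2}$, its occurrence already spans the region where both edits act;
\item since $x$ is a factor of the last element, it contributes no information beyond what the chain carries.
\end{itemize}
Concretely, I want to show the following. The occurrence of $x$'s content in $w'$ that is referenced by the outermost split block, mapped back to $w$, is an occurrence of that content in $w$ with exactly two mismatches, located at the two edited positions. The mismatch count cannot be $0$ or $1$. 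With $0$ or $1$ mismatches, the innermost block could be cut into at most two parts, which contradicts its being split into three. This is the same argument as the one used for Lemma~\ref{lem:two-struct}.

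\textbf{Step 3: the converse direction.} It remains to show that \emph{applying} these two mismatch positions as edits produces a word meeting the threshold. I would argue that the two mismatch positions coincide with the two edits of the witness $w'$, up to the choice of substituted letters. By the remark after Definition~\ref{def:jump-function}, choosing fresh letters is the worst case for the edited parse, so the resulting word is at least as good as $w'$ with respect to the three-way splits.

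\textbf{Main obstacle.} I expect Step 3 to be the hard part. The content of $x$ may have several Hamming-distance-2 occurrences in $w$, and only some of them correspond to the true edits. So the lemma should really be read as: \emph{some} occurrence yields valid edits. The algorithm then verifies every candidate occurrence.

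I would make this rigorous by exhibiting the specific occurrence induced by the witness $w'$: namely, the position in $w$ that the outermost split block references. For that occurrence, its mismatches with the content of $x$ are exactly the two edited positions of $w'$. This identification would rest on Lemma~\ref{lem:two-struct}, which pins both edits strictly inside the referenced interval.
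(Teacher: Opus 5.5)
Your route is the paper's: build the chain from Corollary~\ref{corollary:inclusions}, hit it with $\mathcal{R}$, and read the edits off an approximate occurrence of the content of $x$. You are also right that the lemma can only mean that \emph{some} occurrence yields valid edits. The gap is in Step~3, the step you flagged as the main obstacle: you close it with an invalid argument. The remark after Definition~\ref{def:jump-function} only says that fresh symbols maximize $\CLZ{\cdot}$ of an edited word. It does not compare the word the algorithm builds with the specific witness $w'$. It also says nothing about how many blocks of the new parse contain three starts of $\LZ(w)$, which is what the threshold counts. The missing observation is that the substituted letters are not free. A candidate edit overwrites $w$ with the content of $x$ at the two mismatch positions. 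Suppose the occurrence sits on a range $S$ with $w'[S]$ equal to the content of $x$ and $S\ni e_1,e_2$. Then these two substitutions turn $w$ into $w'$ itself, which meets the threshold by assumption. This is what the paper's last sentence (``applying these two edits reconstructs the local configuration'') amounts to.

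Step~2 does not hold up as written either. Suppose an occurrence of $x$ inside the outermost block's source contains at most one edit. That gives an occurrence in $w$ of the innermost pair with at most one mismatch. This contradicts the three-way split of the innermost block only if the occurrence lies \emph{before} that block in $w$, and nothing you have shown forces this. The outermost source may extend past the innermost block, whose own target already contains the innermost pair edit-free. Saturating the chain makes this worse, since the hit element may then be an inserted element of $\mathcal{P}$ attached to no block of $B$.

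Instead, drop the saturation and apply Lemma~\ref{lem:hitting} as stated to the original chain. Then $x=b_1b_2$ is the leading pair of some $b\in B$ whose source range $S_b$ contains $e_1$ and $e_2$. Three starts of $\LZ(w)$ lie in $b$. So $b_1$ must start at or before the position of $b$ copied from $e_1$, and $b_2$ must start, hence end, at or after the position copied from $e_2$. Otherwise the longest-match rule leaves at most two starts in $b$. Therefore the part of $S_b$ copied into $b_1b_2$ contains both edits. The target of $b$ is edit-free, so $w'$ on that part equals the content of $x$, and $w$ differs from it exactly at $e_1,e_2$. This yields both the Hamming-distance-$2$ occurrence and the reconstruction of $w'$ described above.
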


\begin{proof}[Proof idea]
  In a positive instance, Lemma~\ref{lem:two-struct} implies that there is chain of size at least $\lfloor p C(w)\rfloor$ corresponding to the two edits. By Lemma~\ref{lem:hitting} the chain intersects $\mathcal{R}$ at some $x$ that
  contains both edits and does not add constraints.  Viewing $x$ as a substring of the unedited source, the two
  edits are the only discrepancies with $w$, hence an occurrence at Hamming distance~2.
  Applying these two edits reconstructs the local configuration that forces the triple splits.
\end{proof}

\paragraph{Algorithm for $k=2$.}
\begin{enumerate}
  \item Compute $\LZ(w)$ and its block decomposition. \hfill $O(|w|)$
  \item Build the Hasse diagram of $\mathcal{P}$ (pairs of consecutive blocks) under
    the factor order, with bottom element $\epsilon$, and record levels $d(\cdot)$. \hfill $O(|w|^2)$
  \item Let $t=\lceil p\,C(w)\rceil$; compute the counts of nodes at each
    residue class modulo $t$, and choose $i_0$ minimizing the count. \hfill $O(C(w))$
  \item Form the representative set $\mathcal{R}=\{x\in\mathcal{P}: d(x)\equiv i_0 \pmod t\}$. \hfill $O(C(w))$
  \item For each $x\in\mathcal{R}$, enumerate all occurrences in $w$ within Hamming
    distance~2 from the content of $x$, producing candidate edit pairs. \hfill $O(\frac{|w|}{p})$
  \item For each candidate pair, apply the two edits to obtain $w'$, recompute $\LZ(w')$,
    and test whether the proportion of triples is at least $p$. \hfill $O(\frac{|w|^2}{p})$
\end{enumerate}

\paragraph{Correctness.}
By Corollary~\ref{corollary:inclusions} and Lemma~\ref{lem:hitting}, every positive
instance yields a long chain that intersects $\mathcal{R}$.  Lemma~\ref{lem:witness}
then guarantees that step~5 finds a corresponding approximate occurrence whose two
mismatches are the desired edits, and step~6 certifies the instance.  Conversely,
any accepted pair of edits explicitly witnesses a word $w'$ meeting the threshold.

\paragraph{Running time.}
Steps~1--5 take $O(|w|^2)$ time in total; step~6 dominates with
$O(|w|^2/p)$, since $|\mathcal{R}|\le \lceil 1/p\rceil$, each representative yields
$O(|w|)$ candidate occurrences, and each verification costs $O(|w|)$.
Thus the overall running time is
\[
  O\!\left(\frac{|w|^2}{p}\right),
\]
as wanted.

\subsection{Generalization to $k$ modifications}
\label{subsec:k-general}

We now extend the algorithm to an arbitrary number $k\ge 2$ of modifications.
Two main difficulties arise:
\begin{enumerate}
  \item multiple edits may generate several overlapping chains of triple-producing blocks, and
  \item how to keep the exponent in $n$ controlled as $k$ grows.
\end{enumerate}

In this more general setting, our goal is not to solve \KMODIFS\ exactly but to
design an $\varepsilon$-approximation algorithm for \KMODIFS, i.e., an algorithm
that solves the following gap decision version.

\paragraph{Gap decision problem for general $k$.}
Given a word $w$, an integer $k\ge 1$, a threshold $p\in[0,1]$, and an accuracy
$\varepsilon\in(0,1)$, decide:

\begin{itemize}
  \item[\textsc{YES:}] if there exists a word $w'$ at Hamming distance $k$ from $w$
    such that at least a proportion $p$ of the blocks in $\LZ(w')$ contain three or more
    starting positions of blocks from $\LZ(w)$, then \emph{output} a word $\tilde w$
    at distance $k$ with at least $(p-\varepsilon)C(w)$ such blocks, and answer \textsc{YES}.
  \item[\textsc{NO:}] if every word $w''$ at distance $k$ from $w$ has fewer than
    $(p-\varepsilon)C(w)$ such blocks, answer \textsc{NO}.
\end{itemize}

\paragraph{Chains}
Index the edits by $e_1<\cdots<e_k$ and associate every triple-producing
block of $\LZ(w')$ to the pair $(e_a,e_b)$ given by its first/last edited
position. For each fixed $(a,b)$ the corresponding blocks form a strictly
nested chain of concatenations of two consecutive $\LZ(w)$-blocks
(as in Lemma~\ref{lem:two-struct}), totally ordered by inclusion in the
poset $\mathcal{P}$.

Since we work with the gap version of the problem,
we must output $k$ edits achieving at least $(p-\varepsilon)C(w)$ triples
whenever $p\,C(w)$ triples are attainable. Let $T \ge p C(w)$ be the total number of
triple-producing blocks in a YES instance. We have that:
\[
  T = \sum_{(a,b)\in \binom{[k]}{2}} L_{a,b},
  \qquad
  \text{where $L_{a,b}$ denotes the length of the chain induced by $(a,b)$.}
\]

If we ignore all chains of length $<L$, the number of triples we might lose
is at most $\binom{k}{2} L$. Hence, choosing
\[
  L = \Bigl\lceil \frac{\varepsilon}{\binom{k}{2}} C(w) \Bigr\rceil
\]
ensures that even after discarding all chains shorter than $L$ we still
retain at least $(p-\varepsilon)C(w)$ triples. Therefore it suffices to hit every chain of length $\ge L$.

\paragraph{A thin hitting set at granularity $L$.}
Let $d(\cdot)$ be the level function in the Hasse diagram of $\mathcal{P}$.
For a modulus $L$, pick the sparsest residue class
\[
  \mathcal{R} = \{ x\in\mathcal{P} : d(x)\equiv i_0 \!\!\pmod{L}\},
  \qquad
  |\mathcal{R}|\le \frac{C(w)}{L} \le \frac{\binom{k}{2}}{\varepsilon}.
\]
As in the case $k=2$, $\mathcal{R}$ hits all chains we need to retain, i.e., those of length $\ge L$.

\paragraph{Enumerating local witnesses.}
For each $x\in\mathcal{R}$ and each $h\in\{2,\dots,k\}$, enumerate all
occurrences of the content of $x$ in $w$ within Hamming distance $h$,
recording the $h$ mismatch positions. This can be done in $O(|w| k)$ time per
representative using standard $k$-approximate pattern matching.
Each such occurrence yields a set of $h$ mismatch positions, which we use as a
candidate local edit set; these will be combined across representatives
to form full $k$-edit candidates.

\paragraph{Combining few chains suffices.}
We build greedily a subset of the chains that covers all indices and has size at most $\lceil 2k/3\rceil$ as follows: sweep the indices $1,\dots,k$ left to right and, at the leftmost uncovered index $i$, pick a chain interval $(a,b)$ with $a\le i\le b$ and maximal $b$.
We show that every two consecutive greedy picks cover at least three new indices,
by a case analysis on the first pick:
\smallskip

\noindent\textbf{Case 1: $b\ge i+1$.}
The next leftmost uncovered index is at least $i+2$. So together the two picks cover at least
the new indices $\{i,i+1,i+2\}$.

\noindent\textbf{Case 2: $b=i$.}
By maximality of the right endpoint, no interval covering $i$ reaches $i+1$.
Since $i+1$ must be covered by some interval of length $\ge 2$, the next pick at $i+1$
necessarily extends to at least $i+2$. Hence the two picks cover again at least the new indices $\{i,i+1,i+2\}$.

It is therefore enough to choose $s=\lceil 2k/3\rceil$ representatives from $\mathcal{R}$ and assign mismatch budgets $h_1,\dots,h_s\in\{2,\dots,k\}$ with $\sum_i h_i=k$, selecting one stored occurrence at distance $h_i$ for each. The union of mismatch sets gives a candidate $k$-subset of edit positions. Verifying a candidate takes $O(|w|)$ time by recomputing $\LZ(w')$ and counting triples.

Putting everything together, we obtain the following approximation algorithm.

\paragraph{Algorithm for general $k$.}
\begin{enumerate}
  \item Compute $\LZ(w)$ and its block decomposition. \hfill $O(|w|)$
  \item Build the Hasse diagram of $\mathcal{P}$ (pairs of consecutive $\LZ(w)$-blocks) under the factor order; record levels $d(\cdot)$. \hfill $O(|w|^2)$
  \item Let $L=\bigl\lceil \frac{\varepsilon}{\binom{k}{2}}\,C(w)\bigr\rceil$ and choose the sparsest residue class $i_0$ of $d(\cdot)\bmod L$. \hfill $O(C(w))$
  \item Set $\mathcal{R}=\{x\in\mathcal{P}: d(x)\equiv i_0 \!\!\pmod L\}$ (so $|\mathcal{R}|\le C(w)/L\le \binom{k}{2}/\varepsilon$). \hfill $O(C(w))$
  \item For each $x\in\mathcal{R}$ and each $h\in\{2,\dots,k\}$, enumerate all
    occurrences of the content of $x$ in $w$ within Hamming distance $h$,
    storing the $h$ mismatch positions. \hfill
    $O\bigl(|\mathcal{R}| \cdot |w| \cdot k\bigr)
    = O\bigl(|w|\cdot k^3 / \varepsilon\bigr)$
  \item Let $s=\lceil 2k/3\rceil$. For each subset $S\subseteq\mathcal{R}$ of size $s$ and each vector $(h_x)_{x\in S}$ with $h_x\in\{2,\dots,k\}$ and $\sum_{x\in S}h_x=k$, pick one stored occurrence at distance $h_x$ for each $x\in S$; take the union of their mismatch sets as a candidate. \hfill $O(|w|^{\lceil 2k/3\rceil}f(k,\varepsilon))$
  \item For each candidate, apply the $k$ edits to obtain $w'$, recompute $\LZ(w')$, and check whether at least $(p-\varepsilon)C(w)$ blocks of $\LZ(w')$ contain three or more starts of blocks from $\LZ(w)$. \hfill $O(|w|)$ per candidate
\end{enumerate}

\paragraph{Complexity.}
The running time is dominated by the cost of recomputing $\LZ(w')$
for every candidate. Since there are
$O\!\left(|w|^{\lceil 2k/3\rceil} f(k,\varepsilon)\right)$
candidates and each verification takes $O(|w|)$ time, the total time is
\[
  O\!\left(
    |w|^{\lceil 2k/3\rceil+1} f(k,\varepsilon)
  \right),
  \qquad
  \text{where } f(k,\varepsilon)\le\!\left(\tfrac{k^{3}}{\varepsilon}\right)^{2k/3}.
\]

\section{Future Work}

A word on what comes next. As already mentionned in the introduction, it remains open to close the gap in the second regime of the trichotomy (Corollary~\ref{cor:trichotomy}), where the lower bound is off by a multiplicative factor of~$k$.
While we suspect the upper bound to be close to optimal, establishing this formally would require additional work and would allow to complete the picture.

The main challenge though, to our mind, is to determine the exact complexity of the \KMODIFS problem.
In particular, it remains unknown whether the problem is \textsf{W[1]}-hard, which would rule out fixed-parameter tractable algorithms.

Finally, extending these questions to other compression schemes would provide a deeper theoretical understanding of standard data compression algorithms and determine whether sensitivity can be leveraged for improved compression. In particular, the potential gains in compression size could be especially significant for algorithms with less continuous behavior such as those exhibiting one-bit catastrophes, like LZ78, where flipping a single bit can turn an incompressible word into a compressible one.

\bibliography{biblio}

@article{ZL77,
  title     = {A universal algorithm for sequential data compression},
  author    = {Ziv, Jacob and Lempel, Abraham},
  journal   = {IEEE Transactions on information theory},
  volume    = {23},
  number    = {3},
  pages     = {337--343},
  year      = {1977},
  publisher = {IEEE}
}

@inproceedings{Lopez05,
  title        = {Dimension is compression},
  author       = {L{\'o}pez-Vald{\'e}s, Mar{\'\i}a and Mayordomo, Elvira},
  booktitle    = {International Symposium on Mathematical Foundations of Computer Science},
  pages        = {676--685},
  year         = {2005},
  organization = {Springer}
}

@inproceedings{Lopez06,
  title        = {Lempel-ziv dimension for lempel-ziv compression},
  author       = {Lopez-Valdes, Maria},
  booktitle    = {International Symposium on Mathematical Foundations of Computer Science},
  pages        = {693--703},
  year         = {2006},
  organization = {Springer}
}

@article{Lutz03,
  title     = {Dimension in complexity classes},
  author    = {Lutz, Jack H},
  journal   = {SIAM Journal on Computing},
  volume    = {32},
  number    = {5},
  pages     = {1236--1259},
  year      = {2003},
  publisher = {SIAM}
}

@article{LM16,
  title   = {Computing absolutely normal numbers in nearly linear time},
  author  = {Lutz, Jack H and Mayordomo, Elvira},
  journal = {arXiv preprint arXiv:1611.05911},
  year    = {2016}
}

@article{ZL78,
  title     = {Compression of individual sequences via variable-rate coding},
  author    = {Ziv, Jacob and Lempel, Abraham},
  journal   = {IEEE transactions on Information Theory},
  volume    = {24},
  number    = {5},
  pages     = {530--536},
  year      = {1977},
  publisher = {IEEE}
}

@inproceedings{LathropS97,
  title        = {A universal upper bound on the performance of the Lempel-Ziv algorithm on maliciously-constructed data},
  author       = {Lathrop, James I and Strauss, Martin},
  booktitle    = {Proceedings. Compression and Complexity of SEQUENCES 1997 (Cat. No. 97TB100171)},
  pages        = {123--135},
  year         = {1997},
  organization = {IEEE}
}

@incollection{PierceS2000,
  author    = {Larry A. Pierce II and Paul C. Shields},
  title     = {Sequences Incompressible by SLZ (LZW), Yet Fully Compressible by ULZ},
  booktitle = {Numbers, Information and Complexity},
  pages     = {385--390},
  address   = {Boston},
  year      = {2000}
}

@inproceedings{LagardeP18,
  author    = {Guillaume Lagarde and
               Sylvain Perifel},
  editor    = {Artur Czumaj},
  title     = {Lempel-Ziv: a "one-bit catastrophe" but not a tragedy},
  booktitle = {Proceedings of the Twenty-Ninth Annual {ACM-SIAM} Symposium on Discrete
               Algorithms, {SODA} 2018, New Orleans, LA, USA, January 7-10, 2018},
  pages     = {1478--1495},
  publisher = {{SIAM}},
  year      = {2018},
  url       = {https://doi.org/10.1137/1.9781611975031.97},
  doi       = {10.1137/1.9781611975031.97},
  bibsource = {dblp computer science bibliography, https://dblp.org}
}

@article{nakashima24,
  title   = {Edit and Alphabet-Ordering Sensitivity of Lex-Parse},
  author  = {Nakashima, Yuto and K{\"o}ppl, Dominik and Funakoshi, Mitsuru and Inenaga, Shunsuke and Bannai, Hideo},
  journal = {arXiv preprint arXiv:2402.19223},
  year    = {2024}
}

@article{AkagiFI23,
  author    = {Tooru Akagi and
               Mitsuru Funakoshi and
               Shunsuke Inenaga},
  title     = {Sensitivity of string compressors and repetitiveness measures},
  journal   = {Inf. Comput.},
  volume    = {291},
  pages     = {104999},
  year      = {2023},
  url       = {https://doi.org/10.1016/j.ic.2022.104999},
  doi       = {10.1016/J.IC.2022.104999},
  bibsource = {dblp computer science bibliography, https://dblp.org}
}

@article{Giuliani2025,
  author    = {Sara Giuliani and
               Shunsuke Inenaga and
               Zsuzsanna Lipt{\'{a}}k and
               Giuseppe Romana and
               Marinella Sciortino and
               Cristian Urbina},
  title     = {Bit Catastrophes for the Burrows-Wheeler Transform},
  journal   = {Theory Comput. Syst.},
  volume    = {69},
  number    = {2},
  pages     = {19},
  year      = {2025},
  url       = {https://doi.org/10.1007/s00224-024-10212-9},
  doi       = {10.1007/S00224-024-10212-9},
  bibsource = {dblp computer science bibliography, https://dblp.org}
}

@article{Rytter03,
  author    = {Wojciech Rytter},
  title     = {Application of Lempel-Ziv factorization to the approximation of grammar-based
               compression},
  journal   = {Theor. Comput. Sci.},
  volume    = {302},
  number    = {1-3},
  pages     = {211--222},
  year      = {2003},
  url       = {https://doi.org/10.1016/S0304-3975(02)00777-6},
  doi       = {10.1016/S0304-3975(02)00777-6},
  bibsource = {dblp computer science bibliography, https://dblp.org}
}

@inproceedings{MatiasRS99,
  title     = {On the optimality of parsing in dynamic dictionary based data compression},
  author    = {Matias, Yossi and {\c{S}}ahinalp, S{\"u}leyman Cenk},
  booktitle = {Proceedings of the tenth annual ACM-SIAM symposium on Discrete algorithms},
  pages     = {943--944},
  year      = {1999}
}

@article{MatiasWAE,
  title   = {Implementation and experimental evaluation of flexible parsing for dynamic dictionary based data compression},
  author  = {Matias, Yossi and Rajpoot, Nasir Mahmood and Sahinalp, Suleyman Cenk},
  journal = {Proceedings WAE’98},
  year    = {2008}
}

@article{AronicaLMMM18,
  title     = {On optimal parsing for LZ78-like compressors},
  author    = {Aronica, Salvatore and Langiu, Alessio and Marzi, Francesca and Mazzola, Salvatore and Mignosi, Filippo},
  journal   = {Theoretical Computer Science},
  volume    = {710},
  pages     = {19--28},
  year      = {2018},
  publisher = {Elsevier}
}

@phdthesis{OnOptimalParsing,
  title  = {Optimal Parsing for dictionary text compression},
  author = {Langiu, Alessio},
  year   = {2012},
  school = {Universit{\'e} Paris-Est; Universit{\`a} degli studi (Palerme, Italie)}
}

\end{document}